\documentclass{article}
\usepackage[utf8]{inputenc}

\title{Searching for Regularity in Bounded Functions}
 \author{
   Siddharth Iyer\thanks{Research supported by NSF grant CCF-2131899}\\
   \texttt{siyer@cs.washington.edu}
   \and
   Michael Whitmeyer\thanks{Research supported by NSF grant CCF-2006359.}\\
  \texttt{mdwhit@cs.washington.edu}
}


\usepackage{amssymb,amsmath,amsthm, bbm, mathdots}
\usepackage{color, xcolor}
\usepackage{graphicx}
\usepackage{setspace}
\usepackage{xspace}
\usepackage{multirow}
\usepackage{array}
\usepackage{complexity}

\usepackage{dirtytalk}
\usepackage{tikz}
\usepackage{hyperref}
\usepackage{thmtools}
\usepackage{thm-restate}
\usepackage[capitalize,nameinlink]{cleveref}

\usepackage[shortlabels]{enumitem}

\hypersetup{colorlinks=true,urlcolor=blue,linkcolor=magenta,citecolor=[rgb]{.42,.56,.14},}

\theoremstyle{plain}
\newtheorem{theorem}{Theorem}[section]
\newtheorem{corollary}[theorem]{Corollary}
\newtheorem{proposition}[theorem]{Proposition}
\newtheorem{lemma}[theorem]{Lemma}
\newtheorem{claim}[theorem]{Claim}
\newtheorem{fact}[theorem]{Fact}

\newtheorem{direction}[theorem]{Direction}

\newtheorem{definition}[theorem]{Definition}

\theoremstyle{remark}
\newtheorem{remark}[theorem]{Remark}
\theoremstyle{plain}
%
%
\newclass{\DNF}{DNF}
\newclass{\DNFs}{DNFs}
\newclass{\ACzero}{AC^0}
\newclass{\TCzero}{TC^0}

\renewcommand{\R}{\mathbb{R}} 
\newcommand{\F}{\mathbb{F}} 

\renewcommand{\Pr}{\mathop{\bf Pr\/}}
\renewcommand{\E}{\mathop{\bf E\/}}

\newcommand{\abs}[1]{\left|#1\right|}

\newcommand{\inner}[1]{\langle{#1}\rangle}
\newcommand{\bigabs}[1]{\Bigl \lvert #1 \Bigr \rvert}
\newcommand{\bigbracket}[1]{\Bigl [ #1 \Bigr ]}
\newcommand{\bigparen}[1]{\left ( #1 \right )}
\newcommand{\ceil}[1]{\lceil #1 \rceil}

\newcommand{\norm}[1]{\| #1 \| }

\newcommand{\bigcurly}[1]{\left \{ #1 \right \}}

\newcommand{\bigvert}{\left | \right .}


\newfunc{\MAJ}{MAJ}
\newfunc{\MUX}{MUX}
\newfunc{\NAE}{NAE}
\newfunc{\OR}{OR} 
\newfunc{\AND}{AND}
\newfunc{\Tribes}{Tribes}
\newfunc{\LocalCorrect}{LocalCorrect}

\newfunc{\sgn}{sgn} 

\newfunc{\spar}{sparsity}
\newfunc{\rank}{rank}
\newfunc{\spn}{span}
\newfunc{\quasipoly}{quasipoly}
\newfunc{\Bias}{Bias}

\newfunc{\DT}{DTdepth} 
\newfunc{\DTs}{DTsize} 

\newcommand{\eps}{\varepsilon}

\renewcommand{\hat}{\widehat}
\renewcommand{\tilde}{\widetilde}

\newcommand{\pmone}{\{\pm 1\}}

\newcommand{\calA}{\mathcal{A}}
\newcommand{\calB}{\mathcal{B}}

\newcommand{\calS}{\mathcal{S}}
\newcommand{\calT}{\mathcal{T}}
\newcommand{\calU}{\mathcal{U}}
\newcommand{\calV}{\mathcal{V}}
\newcommand{\calW}{\mathcal{W}}

\newcommand{\bfz}{\mathbf{z}}



\newfunc{\Parity}{PARITY}





\newfunc{\Dict}{Dict}
\newfunc{\Corr}{Corr}
\newfunc{\avg}{avg}
\newfunc{\smooth}{smooth}
\newfunc{\dist}{dist}


\newclass{\ETH}{ETH}


\renewcommand{\epsilon}{\varepsilon}

\newcommand{\indicator}{\mathbbm{1}}

\newcommand{\paritykill}[1]{C_{\min}^{\oplus} [#1]}

\newcommand{\codim}{\mathrm{codim}}

\newcommand{\Restricted}[3]{{#1}_{#2 \Arrow #3}}

\newcommand{\Arrow}{%
\parbox{0.2cm}{\tikz{\draw[<-](0,0)--(0.2cm,0);}}
}

\newcommand{\tower}{\mathrm{twr}}
\newcommand{\regnum}{\mathsf{r}}


\newcommand{\uniform}{\mathsf{Unif}}

\newcommand{\transpose}{\mathsf{T}}

\usepackage[margin=1in]{geometry}

\usepackage[style=alphabetic, maxbibnames=5]{biblatex}
\addbibresource{refs.bib}

\begin{document}

\maketitle

\begin{abstract}

Given a function $f$ on $\F_2^n$, we study the following problem. What is the largest affine subspace $\calU$ such that when restricted to $\calU$, all the non-trivial Fourier coefficients of $f$ are very small? 

For the natural class of bounded Fourier degree $d$ functions $f:\F_2^n \to [-1,1]$, we show that there exists an affine subspace of dimension at least $ \tilde\Omega(n^{1/d!}k^{-2})$, wherein all of $f$'s nontrivial Fourier coefficients become smaller than $ 2^{-k}$. 
To complement this result, we show the existence of degree $d$ functions with coefficients larger than $2^{-d\log n}$ when restricted to any affine subspace of dimension larger than $\Omega(dn^{1/(d-1)})$. In addition, we give explicit examples of functions with analogous but weaker properties.

Along the way, we provide multiple characterizations of the Fourier coefficients of functions restricted to subspaces of $\F_2^n$ that may be useful in other contexts. Finally, we highlight applications and connections of our results to parity kill number and affine dispersers.
\end{abstract}

\section{Introduction}
\label{section:intro}

The search for structure within large objects is an old one that lies at the heart of Ramsey theory. 
For example, a famous corollary of Ramsey's theorem is that any graph on $n$ vertices must contain a clique or an independent set of size $\Omega(\log n)$. 
Another example is Roth's\footnote{The related Hales-Jewett theorem \cite{hales-jewett} is also a classic result in Ramsey theory.} theorem~\cite{roth} on $3$-term arithmetic progressions, which essentially says that \emph{every} subset of $\{1,\ldots,n\}$ of density $\delta > \Omega(1/\log \log n)$ must contain a $3$-term arithmetic progression.\footnote{See also the recent quantitative improvement due to Kelley and Meka \cite{kelley2023strong} which gives the same result for all subsets of density at least $\Omega(2^{-\log^{1/11}(n)})$}

Szemer\'edi's Regularity Lemma is also a well known example of this phenomenon. Roughly speaking, it states that any graph $G$ can be partitioned into $k := M(\delta)$ parts $V_1 ,\ldots,V_{k}$, wherein most pairs of parts $(V_i, V_j)$ are $\delta$-regular. In this setting, the $\delta$-regularity of $(V_i, V_j)$ roughly corresponds to saying that the bipartite graph induced across $V_i$ and $V_j$ appears as though its edges were sampled randomly. This powerful statement has found applications in both pure mathematics (e.g., Szemer\'edi's~\cite{szemeredi1975regular} generalization of Roth's result to $k$-term arithmetic progressions) and theoretical computer science (to test triangle-freeness in dense graphs~\cite{ruzsa-szemeredi-triangles, triangle-freeness-dense, shapira2006graph}).

Similar to the definition of regular partitions in Szemer\'edi's Regularity Lemma, one can also define a notion of regularity for functions. 
In particular, for functions $f : \F_2^n\to\R$, we follow Green \cite{green-regularity} and O'Donnell \cite{AOBFbook} and define a function to be \emph{$\delta$-regular} if all its nontrivial Fourier coefficients are at most $\delta$ in magnitude.\footnote{For a formal definition, see \Cref{def:delta-regular}, and for more background on Fourier analysis, see \Cref{section:prelim}.} 
This definition can be viewed as a pseudorandomness condition; 
in particular, a randomly chosen Boolean function $f:\F_2^n \to \pmone$ is $\delta$-regular with very high probability, even for $\delta = 2^{-\Omega(n)}$.\footnote{See for example \cite{AOBFbook}, Exercise 1.7 and Proposition 6.1.} 

The prior works surrounding graph regularity \cite{szemeredi1975regular, frieze-kannan-regularity, shapira2006graph} and function regularity \cite{green-regularity, hlms-regularity-lb} have been concerned with obtaining $\delta$-regular \emph{partitions}, which, roughly speaking, are partitions of the object at hand into (mostly) pseudorandom parts. 
Often, these results have quite poor dependencies on the parameter $\delta$ so as not to be practical for any reasonably small value of $\delta$ (see \Cref{thm:green-regularity} and \Cref{claim:ghz-partition} for detailed statements). 
Motivated by this, and by applications in theoretical computer science, we relax our requirement and look to find just \emph{one} $\delta$-regular part. 
Namely, we seek to understand the following quantity:
$$\regnum(f,\delta) := \min\{\codim(\calU): \calU \text{ is an affine subspace such that } f_\calU \text{ is }\delta\text{-regular}\},$$
where here and throughout this work $f_\calU: \calU \to \R$ denotes the restriction of $f$ to inputs coming from $\calU$.

Before stating our main results as well as prior work, we make a few remarks about the quantity $\regnum(f,\delta)$. 
In the special case when $\delta = 0$, the quantity $\regnum(f,0)$ has been previously studied in the literature, under the name of \emph{parity kill number} \cite{parity-kill}. 
This is the smallest number of parities that need to be fixed in order to make $f$ constant. 
The value $\regnum(f,0)$ is also a measure associated with affine dispersers, objects that have received significant attention in the study of pseudorandomness, see e.g. \cite{shaltiel-disperser, xinli-polylog-affine-dispersers, chattopadhyay-etal-affine-dispersers-logn, cohen-tal-f2polys, bensasson-kopparty-subspace-polys}.
An affine disperser of dimension $k$ is a coloring of $\F_2^n$ such that no affine subspace of dimension $k$ is monochromatic. 
If we view an affine disperser as a function $f:\F_2^n\to \{0,1,\ldots,C\}$, then its dimension is just $n-\regnum(f,0)+ 1$.

Now, we briefly discuss the bounds on $\regnum(f,\delta)$ most relevant to our work. For a general function $f:\F_2^n \to [-1,1]$, it is known that $\regnum(f,\delta) \leq 1/\delta$; this follows from a well-known density-increment argument, see \cite{meshulam} (for a short proof of this, see \Cref{claim:trivial-algorithm}).
One might ask if $\regnum(f,\delta)$ is small when we assume $f$ is structured, and a natural example of such functions is the class of functions with low Fourier degree. 
For general degree $d$ functions $f: \F_2^n\to [-1,1]$, the best bound on $\regnum(f,\delta)$ until this work was just the above mentioned bound of $1/\delta$. However, for the class of degree $d$ Boolean functions, we know that $\regnum(f,\delta) \leq \regnum(f,0) = O(d^3)$; this follows from the polynomial relationship between Fourier degree and decision tree depth, see \cite{Midrijanis-degree-certificate-cxty}, and \cite{shalev-thesis, buhrman-dewolf-boolean-fn-complexity} for surveys. 
We emphasize that this result relies crucially on Booleanity (and is independent of $\delta$), and one can ask if the more general class of degree $d$ functions bounded in the interval $[-1,1]$ also have small $\regnum(f,\delta)$ values. Our main result answers exactly this question, and provides an upper bound for $\regnum(f,\delta)$ in this setting.

\begin{restatable}{theorem}{main}
    \label{thm:bounded-deg-d-upper-bound-intro}
    For any $\delta\in (0,1)$ and any degree $d$ function $f:\F_2^n \to [-1,1]$, we have $\regnum(f,\delta) \leq n - \Omega\left( n^{1/d!}(\log(n/\delta))^{-2}\right).$
\end{restatable}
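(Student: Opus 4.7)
The plan is to argue by induction on the degree $d$. The base case $d = 0$ is trivial since a constant function is $\delta$-regular on $\F_2^n$. The inductive engine is a one-step \emph{top-degree elimination} lemma: for any degree-$d'$ function $g\colon \F_2^{N} \to [-1,1]$, there exists an affine subspace $\calU_0 \subseteq \F_2^{N}$ of dimension $m = \tilde\Omega(N^{1/d'})$ such that, in a suitable parametrization of $\calU_0$ by $\F_2^m$, every Fourier coefficient of $g_{\calU_0}$ at a level-$d'$ multi-index has magnitude at most $\delta' := \delta/\mathrm{poly}(d,k)$, where $k = \log(n/\delta)$. Granting the lemma, $g_{\calU_0}$ is a degree-$(d'-1)$ function on $\F_2^{m}$ plus a Fourier-small residue at level $d'$, and we recurse on the degree-$(d'-1)$ part. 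Cascading the reductions through degrees $d, d-1, \dots, 1$ shrinks the ambient dimension through the chain $n \to n^{1/d} \to n^{1/(d(d-1))} \to \cdots \to n^{1/d!}$, and after accounting for the polylog losses at each level the final subspace has dimension $\Omega(n^{1/d!}/k^2)$.

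For the one-step elimination lemma, the plan is a probabilistic construction: take $\calU_0 = u + V$, where $V$ is the column span of an $N \times m$ matrix $A$ whose columns are disjointly supported blocks $C_1, \dots, C_m \subseteq [N]$ of equal size $\approx N/m$, and $u \in \F_2^N$ is a uniformly random shift. The restriction identity
\[ \widehat{g_{\calU_0}}(T) \;=\; \sum_{S \,:\, A^{\transpose}\mathbf 1_S = \mathbf 1_T} \hat g(S)\, \chi_S(u) \]
expresses each Fourier coefficient of $g_{\calU_0}$ as a signed sum of Fourier coefficients of $g$. Thanks to the disjoint-column structure, whenever $|T| = d'$ the relevant $S$ in the preimage all have $|S| = d'$ (one element picked from each $C_j$, $j \in T$), and the sum becomes a multilinear polynomial of degree $d'$ in the independent $\pm 1$ random variables $(\chi_{e_i}(u))_{i}$. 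Hypercontractivity gives an $\exp(-\Omega(t^{2/d'}))$ concentration of this polynomial around $0$ with variance $\sum_S \hat g(S)^2 \leq 1$, and a union bound over the $\binom{m}{d'} \leq m^{d'}$ multi-indices $T$ makes every level-$d'$ coefficient simultaneously at most $\delta'$ with positive probability, provided $m \leq \tilde O(N^{1/d'}/k)$. A preliminary ``cleaning'' phase—incorporating the at most $1/\delta'^2$ Fourier coefficients of $g$ of magnitude $\geq \delta'$ into $V^{\perp}$ so they merge into the mean of $g_{\calU_0}$—prevents a few atypically large coefficients from spoiling the probabilistic estimate.

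The main obstacle I expect is controlling the composition of the $d$ inductive calls. At each level the elimination lemma leaves behind a Fourier-small degree-$d'$ residue, and subsequent sub-restrictions can, in principle, re-aggregate its coefficients, since restriction collapses Fourier coefficients along cosets of $V^{\perp}$. I would handle this by using the same global parameter $\delta' = \delta/\mathrm{poly}(d,k)$ at every level and by tracking an $L^2$ Fourier bound on the accumulated residues; Parseval together with the dimension loss at each step should show that the residues remain Fourier-small after all $d$ nested restrictions, so that the final function is $\delta$-regular. This accounting, combined with the single-level cost of $\tilde O(N^{1/d'}/k)$, is what gives rise to the $n^{1/d!}$ scaling and the $k^{-2}$ factor in the claimed bound.
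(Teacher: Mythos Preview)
Your induction-on-degree skeleton matches the paper's, but your one-step \emph{top-degree elimination} lemma has a genuine gap: the hypercontractive concentration step is used in the wrong direction. The bound $\Pr[|P|>t]\le\exp(-\Omega(t^{2/d'}))$ for a degree-$d'$ polynomial $P$ with $\Var(P)\le 1$ is only useful when $t\gg 1$; you need $t=\delta'\ll 1$, where the bound is vacuous. Concretely, for your union bound over $m^{d'}$ sets $T$ to succeed you would need $\exp(-\Omega(\delta'^{2/d'}))\le m^{-d'}$, which forces $\delta'\ge(\Omega(d'\log m))^{d'/2}\gg 1$. The ``cleaning'' phase does not rescue this. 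First, Parseval guarantees at most $1/\delta'^2$ coefficients of magnitude $\ge\delta'$, but for the theorem's full range $\delta\in(0,1)$ this can exceed $n$ (e.g.\ when $\delta=2^{-\sqrt n}$), so you cannot afford to absorb them into $V^\perp$. Second, even after cleaning, the variance of the signed sum for a fixed $T$ is $\sum_{S}\hat g(S)^2$ over $(N/m)^{d'}$ terms each of size $\le\delta'$, giving a typical magnitude of order $(N/m)^{d'/2}\delta'\gg\delta'$; random signs cannot drive it below $\delta'$. Your $L^2$ residue tracking is also insufficient: restricted coefficients are signed sums over cosets of $V^\perp$, and Cauchy--Schwarz from an $L^2$ bound picks up a $\sqrt{|\text{coset}|}$ factor.

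The paper's proof replaces your probabilistic step with a deterministic \emph{pigeonhole} argument that exploits boundedness in a way hypercontractivity cannot. The key observation is that the partial sums $a_U(\gamma)=\hat f(\gamma)+\sum_{j\in U}\hat f(\gamma+e_j)$ are themselves Fourier coefficients of a restriction of $f$, and hence lie in $[-1,1]$. Bucketing $[-1,1]^{\binom{k}{d-1}}$ into boxes of side $\tau$ and pigeonholing over the $2^{|T|}$ subsets $U$ of a block of size $|T|\approx\binom{k}{d-1}\log(1/\tau)$ yields two subsets with $\|v_U-v_{U'}\|_\infty\le\tau$; their symmetric difference gives the desired small signed sums. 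Iterating this grows a set $K$ of coordinates on which all degree-$d$ coefficients are $\le\tau$, with $|K|\ge\Omega((n/\log(1/\tau))^{1/d})$. Taking $\tau=\delta/(3n^d)$ makes the total $L^1$ mass of the degree-$d$ residue at most $\delta/3$, so it cannot inflate any restricted coefficient by more than $\delta/3$ under \emph{any} further restriction---this is the $L^1$ (not $L^2$) control that makes the recursion go through.
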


Note that the general bound $\regnum(f,\delta) \leq 1/\delta$ that we mentioned earlier, is only meaningful when $\delta > 1/n$, however, our theorem allows for $\delta$ to be much smaller. 
The regime of small $\delta$ is particularly interesting from the perspective of pseudorandomness. 
Indeed, in a qualitative sense, we see that by decreasing $\delta$, we are asking for affine subspaces where the restricted function looks increasingly like a random function. 
Using \Cref{thm:bounded-deg-d-upper-bound-intro} together with our connection between $\regnum(f,0)$ and the dimension of affine disperse, we obtain the following corollary which says that low degree polynomials cannot serve as good affine dispersers.

\begin{corollary}
\label{thm:rule-out-disperser}
If $f: \F_2^n\to \{0,\ldots,C\}$ has Fourier degree $d$, then  
$f$ cannot be an affine disperser of dimension $k$ for any $k \geq \Omega\left(n^{1/d!}(d + \log(nC))^{-2}\right)$.
\end{corollary}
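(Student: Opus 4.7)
The plan is to invoke Theorem~\ref{thm:bounded-deg-d-upper-bound-intro} with a value of $\delta$ small enough that the restriction of $f$ to the promised affine subspace is forced to be \emph{exactly} constant, producing a monochromatic affine subspace whose dimension gives the stated bound.

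First, rescale $g:=(2f-C)/C$, so that $g:\F_2^n\to[-1,1]$ is a Fourier-degree-$d$ function and, on every affine subspace $\calU$ and every non-empty $S$, $\hat f_\calU(S) = (C/2)\,\hat g_\calU(S)$. Second, use integrality: for any affine subspace $\calU$ of dimension $m$,
\[
\hat f_\calU(S)\;=\;\frac{1}{2^m}\sum_{x\in\calU} f(x)\chi_S(x)\;\in\;\frac{1}{2^m}\Z,
\]
so $|\hat f_\calU(S)|<2^{-m}$ forces $\hat f_\calU(S)=0$. Consequently, if every non-trivial Fourier coefficient of $g_\calU$ has magnitude smaller than $2^{-m+1}/C$, then $f_\calU$ is constant, and $\calU$ is a monochromatic subspace of dimension $m$, so $f$ cannot be an affine disperser of dimension $k$ for any $k\leq m$.

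Third, combine the two by setting $\delta := 2^{-m}/C$ (ensuring $(C/2)\delta = 2^{-m-1} < 2^{-m}$). Applying Theorem~\ref{thm:bounded-deg-d-upper-bound-intro} to $g$ yields an affine subspace $\calU$ with
\[
\dim\calU\;\geq\;c\cdot\frac{n^{1/d!}}{(\log(n/\delta))^2}\;\geq\;c'\cdot\frac{n^{1/d!}}{(m+\log(nC))^2}.
\]
To close the argument we need $\dim\calU \geq m$, i.e., the coupled inequality $m(m+\log(nC))^2 \lesssim n^{1/d!}$.

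The main obstacle is solving this inequality to match the target bound $m = \Omega\bigl(n^{1/d!}(d+\log(nC))^{-2}\bigr)$. The analysis splits into a small-$m$ regime (where $m \lesssim \log(nC)$ and the denominator is controlled by $\log^2(nC)$) and a large-$m$ regime (where $m$ dominates in the denominator); balancing these, with the additive $d$ in $d+\log(nC)$ absorbing overheads from $\log m$, $\log C$, and constant factors inside $\log(n/\delta)$, delivers $m$ of the stated order. Because a monochromatic subspace of this dimension precludes $f$ from being an affine disperser at any dimension $k \leq m$, the corollary follows.
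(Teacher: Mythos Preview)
Your integrality step is the weak point. You correctly observe that on an affine subspace $\calU$ of dimension $m$ the Fourier coefficients of $f_\calU$ lie in $2^{-m}\Z$, so that $|\hat f_\calU(S)|<2^{-m}$ forces $\hat f_\calU(S)=0$. But this threshold is $2^{-\dim\calU}$, not $2^{-m}$ for some $m$ you choose in advance. When you set $\delta=2^{-m}/C$ and apply Theorem~\ref{thm:bounded-deg-d-upper-bound-intro}, the resulting subspace $\calU$ has some dimension $D\ge L(m):=c\,n^{1/d!}/(m+\log(nC))^2$, and to conclude constancy you need $(C/2)\delta<2^{-D}$, i.e.\ $D\le m$, which is the \emph{opposite} inequality to the one you state. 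The theorem gives only a lower bound on $D$, and restricting $\calU$ to a smaller affine subspace can enlarge Fourier coefficients, so you cannot simply pass to dimension~$m$. Even if you could arrange $D=L(m)$ and then solve $L(m)\le m$, in the interesting regime $n^{1/d!}\gg\log^3(nC)$ the fixed point is $m\approx n^{1/(3\cdot d!)}$, far short of $n^{1/d!}/(d+\log(nC))^2$; the ``additive $d$'' cannot absorb a full factor of $m$ in the denominator.

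The paper sidesteps this circularity with a sharper integrality fact that is \emph{independent of the subspace dimension}: since $h=1-2f/C$ is $(2/C)$-granular and has Fourier degree $d$, every Fourier coefficient of $h$ (and hence every Fourier coefficient of any restriction $h_\calU$, being a $\pm$-sum of these) is an integer multiple of $2^{-d}\cdot(2/C)$. Thus choosing the fixed value $\delta=2^{-d}/C$ already forces $h_\calU$ to be constant whenever $h_\calU$ is $\delta$-regular, on a subspace of \emph{any} dimension. Plugging this $\delta$ into Theorem~\ref{thm:bounded-deg-d-upper-bound-intro} gives $\log(n/\delta)=O(d+\log(nC))$ directly, with no coupled inequality to solve.
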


\subparagraph*{Lower Bounds on $\regnum(f,\delta)$.} To complement \Cref{thm:bounded-deg-d-upper-bound-intro}, we present in \Cref{table:large-reg-nums} several examples of functions (bounded as well as Boolean) for which $\regnum(f,\delta)$ is large. For each row in the table, we exhibit a class of functions (whose degree and range is as specified), such that for any $\delta'\leq \delta$, \emph{no} affine subspace of dimension larger than $n - \regnum(f, \delta)$ is $\delta'$-regular. 

\begin{table}[h!]
\centering
\begin{tabular}{||c | c | c | c | c ||} 
 \hline
 $\delta \leq $ & $\regnum(f,\delta) \geq$ & $\deg(f)$ & $\textrm{range}(f)$ & Ref. \\ 
 [0.5ex] 
 \hline\hline \rule{0pt}{3ex}  
  $1/n$ & $n/2 - 1$ & 1 & $[-1,1]$ & \Cref{claim:degree-one-bounded-lb}\\
 $\binom{n}{d}^{-1}$ & $n - 2dn^{1/(d-1)}$ & $d$ & $[-1,1]$ & \Cref{claim:degree-d-bounded-lb} \\
 $\Theta(n^{-1/2})$ & $\Theta(\sqrt{n})$ & $n$ & $\pmone$ &  \Cref{lemma:majority-lower-bound}  \\
 $\frac{1}{2}\cdot n^{-d}$ (for $d \leq \frac{\log n}{\log \log n + 1}$) & $n - 2dn^{1/(d-1)}$ & $\Omega(n)$ & $\pmone$ & \Cref{cor:nonexplicit-boolean-simpler} \\
 $1/2^{2^k + 1}$ (for integer $k$) & $\Omega\left ((\log\frac{1}{\delta})^{\log_2(3)}\right )$ & $2^k$ & $\pmone$ & \Cref{claim:superlinear-boolean} \\
 [0.5ex] 
 \hline
\end{tabular}
\caption{Table of functions with large $\regnum(f,\delta)$ values.}
\label{table:large-reg-nums}
\end{table}
Observe that \Cref{claim:degree-d-bounded-lb} provides a somewhat of a converse to \Cref{thm:bounded-deg-d-upper-bound-intro}. However there is a noticeable gap between the two results, and we conjecture that \Cref{claim:degree-d-bounded-lb} is closer to being tight, and that \Cref{thm:bounded-deg-d-upper-bound-intro} could be improved. We also note that \Cref{claim:degree-d-bounded-lb} and \Cref{cor:nonexplicit-boolean-simpler} are not explicit -- it would be interesting to find more explicit examples.
\subsection{Related Work}
To the best of our knowledge, $\regnum(f, \delta)$ has not been explicitly studied before. However, it is closely related to well-studied notions of function regularity as well as the concepts of parity kill number and affine dispersers. 
In this section, we give a detailed description of both these connections.

\subparagraph*{Parity Kill Number and Affine Dispersers.}
As we have already mentioned, $\regnum(f,0)$ has been studied under the name of \textit{parity kill number}, denoted $\paritykill{f}$ (see \cite{parity-kill}). 
Parity kill number can be considered as a further generalization of the \textit{minimum certificate complexity} of $f$, denoted $C_{\min}[f]$, which is the minimum number of bits one must fix in order to make $f$ constant.
In particular, for any $\delta 
\geq 0$, we have $\regnum(f,\delta) \leq \regnum(f,0) \leq C_{\min}[f]$.
The minimum certificate complexity is one of several natural complexity measures that have been well studied for Boolean functions $f:\F_2^n\to \pmone$ (see \cite{buhrman-dewolf-boolean-fn-complexity, shalev-thesis} for surveys). 

As we have already alluded to, the quantity $\regnum(f,0)$ is also closely related to efficacy of $f: \F_2^n \to \{0,\ldots,C\}$ as an affine disperser. 
In the case of $C=1$, Cohen and Tal \cite{cohen-tal-f2polys} rule out $\F_2$-polynomials of degree $d$ as affine dispersers by showing that any such function satisfies $\regnum(f,0) \leq n - \Omega(d\cdot n^{1/(d-1)})$.
This result resembles our \Cref{thm:rule-out-disperser}; however, the two results are incomparable for two reasons. First, 
degree $d$ functions over $\F_2$ can have very large Fourier degree; 
moreover, the corresponding result of \cite{cohen-tal-f2polys} applies to functions whose range is $\F_2$, while ours applies to functions that take values in the set $\{0,\ldots,C\}$, which can have a much larger size. Furthermore, for $f: \F_2^n \to \{0,\ldots, C\}$, a standard argument (analogous to the one in \cite{Midrijanis-degree-certificate-cxty}) shows that $\regnum(f,0) \leq O(Cd^3)$, where $d$ here is the Fourier degree. However, this does not address the case where $C = \Omega(n)$, which is when \Cref{thm:rule-out-disperser} becomes useful.

\subparagraph*{Pseudorandom Partitions.}
As we have mentioned, much prior work on function regularity has been focused on finding pseudorandom partitions of $\F_2^n$.
To the best of our knowledge, the earliest result in this direction is due to Green \cite{green-regularity}; below, the notation $\tower(x)$ refers to an exponential tower of 2's $2^{2^{\iddots}}$ of height $x$.
\begin{proposition}[Theorem 2.1 in \cite{green-regularity}]
\label{thm:green-regularity}
For any $f: \F_2^n \to \{0,1\}$ and $\delta>0$, there exists a subspace $\calV$ of co-dimension $M(\delta) \leq \tower(\ceil{1/\delta^3})$ such that for all but a $\delta$-fraction of the affine subspaces $\calU = \alpha+\calV$, $f_\calU$ is $\delta$-regular.
\end{proposition}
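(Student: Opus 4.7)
The plan is an energy-increment argument in the style of Szemer\'edi's regularity lemma, adapted to the Fourier-analytic setting over $\F_2^n$.

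First, for any subspace $\calV \leq \F_2^n$, I would define the conditional expectation operator $P_\calV f(x) := \E_{v \in \calV}[f(x+v)]$, which returns the average of $f$ on the coset of $\calV$ containing $x$, and the associated energy
\[
E(\calV) := \E_x[(P_\calV f(x))^2].
\]
Since $f: \F_2^n \to \{0,1\}$, we have $E(\calV) \in [0,1]$. When $\calV' \leq \calV$, the tower property $P_\calV f = P_\calV P_{\calV'} f$ together with Pythagoras yields the key identity $E(\calV') - E(\calV) = \|P_{\calV'}f - P_\calV f\|_2^2$, so any refinement weakly increases energy.

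Next, I would prove the increment step: if a subspace $\calV$ is not already ``good''---that is, if strictly more than a $\delta$-fraction of cosets $\alpha + \calV$ fail to be $\delta$-regular---then there is a refinement $\calV'$ satisfying $E(\calV') - E(\calV) > \delta^3$. To build this refinement, pick for each bad coset $\alpha$ a witnessing nonzero character $\chi_{\gamma_\alpha}$ of $\calV$ with $|\hat{f_\alpha}(\gamma_\alpha)| > \delta$, and set $\calV'$ to be the intersection of $\calV$ with the kernels of all the $\chi_{\gamma_\alpha}$ (lifted to $\F_2^n$). Expanding $P_{\calV'}f - P_\calV f$ restricted to the coset $\alpha + \calV$ in the Fourier basis of $\calV$, the character $\chi_{\gamma_\alpha}$ is guaranteed to survive because $\calV'$ lies in its kernel, contributing at least $|\hat{f_\alpha}(\gamma_\alpha)|^2 > \delta^2$ to the squared $L^2$-norm on that coset. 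Summing over the at-least-$\delta$ fraction of bad cosets gives the claimed $\delta^3$ increment.

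Finally, I would iterate. Starting from $\calV_0 = \F_2^n$, repeatedly apply the increment step; since $E \in [0,1]$ cannot grow by more than $1$ in total, the process must terminate at some $\calV_K$ satisfying the conclusion, with $K \leq \lceil 1/\delta^3 \rceil$. The codimension grows because each step adds at most one character per bad coset, giving at most $2^{c_i}$ new characters where $c_i = \codim(\calV_i)$, so $c_{i+1} \leq c_i + 2^{c_i}$. Starting from $c_0 = 0$ and iterating $K$ times produces $c_K \leq \tower(\lceil 1/\delta^3 \rceil)$ as claimed. The main obstacle I anticipate is making the increment step's bookkeeping precise---specifically, verifying that each witnessing character $\chi_{\gamma_\alpha}$ truly contributes its full weight in the Pythagorean decomposition---but this reduces to the straightforward observation that $\chi_{\gamma_\alpha}$ is trivial on $\calV'$ by construction and hence appears in the Fourier expansion of $P_{\calV'}f - P_\calV f$ on $\alpha+\calV$.
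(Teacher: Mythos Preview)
The paper does not prove this proposition at all; it is simply quoted as Theorem~2.1 of Green~\cite{green-regularity} in the related-work discussion. So there is no ``paper's own proof'' to compare against.

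That said, your proposed argument is precisely the standard energy-increment proof, and it is essentially how Green establishes the result in the cited reference. The three ingredients you describe---(i) monotonicity of $E(\calV)=\|P_\calV f\|_2^2$ under refinement via Pythagoras, (ii) the $\delta^3$ energy increment obtained by intersecting $\calV$ with the kernels of one witnessing character per irregular coset, and (iii) the bookkeeping $c_{i+1}\le c_i+2^{c_i}$ leading to the tower bound after at most $\lceil 1/\delta^3\rceil$ steps---are exactly the bones of Green's argument. Your anticipated ``obstacle'' is not really one: since $\calV'\subseteq\ker\chi_{\gamma_\alpha}$ by construction, the character $\chi_{\gamma_\alpha}$ is constant on $\calV'$-cosets and therefore survives the averaging $P_{\calV'}$, so Parseval on the coset $\alpha+\calV$ gives $\|P_{\calV'}f-P_\calV f\|_{L^2(\alpha+\calV)}^2\ge|\hat{f_\alpha}(\gamma_\alpha)|^2>\delta^2$ exactly as you say. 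The outline is correct and complete.
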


In the same paper, Green showed that $M(\delta) \geq \tower(\Omega(\log (1/\delta)))$ was necessary. Subsequently, Hosseini et al. \cite{hlms-regularity-lb} exhibited a better counterexample showing co-dimension $M(\delta) \geq \tower(\ceil{1/16\delta})$ is required.

In the above upper and lower bound of \cite{green-regularity, hlms-regularity-lb}, the partition of $\F_2^n$ is of a specific form -- namely, it is every affine shift of a given subspace. Given this observation, one can ask if there is a partition of $\F_2^n$ into affine subspaces of smaller co-dimension so that in most parts $f$ is $\delta$-regular. 
As the next proposition, due to Girish et al. \cite{ghz-simpler} shows, this is indeed the case. 

\begin{proposition}[Proposition A.1 in \cite{ghz-simpler}]
\label{claim:ghz-partition}
For any $f: \F_2^n\to [0,1]$ and $\delta>0$, there exists a partition $\Pi$ of $\F_2^n$, where every $\pi \in \Pi$ is an affine subspace of co-dimension at most $\frac{1}{\delta^3}$ such that for all but a $\delta$-fraction of the parts, $f_\pi$ is $\delta$-regular.  
\end{proposition}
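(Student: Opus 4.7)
The plan is to prove the proposition by a standard energy-increment argument: starting from the trivial partition, I would iteratively refine along large Fourier coefficients of non-regular parts until few bad parts remain. Define the potential
\[
\Phi(\Pi) \;:=\; \sum_{\pi \in \Pi} \frac{|\pi|}{2^n}\,\bigl(\E_{x \in \pi}[f(x)]\bigr)^2,
\]
and observe that Jensen's inequality together with $f \in [0,1]$ gives $0 \le \Phi(\Pi) \le \E[f^2] \le 1$.

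The heart of the argument is the following one-step identity. If $\pi$ is an affine subspace and $\chi_S$ is a non-trivial character of $\pi$ for which $|\hat{f_\pi}(S)| > \delta$, then splitting $\pi$ into the two halves $\pi_+ = \pi \cap \{\chi_S = +1\}$ and $\pi_- = \pi \cap \{\chi_S = -1\}$ yields $\E_{\pi_\pm}[f] = \hat{f_\pi}(\emptyset) \pm \hat{f_\pi}(S)$, so the contribution to $\Phi$ from $\pi$ grows by exactly $(|\pi|/2^n)\,\hat{f_\pi}(S)^2 > (|\pi|/2^n)\,\delta^2$.

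With these two ingredients in hand, I would run the following loop from $\Pi \leftarrow \{\F_2^n\}$: while the total measure of parts on which $f_\pi$ is not $\delta$-regular exceeds $\delta$, simultaneously replace every such bad part $\pi$ by its split along a witness character $S_\pi \neq \emptyset$ with $|\hat{f_\pi}(S_\pi)| > \delta$. Summing the per-split gain over the bad parts, each iteration increases $\Phi$ by strictly more than $\delta \cdot \delta^2 = \delta^3$. Combined with $\Phi \le 1$, this forces termination after at most $T \le 1/\delta^3$ iterations. Because every part is split at most once per iteration, every part of the final partition is an intersection of at most $T \le 1/\delta^3$ affine hyperplanes, hence has codimension at most $1/\delta^3$; and on exit the measure of non-regular parts is at most $\delta$, which is exactly what the proposition requires.

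I do not foresee any substantial obstacle. The only minor technical care needed is in making sense of ``Fourier coefficient of $f_\pi$'' when $\pi$ is an affine (rather than linear) subspace — one fixes any identification of $\pi$ with a translate of a linear subspace, so that splitting ``along $\chi_S$'' amounts to splitting $\pi$ along some affine hyperplane of $\F_2^n$. One also has to interpret ``$\delta$-fraction of the parts'' in the size-weighted sense, which is precisely what this potential argument delivers.
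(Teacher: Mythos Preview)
Your proposal is correct and follows essentially the same energy-increment argument as the paper's proof: both use the potential $\Phi(\Pi)=\E_{\pi}[\hat{f_\pi}(0)^2]$, compute the exact gain $\hat{f_\pi}(S_\pi)^2$ from splitting a part along a large coefficient, and conclude that at most $1/\delta^3$ rounds of refinement can occur. The only cosmetic difference is that the paper phrases the loop as splitting every part (so all parts stay the same size and the counting and size-weighted fractions coincide), whereas you split only the bad parts and track the size-weighted fraction directly; either variant yields the stated bound.
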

The proof of \cref{claim:ghz-partition} is based on a simple algorithm that greedily fixes the parities corresponding to the largest Fourier coefficients; it is included in \Cref{appendix:omitted} for completeness.

Although, both these results partition $\F_2^n$ into several affine subspaces where $f$ is $\delta$-regular, they are only meaningful when $\delta$ is relatively large. Indeed, \Cref{thm:green-regularity} is trivial when $\delta < (\log^*(n))^{-1/3}$, and \cref{claim:ghz-partition} when $\delta < n^{-1/3}$. 
As we mentioned earlier, if we relax our requirement to finding just one affine subspace, there is a simple upper bound on $\regnum(f,\delta)$ based on a density-increment argument, which goes back to the works of Roth \cite{roth} and Meshulam \cite{meshulam}.

\begin{proposition}[Folklore]
    \label{claim:trivial-algorithm}
    For any $f : \F_2^n \to [-1,1]$, we have $\regnum(f,\delta) \leq \frac{1}{\delta}$.
\end{proposition}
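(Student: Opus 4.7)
The plan is to prove this by a density-increment argument: starting with $\calU_0 = \F_2^n$, we iteratively restrict to a codimension-one affine subspace of the current $\calU_i$ each time we find a nontrivial Fourier coefficient larger than $\delta$, using $|\E[f_{\calU_i}]|$ as a monotone potential function. Since $f$ takes values in $[-1,1]$, this potential lies in $[0,1]$ and grows by at least $\delta$ per step, so the process must halt after at most $1/\delta$ iterations, at which point $f$ is $\delta$-regular on the remaining affine subspace.

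The main ingredient is the following one-step increment. Suppose $f_{\calU}$ is not $\delta$-regular, so there is a nonzero $S$ (a character on the linear part of $\calU$) with $|\hat{f_\calU}(S)| > \delta$. Split $\calU = \calU_+ \sqcup \calU_-$ according to $\chi_S = \pm 1$; these are two affine subspaces, each of codimension one inside $\calU$. The two identities
\begin{equation*}
\E[f_\calU] = \tfrac{1}{2}\bigl(\E[f_{\calU_+}] + \E[f_{\calU_-}]\bigr), \qquad \hat{f_\calU}(S) = \tfrac{1}{2}\bigl(\E[f_{\calU_+}] - \E[f_{\calU_-}]\bigr)
\end{equation*}
imply that one of the two restricted means equals $\E[f_\calU] + \hat{f_\calU}(S)$ and the other equals $\E[f_\calU] - \hat{f_\calU}(S)$. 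Choosing the sign that aligns with $\E[f_\calU]$, we obtain a side $\calU' \in \{\calU_+, \calU_-\}$ satisfying $|\E[f_{\calU'}]| \geq |\E[f_\calU]| + |\hat{f_\calU}(S)| > |\E[f_\calU]| + \delta$.

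Iterating, define $\calU_{i+1}$ from $\calU_i$ by this construction as long as $f_{\calU_i}$ is not $\delta$-regular. Each step increases the codimension of the current affine subspace by exactly one and increases $|\E[f_{\calU_i}]|$ by strictly more than $\delta$. Since $f$ is bounded in $[-1,1]$, we have $|\E[f_{\calU_i}]| \leq 1$ throughout, so the loop terminates after at most $\lfloor 1/\delta \rfloor$ steps. The resulting $\calU$ is $\delta$-regular and has codimension at most $1/\delta$, giving $\regnum(f,\delta) \leq 1/\delta$.

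The argument is essentially straightforward once the right potential is chosen; the only subtle point is to track the \emph{signed} mean rather than its square, since using $(\E[f_\calU])^2$ as potential would only yield the weaker bound $1/\delta^2$. Choosing the correct sign of $\chi_S$ at each step is what converts the Fourier increment into an additive (rather than quadratic) gain in potential, which is what makes the tight $1/\delta$ bound fall out.
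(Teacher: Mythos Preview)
Your argument is correct and is essentially the same density-increment proof as in the paper: track the potential $|\hat{f_{\calU}}(0)| = |\E[f_\calU]|$, fix one parity at a time so that the potential increases by more than $\delta$ per step, and use boundedness to terminate in at most $1/\delta$ steps. Your write-up is in fact slightly more careful than the paper's, which invokes \Cref{cor:V-dim-n-1} for the same one-step increment.
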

We provide a proof of \Cref{claim:trivial-algorithm} in \Cref{appendix:omitted} for completeness.

\subsection{Techniques}
\subparagraph*{Upper bound on $\regnum(f,\delta)$.} We give a brief proof sketch of~\Cref{thm:bounded-deg-d-upper-bound-intro}.
The proof proceeds by induction over the Fourier degree. The base case corresponds to degree one functions. 
Our intuition is derived from the following fact. If we have any $k$ real numbers $a_1,\ldots,a_k$ such that the sum of any subset of them has magnitude at most one, then by the pigeonhole principle, there is a non-empty subset $S \subseteq [k]$, and a signing of the numbers in $S$ so that the signed sum has magnitude at most $2^{-\Omega(k)}$. In the degree one case, we partition $\{1,\ldots,n\}$ into consecutive disjoint intervals of size $k = O(\log 1/\delta)$. 
We apply the above intuition to the $k$ Fourier coefficients in each interval, to obtain signed sums that have small magnitude. 
Then, by appropriately choosing an affine subspace, $\calU$ of dimension $\Omega(n/\log(1/\delta))$, we show that these signed sums are exactly the Fourier coefficients of the function restricted to $\calU$ (see \Cref{fact:Fourier-coeff-of f_U-as-a-sum-of-fourier-coeff-of-f-from-vperp} for a more general statement).
We give a more detailed description of how this works in \Cref{sec:proof-of-thm}.

At a high level, we reduce the problem for degree $d$ functions to degree $d-1$ by restricting to an affine subspace of dimension $\tilde{\Omega}\left(n^{1/d}\right)$, where the function is degree $d$ and all Fourier coefficients at the $d$-th level are extremely small $\ll \delta/n^d$. For a detailed statement, see \Cref{lemma:degree-d-small}.  When we use the inductive hypothesis for $d-1$, the last constraint ensures that the degree $d$ coefficients cannot increase the new coefficients by more than $O(\delta)$, even if they combine in the most constructive way possible. 

\Cref{lemma:degree-d-small} is also obtained by repeatedly applying the pigeonhole principle. However, the key issue now is that several Fourier coefficients could be affected when we apply a restriction, unlike the degree one case. 
To avoid this, we apply restrictions iteratively so that each one preserves the small Fourier coefficients from past iterations while still ensuring that \emph{several} new Fourier coefficients are also small. The cost of this procedure is that, in each step, we must apply the pigeonhole principle over larger and larger subsets of coordinates.

\subparagraph*{Lower Bounds.} 
Here, we give a very high level overview of our lower bounds on $\regnum(f,\delta)$. 
The basic idea is to consider functions $f$ with the property that their Fourier spectrum is concentrated on a small number of Fourier coefficients. 
It turns out (see \Cref{fact:Fourier-coeff-of f_U-as-a-sum-of-fourier-coeff-of-f-from-vperp}) that when we restrict to an affine subspace, say $\calU$, the Fourier coefficients of $f_{\calU}$ are simply signed sums of the Fourier coefficients of $f$. 
By our choice of $f$, if the restricted function was $\delta$-regular, then the large coefficients of $f$ involved in the signed sums somehow cancelled each other out. 
We show that by choosing the vectors corresponding to the large Fourier coefficients in $f$ appropriately, such a cancellation would imply that the co-dimension of $\calU$ must be large. For more detailed sketches of the entries in \Cref{table:large-reg-nums}, see \Cref{appendix:lb-sketches}.

\section{Preliminaries}
\label{section:prelim}
\subparagraph*{Notation.} 
$\indicator\{ \cdot\}$ denotes an indicator function that takes the value 1 if the clause is satisfied and 0 otherwise.
For a set $J\subseteq [n]$, we use $\spn(J)$ to denote the subspace spanned by the standard basis vectors corresponding to the elements in  $J$.
We refer to the $\text{L}_1$ norm of $\gamma \in \F_2^n$ by $\|\gamma\|_1$.
Given a subset $S \subseteq \F_2^n$, we denote $S^{=t} := S \cap \{u : \norm{u}_1 = t\}$.
Further, we define the degree of a function $f:\F_2^n \rightarrow \R$ to be $\max\{\|\gamma\|_1: \hat{f}(\gamma) \neq 0\}$.
We frequently interpret a linear transformation $M:\F_2^n\to\F_2^n$ as a matrix and refer to the linear map obtained by taking the transpose of the matrix as $M^{\transpose}$.
At several points, we consider the compositions of functions with linear maps. For a function $f$ and a map $M: \F_2^n \rightarrow \F_2^n$, we denote by $f\circ M$ the composition of the functions $f$ with $M$. In particular, $f\circ M(x)= f(M(x))$. 

\subparagraph*{Probability.} The following basic facts from probability theory are useful for us.
\begin{fact}[Hoeffding, \cite{Hoeffding}]
\label{fact:hoeffding}
Suppose $X_1,\ldots,X_n$ are such that $a \leq X_i \leq b$ for all $i$. Let $M = \frac{X_1 + \ldots X_n}{n}$. Then, 
$$\Pr\bigbracket{|M_n - \E M_n| \geq t} \leq 2\exp \bigparen{\frac{-2t^2n}{|b-a|}}.$$
\end{fact}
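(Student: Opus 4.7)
The plan is to prove this via the standard Chernoff-style moment generating function approach, which proceeds in four main steps. First, for any $s>0$, I would apply Markov's inequality to the random variable $e^{s(M_n - \E M_n)}$ to obtain
\[
\Pr\bigbracket{M_n - \E M_n \geq t} \leq e^{-st}\, \E\bigbracket{e^{s(M_n - \E M_n)}}.
\]
Assuming (as is needed for Hoeffding) that the $X_i$ are independent, the exponential factors across coordinates, yielding
\[
\E\bigbracket{e^{s(M_n - \E M_n)}} = \prod_{i=1}^n \E\bigbracket{e^{(s/n)(X_i - \E X_i)}}.
\]

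The crux of the argument is then the single-variable bound (Hoeffding's lemma): for a mean-zero random variable $Y$ supported on $[a,b]$ and any real $\lambda$, one has $\E[e^{\lambda Y}] \leq e^{\lambda^2 (b-a)^2/8}$. I would prove this by writing $Y$ as a convex combination of the endpoints $a$ and $b$ using the convexity of $y \mapsto e^{\lambda y}$, taking expectations to reduce to a one-variable function of the form $\varphi(u) = -pu + \log(1-p+pe^u)$ where $u = \lambda(b-a)$, and bounding $\varphi$ via a Taylor expansion around $0$ (using $\varphi(0)=\varphi'(0)=0$ and $\varphi''\leq 1/4$). This is the step I expect to be the most delicate, since one needs to carefully argue the second-derivative bound uniformly.

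Applying Hoeffding's lemma to each $Y_i = X_i - \E X_i$ with $\lambda = s/n$, the product bound becomes
\[
\E\bigbracket{e^{s(M_n - \E M_n)}} \leq \exp\bigparen{\tfrac{s^2 (b-a)^2}{8n}}.
\]
Combining with the Markov step gives $\Pr[M_n - \E M_n \geq t] \leq \exp\bigparen{-st + s^2(b-a)^2/(8n)}$, which I would then optimize over $s > 0$ by choosing $s = 4nt/(b-a)^2$. This yields the one-sided bound $\exp\bigparen{-2nt^2/(b-a)^2}$.

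Finally, repeating the argument with $-X_i$ in place of $X_i$ handles the lower tail, and a union bound over the two tails produces the factor of $2$ in the stated inequality. The exponent stated in the fact appears to have $|b-a|$ in the denominator where the standard form has $(b-a)^2$; my derivation naturally produces the latter, which I would use.
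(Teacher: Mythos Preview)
Your proof is correct and is the standard argument for Hoeffding's inequality. However, the paper does not actually prove this statement: it is stated as a \emph{Fact} with a citation to Hoeffding's original paper and is used as a black box (specifically in the proof of \Cref{lemma:bounded-regularity-to-boolean-regularity}). So there is nothing to compare against; you have supplied a full proof where the paper simply quotes a known result. Your observation that the exponent should carry $(b-a)^2$ rather than $|b-a|$ is also correct --- this is a typo in the paper's statement, and indeed the application in \Cref{lemma:bounded-regularity-to-boolean-regularity} uses the bound with $a=-1$, $b=1$, where the distinction is invisible.
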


\begin{definition}[Statistical Distance]
\label{dfn:stat-dist}
Let $X$ and $Y$ be two random variables taking values in a set $\calS$. Then we define the statistical distance between $X$ and $Y$ as
$$|X-Y| := \max_{\calT \subseteq \calS}\bigabs{\Pr[X \in \calT] - \Pr[Y \in \calT]} = \frac{1}{2}\sum_{s \in \calS} \bigabs{\Pr[X = s] - \Pr[Y = s]}.$$
\end{definition}

\subparagraph*{Linear Algebra.}  



We recap two concepts from linear algebra, namely, orthogonal subspaces and direct sum, since they become useful for studying the Fourier spectrum of functions defined over subspaces of $\F_2^n$.
For a subspace $\calA$ of $\F_2^n$, we denote the \textbf{orthogonal subspace} of $\calA$ as $\calA^\perp = \{\gamma \in \F_2^n : \inner{\gamma, \gamma'} = 0, \: \forall \: \gamma' \in \calA\}$. We denote by $\dim(\calA)$, the dimension of $\calA$ and $\codim(\calA) = n - \dim(\calA)$.

We now define the notion of the direct sum of two subspaces. 

\begin{definition}[Independence, Direct Sum]
Two subspaces $\calA,\calB$ are \emph{independent} if $a+b\neq 0$ for any non-trivial choice of $a\in \calA$ and $b\in \calB$. 
In addition, if $\{a+b: a\in \calA \text{ and } b\in \calB\} = \F_2^n$, we say that $\F_2^n$ is a \emph{direct sum} of $\calA$ and $\calB$, written as $\calA\oplus\calB = \F_2^n$.\footnote{Such a subspace $\calB$ is sometimes called a \textbf{complement} of $\calA$. However, this term can be confused with the orthogonal subspace/complement, so we avoid using this terminology.}
\end{definition}

If $\calA\oplus\calB = \F_2^n$, then $\dim(\calA) + \dim(\calB) = n$. It is also well known that $\dim(\calA^{\perp}) + \dim(\calA) = n$. Note, however, that $\calA^{\perp}$ and $\calA$ need not be independent,\footnote{this might be unexpected at first for those used to working over the reals, but it is essentially because the inner product over $\F_2$ allows self-orthogonal vectors in $\F_2^n$.} and often in fact must not be.


\begin{fact}
\label{fact:cosets-unique}
Let $\calA,\calB$ be independent subspaces of $\F_2^n$. Then for all distinct $b , b' \in \calB $, the affine subspaces $b + \calA$ and $b' + \calA$ are mutually disjoint.
\end{fact}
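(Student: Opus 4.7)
The plan is to prove this by contrapositive: assume the affine subspaces $b + \calA$ and $b' + \calA$ have nonempty intersection, and conclude that $b = b'$. This is a very short argument, so I will lay out the one nontrivial step carefully.

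First, I would suppose some element $x$ lies in $(b + \calA) \cap (b' + \calA)$. Then by definition there exist $a, a' \in \calA$ with $x = b + a = b' + a'$. Rearranging (working in $\F_2^n$, so addition equals subtraction), this gives
\[
(b + b') + (a + a') = 0.
\]
Now $b + b' \in \calB$ since $\calB$ is a subspace, and $a + a' \in \calA$ similarly. I would like to invoke the independence hypothesis directly here.

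The one subtlety I need to unpack is the definition of independence as stated in the excerpt: the authors say $\calA$ and $\calB$ are independent if $a + b \neq 0$ for any \emph{nontrivial} choice of $a \in \calA$ and $b \in \calB$. Reading "nontrivial" as "not both zero" (the only sensible interpretation, since if both are zero the sum is forced to be zero), this is equivalent to $\calA \cap \calB = \{0\}$: any common element $c$ would give the pair $(c, c)$ with $c + c = 0$, violating independence unless $c = 0$. Applying this to the equation above, I set $u := b + b' = a + a' \in \calA \cap \calB$, and conclude $u = 0$, hence $b = b'$. This contradicts the assumption that $b$ and $b'$ were distinct, completing the proof.

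The only step that deserves care is this translation between the "sum-is-nonzero" form of independence that the paper uses and the "trivial intersection" form that I actually apply. Other than that, the argument is a direct manipulation of cosets, and I do not anticipate any obstacles.
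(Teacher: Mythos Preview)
Your proposal is correct and follows essentially the same route as the paper: assume $b+a=b'+a'$, rearrange to a sum of an element of $\calA$ and an element of $\calB$ equaling zero, and invoke independence to force $b=b'$. Your detour through the equivalent formulation $\calA\cap\calB=\{0\}$ is fine (and your reading of ``nontrivial'' as ``not both zero'' is the intended one); the paper simply applies the sum-is-nonzero form of independence directly without making this translation explicit.
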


\begin{proof}
If $b+a = b'+a'$, then a non-trivial sum of a vector from each $\calA$ and $\calB$ equals zero, contradicting the fact that $\calA\oplus \calB = \F_2^n$.
\end{proof}

\subparagraph*{Fourier Analysis.} 
For $f:\F_2^n \to \R$, we can write $f$ in the Fourier representation as 
$$f(x) = \sum_{\gamma \in \F_2^n}\hat{f}(\gamma)\chi_\gamma(x),$$
where $\chi_\gamma(x) := (-1)^{\langle \gamma,x\rangle}$ and $\hat{f}(\gamma) = \E_x[f(x)\chi_\gamma(x)]$. 
We say $f$ has degree $d$ if $\max_{\gamma: \hat{f}(\gamma) \neq 0} \| \gamma\|_1 = d$, and we refer to  the degree $d$ part of $f$ by $f^{=d} (x):= \sum_{\norm{\gamma}_1 = d} \hat{f}(\gamma)\chi_\gamma(x)$.
For more on this topic, see \cite{AOBFbook}, which uses notation consistent with ours. 

\subparagraph*{Restrictions.} 
We are ultimately concerned with understanding the Fourier coefficients of a function when it is restricted to some affine subspace of $\F_2^n$. 
In the special case where the coordinates in a set $J\subseteq [n]$ are fixed using the vector $b\in \F_2^J$, we denote the restriction of $f$ thus obtained as the function $\Restricted{f}{J}{b} : \spn(\overline{J}) \rightarrow \R$, which can be written as $\Restricted{f}{J}{b}(x) = f(x+b).$
Next, we recall the formula of the Fourier coefficients of the restricted function. 
Note that $\{\chi_\gamma(x) := (-1)^{\langle\gamma,x\rangle} : \gamma \in \spn(\overline{J})\}$ is a Fourier basis of the restricted function.

\begin{fact}[Fourier Coefficients of Restricted Functions (see \cite{AOBFbook}, Proposition 3.21)]
\label{fact:fourier-coeffs-restricted}
For every $\gamma\in \spn(\overline{J})$ and $b \in \spn(J)$,
$$\hat{\Restricted{f}{J}{b}}(\gamma) =
\sum_{\beta \in \spn(J)} \hat{f}(\beta + \gamma)\chi_{\beta}(b).
$$
\end{fact}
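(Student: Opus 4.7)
The plan is to establish the identity by directly expanding $\Restricted{f}{J}{b}(x) = f(x+b)$ in the Fourier basis of $f$ on $\F_2^n$, decomposing the resulting sum over $\F_2^n$ using the direct sum decomposition $\F_2^n = \spn(J) \oplus \spn(\overline{J})$, and finally applying the orthogonality of characters on $\spn(\overline{J})$ to extract the coefficient of $\chi_\gamma$.

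First I would start from the definition of the Fourier coefficient for the restricted function, namely
$$\hat{\Restricted{f}{J}{b}}(\gamma) = \E_{x \in \spn(\overline{J})}\bigbracket{f(x+b)\,\chi_\gamma(x)},$$
and substitute the full Fourier expansion $f(y) = \sum_{\alpha\in \F_2^n}\hat f(\alpha)\chi_\alpha(y)$ at $y = x+b$. Using the multiplicativity $\chi_\alpha(x+b) = \chi_\alpha(x)\chi_\alpha(b)$, I then reindex the sum over $\alpha \in \F_2^n$ by uniquely writing $\alpha = \beta + \gamma'$ with $\beta \in \spn(J)$ and $\gamma' \in \spn(\overline{J})$; this is well-defined since $\spn(J) \oplus \spn(\overline{J}) = \F_2^n$.

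The second step is to simplify the characters under this decomposition. Because $\inner{\beta,x} = 0$ for all $\beta\in\spn(J)$ and $x\in\spn(\overline{J})$, we have $\chi_\beta(x) = 1$ and hence $\chi_\alpha(x) = \chi_{\gamma'}(x)$; symmetrically, $\chi_{\gamma'}(b) = 1$ for $b\in \spn(J)$, so $\chi_\alpha(b) = \chi_\beta(b)$. Substituting these back yields
$$\hat{\Restricted{f}{J}{b}}(\gamma) = \sum_{\gamma'\in\spn(\overline{J})}\sum_{\beta\in\spn(J)}\hat f(\beta+\gamma')\,\chi_\beta(b)\,\E_{x\in\spn(\overline{J})}\bigbracket{\chi_{\gamma'}(x)\chi_\gamma(x)}.$$

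Finally I would invoke orthogonality of characters on the group $\spn(\overline{J})$: since $\gamma,\gamma'\in\spn(\overline{J})$, the expectation equals $\indicator\{\gamma'=\gamma\}$. Collapsing the outer sum then gives exactly $\sum_{\beta\in\spn(J)}\hat f(\beta+\gamma)\chi_\beta(b)$, as claimed. There is no real obstacle here; the computation is routine, and the only point that demands a bit of care is the decomposition $\alpha = \beta+\gamma'$ together with the verification that $\chi_\beta$ is trivial on $\spn(\overline{J})$ and $\chi_{\gamma'}$ is trivial on $\spn(J)$, which is what makes the cross terms drop out cleanly.
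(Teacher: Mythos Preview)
Your proof is correct. The paper does not actually prove this fact---it is stated with a citation to \cite{AOBFbook}---but your argument is exactly the natural one, and it mirrors the proof the paper gives for the more general \Cref{fact:Fourier-coeff-of f_U-as-a-sum-of-fourier-coeff-of-f-from-vperp} (restriction to an arbitrary affine subspace): expand $f$ in its Fourier basis, pull out the shift, and collapse via orthogonality of characters on the restricted domain.
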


\subsection{Fourier Analysis on Subspaces}
We move to the general setting of restricting functions to arbitrary affine subspaces.\footnote{For an arbitrary subspace $\calV$, there is no canonical mapping between vectors and characters when $\calV\neq \F_2^n$, and we cannot simply define the vectors $\chi_\gamma$, for each $\gamma\in \calV$, as we did in the case of $\F_2^n$ to be the characters of $\calV$.} 
Let $\calU = \calV + \alpha$ be an affine subspace of $\F_2^n$. By the restriction of $f$ to $\calU$, we mean the function $f_\calU: \calV\rightarrow \R$ defined as $f_\calU(x) = f(x+\alpha).$ 

For the remainder of this section (and paper), let $\calW$ be such that $\calW \oplus \calV^\perp = \F_2^n$. 
For each element $\gamma\in \calW$, consider the function $\chi_\gamma: \calV\to \pmone$ as $\chi_\gamma(x) = (-1)^{\inner{\gamma,x}}$. 
It is easy to verify that $\{\chi_\gamma: \gamma\in \calW\}$ 
form an orthonormal basis of real-valued functions defined over $\calV$ under the inner product given by $\inner{p,q} = \E_{x\in \calV}[p(x)q(x)]$. We can therefore uniquely associate 
each vector $\gamma \in \calW$ with the function $\chi_\gamma$, and for $\calU = \alpha + \calV$, we can write 
\begin{align}
    \label{eq:fourier-decomp-affine-subspace}
    f_\calU(x) = \sum_{\gamma \in \calW}\hat{f_\calU}(\gamma)(-1)^{\inner{\gamma, x}}.
\end{align}

We now state the formal definition of $\delta$-regularity.

\begin{definition}[$\delta$-regularity]
\label{def:delta-regular}
Let $\calV$ be a subspace of $\F_2^n$ and $g: \calV \to \R$. For $\delta \geq 0$, we say $g$ is $\delta$-regular if
$\max_{\gamma \neq 0}\abs{\hat{g}(\gamma)} \leq \delta$.
\end{definition}

In this section, we present three separate formulas (\Cref{fact:coefficients-as-expectation}, \Cref{fact:Fourier-coeff-of f_U-as-a-sum-of-fourier-coeff-of-f-from-vperp} and \Cref{fact:affine-subspace-same-as-linear-transformation-with-restriction}) for the Fourier coefficients of $f_\calU$, each of which is useful in different contexts. 

First, using the above observations, we have the following simple formula for the Fourier coefficients of $f_{\alpha + \calV}$, which follows from the orthogonality of the $\chi_\gamma$ we have defined. 
\begin{fact}
\label{fact:coefficients-as-expectation}
Let $\calV,\calW$ be subspaces such that $\calW\oplus\calV^{\perp} = \F_2^n$ and $\calU = \alpha +\calV$.
For any $\gamma \in \calW$, we have that
$$\hat{f_{\calU}}(\gamma) = \E_{x \in \calV}[f(x+\alpha)\cdot(-1)^{\inner{\gamma, x}}] =  (-1)^{\inner{\gamma, \alpha}}\E_{x \in \calU}[f(x)\cdot(-1)^{\inner{\gamma, x}}].$$
\end{fact}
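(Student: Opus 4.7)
The plan is to derive both equalities directly from the orthonormality of the characters $\{\chi_\gamma : \gamma \in \calW\}$ as real-valued functions on $\calV$, which is asserted just above the statement but deserves explicit justification.

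First, I would verify the orthonormality claim. For any $\gamma, \gamma' \in \calW$, observe that
\[
\E_{x \in \calV}[\chi_\gamma(x)\chi_{\gamma'}(x)] = \E_{x \in \calV}\bigl[(-1)^{\inner{\gamma+\gamma',\, x}}\bigr],
\]
which equals $1$ if $\gamma + \gamma' \in \calV^\perp$ and $0$ otherwise (a standard character-sum fact on the subspace $\calV$). Since $\calW$ and $\calV^\perp$ are independent (their direct sum is $\F_2^n$), the only element of $\calW$ lying in $\calV^\perp$ is $0$. Hence $\chi_\gamma = \chi_{\gamma'}$ in the $\inner{\cdot,\cdot}$ sense only when $\gamma = \gamma'$, giving orthonormality. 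A dimension count ($|\calW| = 2^{\dim \calV}$ matches the dimension of real-valued functions on $\calV$) confirms this is indeed an orthonormal basis.

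With this in hand, the Fourier expansion \eqref{eq:fourier-decomp-affine-subspace} and the usual inner-product formula for coefficients give
\[
\hat{f_\calU}(\gamma) \;=\; \E_{x \in \calV}\bigl[f_\calU(x)\,\chi_\gamma(x)\bigr] \;=\; \E_{x \in \calV}\bigl[f(x+\alpha)\,(-1)^{\inner{\gamma,\,x}}\bigr],
\]
which is the first equality. For the second equality, I would perform the change of variables $y = x + \alpha$ (valid in $\F_2^n$ because addition is an involution on cosets). As $x$ ranges uniformly over $\calV$, $y$ ranges uniformly over $\calU = \alpha + \calV$, and
\[
(-1)^{\inner{\gamma,\, x}} \;=\; (-1)^{\inner{\gamma,\, y + \alpha}} \;=\; (-1)^{\inner{\gamma,\, \alpha}}\,(-1)^{\inner{\gamma,\, y}}.
\]
Pulling the sign $(-1)^{\inner{\gamma,\alpha}}$ out of the expectation yields the required identity.

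The only subtle point is the orthonormality step, but it reduces cleanly to the independence of $\calW$ and $\calV^\perp$, so there is no real obstacle; the rest is a direct expansion and an affine change of variables.
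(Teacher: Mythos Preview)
Your proposal is correct and matches the paper's approach: the paper simply states that the fact ``follows from the orthogonality of the $\chi_\gamma$ we have defined,'' and you have faithfully unpacked that remark by verifying orthonormality via the independence of $\calW$ and $\calV^\perp$, extracting the coefficient by the inner-product formula, and then performing the obvious affine change of variables. There is nothing to add.
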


 \Cref{fact:coefficients-as-expectation} represents a simple and analogous formula for Fourier coefficients of functions restricted to affine subspaces. It also highlights that the magnitude of the Fourier coefficients of a restricted function are unaffected by the choice for shift $\alpha$ as long it corresponds to the same affine subspace.

Our next formula, which shows how the Fourier coefficients of $f_\calU$ can be written in terms of the Fourier coefficients of $f$, is an easy consequence of \Cref{fact:coefficients-as-expectation}.

\begin{proposition}
\label{fact:Fourier-coeff-of f_U-as-a-sum-of-fourier-coeff-of-f-from-vperp}
Let $\calV,\calW$ be subspaces such that $\calW\oplus\calV^{\perp} = \F_2^n$ and $\calU = \alpha +\calV$. For any $\gamma \in \calW$, we have
    $$\hat{f_{\calU}}(\gamma) = \sum_{\beta \in \gamma + \calV^\perp} \hat{f}(\beta)\cdot (-1)^{\inner{\beta, \alpha}} .$$
\end{proposition}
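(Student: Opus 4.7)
The plan is to derive the formula directly from the expectation expression for $\hat{f_\calU}(\gamma)$ given in \Cref{fact:coefficients-as-expectation} by expanding $f$ in its Fourier series over $\F_2^n$ and then identifying which characters are trivial on $\calV$.

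First I would start from
$$\hat{f_{\calU}}(\gamma) = \E_{x \in \calV}\bigl[f(x+\alpha)\cdot(-1)^{\inner{\gamma, x}}\bigr],$$
and substitute the Fourier expansion $f(y) = \sum_{\beta \in \F_2^n} \hat{f}(\beta) (-1)^{\inner{\beta,y}}$ at $y = x + \alpha$. Using bilinearity of the inner product to pull out the $(-1)^{\inner{\beta,\alpha}}$ factor and interchanging the sum with the expectation, this yields
$$\hat{f_{\calU}}(\gamma) = \sum_{\beta \in \F_2^n} \hat{f}(\beta)(-1)^{\inner{\beta,\alpha}} \E_{x \in \calV}\bigl[(-1)^{\inner{\beta+\gamma, x}}\bigr].$$

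The key step is then evaluating the inner expectation. The map $x \mapsto (-1)^{\inner{\beta+\gamma,x}}$ is a character of the group $\calV$, so the expectation is $1$ if this character is trivial on $\calV$ and $0$ otherwise. Triviality amounts to $\inner{\beta+\gamma, v} = 0$ for every $v \in \calV$, i.e., $\beta + \gamma \in \calV^\perp$. Since we are working over $\F_2$, this is equivalent to $\beta \in \gamma + \calV^\perp$. Substituting back retains precisely the terms indexed by $\beta \in \gamma + \calV^\perp$, giving the claimed identity.

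I do not anticipate any real obstacle here; the argument is essentially a single application of character orthogonality on the subgroup $\calV$ of $\F_2^n$, combined with the observation that shifting $y = x+\alpha$ only introduces an overall sign $(-1)^{\inner{\beta,\alpha}}$. The only minor thing to be careful about is that $\gamma \in \calW$ is a fixed representative of its coset modulo $\calV^\perp$ (since $\calW \oplus \calV^\perp = \F_2^n$), so the sum ranges over a well-defined coset $\gamma + \calV^\perp$, matching exactly the index set asserted in the proposition.
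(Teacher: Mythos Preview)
Your proposal is correct and follows essentially the same argument as the paper: start from \Cref{fact:coefficients-as-expectation}, expand $f$ in its Fourier basis, interchange sum and expectation, and use that $\E_{x\in\calV}[(-1)^{\inner{\beta+\gamma,x}}]$ equals $1$ when $\beta+\gamma\in\calV^\perp$ and $0$ otherwise. The paper's proof is line-for-line the same computation.
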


\begin{proof}
Using \Cref{fact:coefficients-as-expectation}, we can write
\begin{align*}
\hat{f_{\calU}}(\gamma)  = \E_{x\in \calV}[f(x+\alpha)\cdot (-1)^{\inner{\gamma,x}}]  
&= \E_{x\in \calV}\sum_{\beta}\hat{f}(\beta)(-1)^{\inner{\beta,x+\alpha}}(-1)^{\inner{\gamma,x}}\\
&= \sum_{\beta}\hat{f}(\beta)(-1)^{\inner{\beta,\alpha}}\E_{x\in \calV}[(-1)^{\inner{\beta+\gamma,x}}] \\
&= \sum_{\beta\in\gamma+\calV^{\perp}}\hat{f}(\beta)(-1)^{\inner{\beta,\alpha}},
\end{align*}
where the last equality follows by observing that $\E_{x\in\calV}\left[(-1)^{\inner{\gamma+\beta,x}}\right] = 1$ if $\beta\in \gamma+\calV^{\perp}$, and zero otherwise. 
\end{proof}

We note that \Cref{fact:Fourier-coeff-of f_U-as-a-sum-of-fourier-coeff-of-f-from-vperp} gives a formula analogous to \Cref{fact:fourier-coeffs-restricted} for restrictions to general affine subspaces.
This fact will be useful to construct functions and argue that they never become $\delta$-regular when restricted to any sufficiently large subspace.
%
Before we give our final formula, we highlight one particular choice of $\calW$ such that $\calW \oplus \calV^\perp = \F_2^n$.

\begin{definition}[$M$ mapping $\calV$ to $\spn(J)$]
\label{dfn:M}
Given a $k$-dimensional subspace $\calV$, let $B = \{\beta_1,\ldots,\beta_n\}$ be a basis for $\F_2^n$ such that $\calV = \spn(\{\beta_1,\ldots,\beta_k\}$. For any subset $J\subseteq [n]$ of size $k$, let $M: \F_2^n \to \F_2^n$ be an invertible linear map such that $\{M\beta_i:i\in[k]\} = \{e_j:j\in J\}$.
\end{definition}

\begin{proposition}[Choice of $\calW$]
\label{fact:choice-for-W}
Let $\calV$, $M$ and $J$ be defined as in \Cref{dfn:M}. The subspaces $\calW = \{M^{\transpose} \gamma: \gamma \in \spn(J)\}$ and $\calV^{\perp}$ are independent, and $\calW\oplus\calV^\perp = \F_2^n$.
\end{proposition}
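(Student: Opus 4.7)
The plan is to establish the direct sum decomposition in two stages: first verify that the two subspaces have complementary dimensions summing to $n$, and then show they intersect trivially. Once both facts are in place, independence together with the dimension count forces $\calW \oplus \calV^\perp = \F_2^n$.

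For the dimension count, I would note that $\dim(\calV^\perp) = n - k$ and, since $M^\transpose$ is invertible, $\dim(\calW) = \dim(\spn(J)) = k$, so $\dim(\calW) + \dim(\calV^\perp) = n$. Hence it suffices to prove $\calW \cap \calV^\perp = \{0\}$, because any independent pair of subspaces with dimensions summing to $n$ must span $\F_2^n$.

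The core of the argument is the intersection computation. Suppose $w \in \calW \cap \calV^\perp$. Write $w = M^\transpose \gamma$ for some $\gamma \in \spn(J)$. Since $w \in \calV^\perp$, we have $\inner{w, \beta_i} = 0$ for every $i \in [k]$, where $\{\beta_1, \ldots, \beta_k\}$ is the prescribed basis of $\calV$. Using the identity $\inner{M^\transpose \gamma, \beta_i} = \inner{\gamma, M\beta_i}$ and the hypothesis that $\{M\beta_i : i \in [k]\} = \{e_j : j \in J\}$, this means $\inner{\gamma, e_j} = 0$ for every $j \in J$, i.e., the coordinates of $\gamma$ indexed by $J$ all vanish. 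But $\gamma \in \spn(J)$ means those are the \emph{only} coordinates that may be nonzero, so $\gamma = 0$, and hence $w = 0$.

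I do not expect any real obstacle here: the statement is essentially a routine linear-algebra verification that the transpose of an invertible map pairs the image of $\calV$ correctly with $\calV^\perp$. The only subtlety to watch out for is that orthogonality over $\F_2$ allows self-orthogonal vectors, so one should not try to conclude the result from any ambient inner-product structure — the proof must proceed through the explicit basis and the identity $\inner{M^\transpose \gamma, v} = \inner{\gamma, Mv}$, as above.
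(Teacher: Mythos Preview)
Your proof is correct and follows essentially the same approach as the paper: both arguments establish independence by taking $w = M^\transpose \gamma \in \calV^\perp$, invoking the identity $\inner{M^\transpose \gamma, v} = \inner{\gamma, Mv}$ for $v \in \calV$, and using that $M$ carries $\calV$ onto $\spn(J)$ to force $\gamma = 0$, after which the dimension count finishes the job. The only cosmetic difference is that you plug in the basis vectors $\beta_i$ explicitly while the paper phrases it for an arbitrary nonzero $v \in \calV$.
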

\begin{proof}
We first show that $\calW$ and $\calV^{\perp}$ are independent.  Suppose that $M^\transpose \gamma + u = 0$, where $\gamma \in \spn(J)$ and $u\in \calV^{\perp}$. For any $v\in \calV$ such that $v \neq 0$, we have
$$0 = \inner{v,M^\transpose \gamma + u} = \inner{v,M^\transpose \gamma} = \inner{Mv,\gamma},$$
which is impossible unless $\gamma=0$ since this implies $Mv \in \spn(J)^{\perp} = \spn(\overline{J})$ and $Mv \neq 0$. This in turn implies that $u=0$ and therefore that $\calW$ and $\calV^{\perp}$ are independent. The claim follows by noting that $\dim(\calW\oplus\calV^{\perp}) = \dim(\calW) + \dim(\calV^{\perp}) = k+ n-k= n$.
\end{proof}
 
Finally, we show that the Fourier coefficients of a function restricted to an affine subspace are the same as the Fourier coefficients of the function $f \circ M$ under a suitable (normal) restriction and for a particular choice of $M$.

\begin{proposition}
\label{fact:affine-subspace-same-as-linear-transformation-with-restriction}
Let $\calV$, $M$ and $J$ be defined as in \Cref{dfn:M} and $\calU=\alpha+\calV$. For any $\gamma\in \spn(J)$, we have
$\left |\hat{f_{\calU}}(M^{\transpose}\gamma)\right | = \left|\hat{h_{\calU'}}(\gamma)\right|,$
where $h = f\circ M^{-1}$ and $\calU' = \{Mu: u\in\calU\} = M\alpha + \spn(J)$ is a standard restriction.
\end{proposition}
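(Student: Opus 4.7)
The plan is to establish the equality $\hat{f_{\calU}}(M^{\transpose}\gamma) = \hat{h_{\calU'}}(\gamma)$ directly (even without absolute values) via a change of variables, using \Cref{fact:coefficients-as-expectation} on both sides.

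First, I would note that $M^\transpose \gamma$ belongs to the subspace $\calW$ defined in \Cref{fact:choice-for-W}, which satisfies $\calW \oplus \calV^\perp = \F_2^n$, so \Cref{fact:coefficients-as-expectation} applies and gives
\[
\hat{f_{\calU}}(M^{\transpose}\gamma) \;=\; \E_{x \in \calV}\bigl[f(x+\alpha)\cdot(-1)^{\inner{M^{\transpose}\gamma,\,x}}\bigr]
\;=\; \E_{x \in \calV}\bigl[f(x+\alpha)\cdot(-1)^{\inner{\gamma,\,Mx}}\bigr],
\]
using the adjoint identity $\inner{M^\transpose\gamma, x} = \inner{\gamma, Mx}$.

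Next, I would perform the change of variables $y = Mx$. By \Cref{dfn:M}, $M$ restricted to $\calV$ is a linear bijection between $\calV$ and $\spn(J)$ (since it sends a basis of $\calV$ to $\{e_j : j \in J\}$), so the expectation over $x \in \calV$ converts to an expectation over $y \in \spn(J)$. Writing $f(M^{-1}y + \alpha) = f(M^{-1}(y + M\alpha)) = h(y + M\alpha)$, this yields
\[
\hat{f_{\calU}}(M^{\transpose}\gamma) \;=\; \E_{y \in \spn(J)}\bigl[h(y + M\alpha)\cdot(-1)^{\inner{\gamma, y}}\bigr].
\]

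Finally, I would recognize the right-hand side as exactly $\hat{h_{\calU'}}(\gamma)$ by applying \Cref{fact:coefficients-as-expectation} to $h$ on the affine subspace $\calU' = M\alpha + \spn(J)$. Here we use the natural choice $\calW' = \spn(J)$ and $\calV' = \spn(J)$, for which $\calW' \oplus (\calV')^\perp = \spn(J) \oplus \spn(\overline{J}) = \F_2^n$, so $\gamma \in \spn(J) = \calW'$ is a valid Fourier index. Taking absolute values gives the claim. There is no real obstacle; the only thing to verify carefully is that the change-of-variables is legitimate (i.e.\ that $M$ maps $\calV$ bijectively onto $\spn(J)$), which is immediate from \Cref{dfn:M}.
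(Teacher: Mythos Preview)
Your proof is correct and follows essentially the same route as the paper: apply \Cref{fact:coefficients-as-expectation}, move $M^\transpose$ across the inner product, and change variables via $M$. The only cosmetic difference is that the paper uses the second form of \Cref{fact:coefficients-as-expectation} (an expectation over $\calU$ itself, carrying a sign $(-1)^{\inner{\gamma,\alpha}}$ that forces the absolute values), whereas you use the first form (an expectation over $\calV$), which lets you obtain the exact equality $\hat{f_{\calU}}(M^{\transpose}\gamma)=\hat{h_{\calU'}}(\gamma)$ before taking absolute values.
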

\begin{proof}
Repeatedly using \Cref{fact:coefficients-as-expectation}, we have that
\begin{align*}
\abs{\hat{f_\calU}(M^{\transpose}\gamma)} = \bigabs{\E_{x\in \calU}\left[f(x)(-1)^{\langle M^{\transpose}\gamma, x\rangle}\right]} 
&= \bigabs{\E_{x\in \calU}\left[f(x)(-1)^{\inner{\gamma, Mx}}\right]} \\
&= \bigabs{\E_{z\in\calU'}\left[f(M^{-1}z)(-1)^{\inner{\gamma, z}}\right]} = \abs{\hat{g_{\calU'}}(\gamma)}. \qedhere
\end{align*}
\end{proof}

\Cref{fact:affine-subspace-same-as-linear-transformation-with-restriction} implies the following important corollary.

\begin{corollary}
\label{cor:delta-reg-iff-exists-M-b}
    There exists an affine subspace $\calU$ of dimension $k$ such that $f_\calU$ is $\delta$-regular if and only if there exists an invertible linear map $M: \F_2^n \to \F_2^n$, a set $J\subseteq [n]$ of size $k$, and a fixing of coordinates outside $J$ given by $b\in  \F_2^{\overline{J}}$ such that the function $\Restricted{h}{\overline{J}}{b}$ is $\delta$-regular, where $h= f\circ M$.
\end{corollary}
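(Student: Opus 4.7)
The plan is to derive both directions of the iff directly from \Cref{fact:affine-subspace-same-as-linear-transformation-with-restriction}, which already establishes the key identity: for any affine subspace $\calU = \alpha+\calV$ of dimension $k$ and the invertible map $M$ produced by \Cref{dfn:M}, the magnitudes of the Fourier coefficients of $f_\calU$ agree with those of $(f\circ M^{-1})_{\calU'}$ on the standard flat $\calU' = M\alpha+\spn(J)$, via the linear bijection $\gamma\mapsto M^{\transpose}\gamma$ from $\spn(J)$ onto $\calW$. Since $M^\transpose$ is invertible, this bijection sends the zero vector to itself, so $\delta$-regularity (a condition on \emph{non-trivial} coefficients) transfers cleanly in either direction; all that remains is to recognize the standard-flat restriction as a $\Restricted{\cdot}{\overline{J}}{b}$ expression.

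For the $(\Rightarrow)$ direction, I would start from $\calU = \alpha+\calV$ of dimension $k$ with $f_\calU$ being $\delta$-regular, fix any basis of $\F_2^n$ whose first $k$ vectors span $\calV$, pick any $J\subseteq[n]$ of size $k$, and build the invertible map $N$ from \Cref{dfn:M}. Setting the corollary's $M := N^{-1}$ and $h := f\circ M = f\circ N^{-1}$, I would take $b\in \F_2^{\overline{J}}$ to be the $\overline{J}$-projection of $N\alpha$, with zero-extension $\tilde{b}\in\F_2^n$. Since $\tilde{b} - N\alpha \in \spn(J)$, the affine subspace $\tilde{b}+\spn(J)$ coincides with $\calU' = N\alpha+\spn(J)$, making $\Restricted{h}{\overline{J}}{b}$ literally equal to $h_{\calU'}$ as a function on $\spn(J)$; \Cref{fact:affine-subspace-same-as-linear-transformation-with-restriction} then transfers $\delta$-regularity from $f_\calU$.

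For the $(\Leftarrow)$ direction, I would reverse this construction: given invertible $M$, a set $J$ of size $k$, and $b\in\F_2^{\overline{J}}$ with $\Restricted{h}{\overline{J}}{b}$ being $\delta$-regular and $h = f\circ M$, let $\tilde{b}$ be the zero-extension of $b$ to $\F_2^n$ and set $\calV := M(\spn(J))$, $\alpha := M\tilde{b}$, $\calU := \alpha+\calV$. Choosing the basis $\beta_i := M e_{j_i}$ (with $j_1,\ldots,j_k$ enumerating $J$ and the rest enumerating $\overline{J}$) produces a basis of $\F_2^n$ whose first $k$ vectors span $\calV$, and the map $N := M^{-1}$ satisfies \Cref{dfn:M} by construction. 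Applying \Cref{fact:affine-subspace-same-as-linear-transformation-with-restriction} with $N$ then identifies the Fourier magnitudes of $f_\calU$ with those of $h_{\tilde{b}+\spn(J)} = \Restricted{h}{\overline{J}}{b}$, yielding $\delta$-regularity of $f_\calU$.

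I expect no real obstacle; the whole corollary is a notational reconciliation between ``general affine subspace'' and ``standard restriction after an invertible change of coordinates'', with \Cref{fact:affine-subspace-same-as-linear-transformation-with-restriction} doing all the substantive work. The only point requiring care is verifying that the bijection of Fourier indices $\gamma\mapsto M^\transpose\gamma$ between $\spn(J)$ and $\calW$ sends the origin to the origin, so that the condition ``non-trivial Fourier coefficients are at most $\delta$'' is preserved in both directions; this is immediate from the invertibility of $M^\transpose$.
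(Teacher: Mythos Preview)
Your proposal is correct and follows exactly the approach the paper intends: the paper states this corollary immediately after \Cref{fact:affine-subspace-same-as-linear-transformation-with-restriction} without any separate proof, simply noting that it ``implies the following important corollary.'' Your careful handling of the notational mismatch (the $M$ of \Cref{dfn:M} becomes $N$, with the corollary's $M=N^{-1}$) and the observation that the linear bijection $\gamma\mapsto N^{\transpose}\gamma$ fixes the origin are precisely the details needed to make the implication rigorous.
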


We use \Cref{cor:delta-reg-iff-exists-M-b} crucially in the proof of \Cref{thm:bounded-deg-d-upper-bound-intro}, wherein we construct $M$ and $b$ such that $\Restricted{f\circ M}{\overline{[k]}}{b}$ has small Fourier coefficients.
In the proof of this theorem we must understand the Fourier coefficients of $f\circ M$ in terms of the Fourier coefficients of $f$. The following fact gives an identity relating the Fourier coefficients of the two functions. 

\begin{fact}[\cite{AOBFbook}, Exercise 3.1]

\label{fact:fourier-coeffs-linear-transformation}

Let $M$ be an invertible linear transformation, and consider the function $g = f \circ M^{-1}: \F_2^n \to \R$. Then we have that $\hat{g}(\gamma) = \hat{f}(M^{\transpose} \gamma).$

\end{fact}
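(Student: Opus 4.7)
The plan is to unpack the definition of $\hat{g}(\gamma)$ and perform a change of variables using the invertibility of $M$. Starting from
$$\hat{g}(\gamma) = \E_{x \in \F_2^n}[g(x) \chi_\gamma(x)] = \E_{x \in \F_2^n}\bigl[f(M^{-1}x) (-1)^{\langle \gamma, x \rangle}\bigr],$$
I would substitute $y = M^{-1}x$, equivalently $x = My$. Since $M$ is invertible, the map $y \mapsto My$ is a bijection of $\F_2^n$, so as $x$ ranges uniformly over $\F_2^n$, so does $y$. This is the only nontrivial step, and it is routine once one recalls that invertible linear maps preserve the uniform distribution on $\F_2^n$.

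After the substitution, the expectation becomes $\E_y[f(y) (-1)^{\langle \gamma, My \rangle}]$. The final step is to move $M$ to the other side of the inner product: by definition of the transpose, $\langle \gamma, My \rangle = \langle M^{\transpose} \gamma, y \rangle$ (in $\F_2$, where the inner product is a bilinear form). Substituting this yields
$$\hat{g}(\gamma) = \E_{y \in \F_2^n}\bigl[f(y) (-1)^{\langle M^{\transpose} \gamma, y \rangle}\bigr] = \hat{f}(M^{\transpose} \gamma),$$
which is the desired identity.

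There is no real obstacle here; the statement is essentially a one-line verification once the change of variables is justified. The only thing worth being careful about is that the equality $\langle \gamma, Mx \rangle = \langle M^{\transpose} \gamma, x \rangle$ holds over $\F_2$ (with the inner product read modulo $2$), which is the standard definition of the transpose of a linear map over any field.
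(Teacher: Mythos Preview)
Your proof is correct and follows exactly the same approach as the paper: expand the definition of $\hat{g}(\gamma)$, change variables via the bijection induced by $M$, and move $M$ across the inner product using the defining property of the transpose. In fact your write-up matches the stated claim more directly than the paper's own computation, which swaps the roles of $M$ and $M^{-1}$ and arrives at $\hat f(M^{-\transpose}\gamma)$ --- equivalent after relabeling, but your version is cleaner.
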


\begin{proof}
We have that
\begin{align*}
    \hat{g}(\gamma) = \E_x[g(x) \chi_\gamma(x)] &= \E[f(Mx) \chi_\gamma(x)] \\
    &= \E_y[f(y) \chi_\gamma(M^{-1}y)] \\
    &= \E_y[f(y) \chi_{M^{-\transpose}\gamma}(y)] = \hat{f}(M^{-\transpose} \gamma),
\end{align*}
where we have used the fact that $\chi_\gamma(M^{-1}y) = (-1)^{\inner{\gamma,M^{-1} y}} = (-1)^{\inner{M^{-\transpose}\gamma, y}}$. 
\end{proof}

\section{Upper Bound on $\regnum(f,\delta)$}\label{sec:proof-of-thm}

 Now we prove our main theorem, restated here for convenience.
\main*

First, we gain some intuition from degree one functions.
\subparagraph*{Base Case/Toy Example.} Suppose $f$ is a Fourier degree one function. In this case our function has the form 
$$f(x) = \hat{f}(0) + \sum_i \hat{f}({e_i})(-1)^{x_i}.$$ 

For a parameter $t \geq 1$ and a subset $S\subseteq [t]$, consider the sum $g_S = \hat{f}(0) + \sum_{i \in S}\hat{f}(e_i)$. Note that $g_S = \E[f(x) | x_i = 0 \:\: \forall i \in S] \in [-1,1]$. 
The pigeonhole principle implies that for $t = \Omega(\log 1/\delta)$ there must exist two distinct sets $S,S'$ such that the difference $|g_S - g_{S'}| \leq \delta$. 
We can further write $g_S - g_{S'} = \sum_{i\in S\triangle S'}\hat{f}(e_i)(-1)^{|\{i\}\cap S'|}$.  

We now use the set $S\triangle S'$ and the signs to construct an affine subspace where at least one Fourier coefficient will have small magnitude. 
Assume without loss of generality that $1\in S\setminus S'$ and $S\triangle S' = [t']$ for some $t'\leq t$.
Consider restricting $f$ to the affine subspace $\calU$ defined by the linear equations $x_1 + x_i = b_i$ for each $i\in \{2,\ldots,t'\}$, where $b_i = |\{i\}\cap S'|$. 
We can reason about the Fourier spectrum of $f_\calU$ by plugging in $x_{i} = b_i + x_{1}$.
Under this restriction, we see that the Fourier coefficients of $e_{t'+1},\ldots,e_n$ stay the same, and the new Fourier coefficient of $e_1$ is exactly equal to 
$$\hat{f}(e_{1}) + \sum_{i = 2}^{t'} \hat{f}({e_i})(-1)^{b_i} = g_S - g_{S'},$$
which we observed has magnitude at most $\delta$. 
Repeatedly applying this argument roughly $n(\log(1/\delta))^{-1}$ times for the remaining standard basis vectors and fixing remaining coordinates arbitrarily, we obtain an affine subspace of dimension at least $\Omega\left (\frac{n}{\log(1/\delta)}\right)$.

\Cref{thm:bounded-deg-d-upper-bound-intro} is proved via induction using the following lemma.

\begin{lemma}
\label{lemma:degree-d-small}
For $\tau\in (0,1)$ and any degree $d$ function $f:\F_2^n \to [-1,1]$, there exists an invertible linear map $M: \F_2^n \rightarrow \F_2 ^n$, a set $J\subseteq[n]$ with size at least $\frac{d}{4e}\left(\frac{n}{\log 5/\tau}\right)^{1/d}$, and $b \in \spn (\overline{J})$ such that $h = f\circ M$ satisfies 
$$\abs{\hat{\Restricted{h}{\overline{J}}{b}}(\gamma)} \leq 
\begin{cases}
\tau \quad \text{ if } \|\gamma\|_1 = d,\\
0 \quad \text{ for all } \|\gamma\|_1 > d.
\end{cases}
$$
\end{lemma}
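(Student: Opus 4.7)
My plan is to prove this lemma by an iterative pigeonhole argument that directly generalizes the toy degree-one example presented above. The map $M$, the set $J$, and the fixing $b$ will be constructed simultaneously by processing disjoint blocks $G_1,\ldots,G_k\subseteq[n]$ whose sizes $s_\ell:=|G_\ell|$ grow with $\ell$ to accommodate the growing number of constraints. Each $G_\ell$ will contribute one new ``basis vector'' $v_\ell:=M^{-\transpose}e_{j_\ell}$ (for a chosen $j_\ell\in J\cap G_\ell$) supported entirely inside $G_\ell$, while the remaining $|G_\ell|-1$ coordinates of $G_\ell$ lie in $\overline{J}$ and their $b$-values will be set during step $\ell$.

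At step $\ell$, the affine restriction being imposed lives entirely within $G_\ell$, so the degree-$d$ coefficients already made small in previous steps---which involve only $v_1,\ldots,v_{\ell-1}$, supported in $G_1\cup\cdots\cup G_{\ell-1}$---are unaffected. The new degree-$d$ coefficients to control are the $\binom{\ell-1}{d-1}$ values $\hat{f}(v_\ell+w_J)$ with $w_J:=\sum_{i\in J}v_i$ and $J$ ranging over $(d-1)$-subsets of $[\ell-1]$. For each $R\subseteq G_\ell$ and each such $J$ I define a quantity $\psi_J(R)\in[-1,1]$ as an expectation of the currently-restricted function under a specific further affine restriction determined by $R$ and $J$, analogous to the toy quantity $g_S=\hat{f}(0)+\sum_{i\in S}\hat{f}(e_i)$ and bounded via \Cref{fact:coefficients-as-expectation}. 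Via \Cref{fact:fourier-coeffs-restricted} together with \Cref{fact:Fourier-coeff-of f_U-as-a-sum-of-fourier-coeff-of-f-from-vperp}, the difference $\psi_J(R_1)-\psi_J(R_2)$ equals exactly the Fourier coefficient at $v_\ell+w_J$ in the further-restricted function once I set $v_\ell:=\sum_{i\in R_1\triangle R_2}e_i$ and read the fixing on $G_\ell$ off $R_1,R_2$. Partitioning $[-1,1]^{\binom{\ell-1}{d-1}}$ into cells of side $\tau$ and pigeonholing the $2^{s_\ell}$ vectors $\psi(R)$ then produces such $R_1\neq R_2$ whenever $s_\ell\geq\binom{\ell-1}{d-1}\log_2(5/\tau)$.

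Summing via the hockey-stick identity $\sum_{\ell=1}^k\binom{\ell-1}{d-1}=\binom{k}{d}$ gives total budget $\sum_\ell s_\ell\leq\binom{k}{d}\log_2(5/\tau)+k$; demanding this be at most $n$ and using $\binom{k}{d}\geq(k/d)^d$ yields the claimed $k\geq\tfrac{d}{4e}(n/\log(5/\tau))^{1/d}$ after tracking constants. The degree-$>d$ condition on the restricted function is automatic: each $v_i$ is nonzero and supported in the disjoint block $G_i$, and together with the other dual-basis vectors indexed by $G_i\cap\overline{J}$ it spans $\spn(G_i)$, so for any $\gamma\in\spn(J)$ with $\|\gamma\|_1>d$ and any $\beta\in\spn(\overline{J})$ the vector $M^{-\transpose}(\gamma+\beta)$ has Hamming weight at least $|\supp\gamma|>d$, forcing $\hat{f}(M^{-\transpose}(\gamma+\beta))=0$ and hence the corresponding Fourier coefficient of $\Restricted{h}{\overline{J}}{b}$ to vanish. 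The hard part will be pinning down the $\psi_J(R)$'s precisely so that their pairwise differences match the intended Fourier coefficients of the restricted function, and simultaneously verifying that the step-$\ell$ restriction preserves the smallness established in earlier steps; both will require careful iterated use of the affine-restriction formulas from Section~\ref{section:prelim}.
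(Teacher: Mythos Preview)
Your proposal is correct and follows essentially the same approach as the paper. The paper packages the pigeonhole step as a separate claim (\Cref{claim:degree-d-subspace-construction-helper}) applied to the \emph{current} restricted function, and then verifies directly via a Fourier calculation that the new restriction preserves the previously-small coefficients and keeps the degree at $d$; your pre-designated disjoint blocks $G_\ell$ with a block-diagonal $M^{-\transpose}$ are a clean way to make those two verifications nearly automatic, but the iterative pigeonhole with $\binom{\ell-1}{d-1}$ constraints at step $\ell$ and the hockey-stick accounting are identical. One small slip: to extract the bound on $k$ from $\binom{k}{d}\log(5/\tau)\le n$ you need the \emph{upper} bound $\binom{k}{d}\le (ek/d)^d$, not the lower bound $\binom{k}{d}\ge (k/d)^d$.
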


We now prove \Cref{thm:bounded-deg-d-upper-bound-intro} using \Cref{lemma:degree-d-small}.

\begin{proof}
The proof proceeds by induction over the degree. Our inductive hypothesis is that for any $\delta > 0$ and any degree $d$ function $f$, there exists an invertible linear map $M$, a set $I\subseteq [n]$, and $b\in \spn(\overline{I})$ such that the following two items hold:
\begin{enumerate}
    \item $\Restricted{h}{\bar I}{b}$ is $\delta$-regular, where $h = f\circ M$, and
    \item for 
    $C_d = \sum_{i=1}^d (i!)^{-1} $, we have 
    $$|I| \geq \frac{n^{1/d!}}{(8e)^{C_{d-1}}\bigparen{\log (n/\delta)}^{C_d}}.$$ 
\end{enumerate}

Note that $ C_d \leq e-1 < 2$ for all $d \geq1$.
The existence of the desired affine subspace is then given by \Cref{cor:delta-reg-iff-exists-M-b},
and its dimension is equal to $|I| \geq \Omega\bigparen{n^{1/d!} \left(\log(n/\delta)\right)^{-2}}$. 

The base case corresponds to the degree being one. Let us apply \Cref{lemma:degree-d-small} for degree one with $\tau = \delta$ and denote $g = f\circ M$, where $M$ is the linear map $M$ promised by the lemma. Additionally, we have a set $J$ of size at least $\frac{n}{4e\log 5/\delta} \geq \Omega\bigparen{\frac{n}{\log n/\delta }}$, and $b \in \spn(\overline{J})$ such that $$\abs{\hat{\Restricted{g}{\overline{J}}{b}}(\gamma)} \leq 
\begin{cases}
\tau \quad \text{ if } \|\gamma\|_1 = 1,\\
0 \quad \text{ for all } \|\gamma\|_1 > 1.
\end{cases}\implies \abs{\hat{\Restricted{g}{\overline{J}}{b}}(\gamma)} \leq \delta, \text{ for all } \gamma \neq 0.$$  

Assuming both items hold for some degree $d-1$, we show them for degree $d$. Applying \Cref{lemma:degree-d-small} with degree $d$ and $\tau = n^{-d}\delta/3$, we denote $p:= \Restricted{(f\circ M)}{\overline J}{b}$, 
where $M$, $J$ and $b$ are as promised by the lemma. 
Note that, by \Cref{lemma:degree-d-small}, $p$ has degree at most $d$, and for any $\gamma$ with $\|\gamma\|_1 = d$, we have, $\abs{\hat{p}(\gamma)} \leq \delta/(3n^d)$.
Consider the functions $p^{< d}$ and $p^{= d}$, which are the degree at most $d-1$ part of $p$ and the degree $d$ part of $p$, respectively.
We note that $\frac{p^{< d}}{(1+\delta/3)}$ is bounded in the interval $[-1,1]$ because for any $x$,
$$\abs{p^{< d}(x)} \leq \abs{p(x)} + \abs{p^{= d}(x)} \leq 1+ \sum_{\gamma: \|\gamma\|_1 =d} \abs{\hat{p}(\gamma)} \leq 1 + \frac{\delta}{3}.$$
Applying the inductive hypothesis\footnote{Technically, $\frac{p^{<d}}{1 + \delta/3} : \spn(J) \to [-1,1]$. However, we can abuse notation slightly and consider it as a function from $\F_2^{J}$ to $[-1,1]$ in order to apply the inductive hypothesis.} to $\frac{p^{<d}}{1 + \delta/3}$ for the choice of $\delta/3$, we get
a linear map $M'$, a set $I\subseteq J$, and $b'\in \spn(J \setminus I)$ such that $\Restricted{\left (\frac{q}{1 + \delta /3}\right )}{J\setminus I}{b'}$ is $\delta/3$-regular, where $q := p^{<d}\circ M'$. 
Therefore, for any $\gamma \neq 0$, we have $\abs{\hat{\Restricted{q}{J\setminus I}{b'}}(\gamma)} \leq \left(1+\frac{\delta}{3}\right)\frac{\delta}{3} < \frac{2\delta}{3}$.
Denoting $p' := p \circ M'$ and $r := p^{=d} \circ M'$, we have for any $\gamma \neq 0$ that
\begin{align*}
\bigabs{\hat{\Restricted{p'}{J\setminus I}{b'}}(\gamma)} \leq \bigabs{\hat{\Restricted{q}{J\setminus I}{b'}}(\gamma)} + \bigabs{\hat{\Restricted{r}{J\setminus I}{b'}}(\gamma)} < 2\delta/3 + \sum_{\beta:\|\beta\|_1 = d}\abs{\hat{g}(\beta)} \leq \delta.
\end{align*}
This shows that $\Restricted{p'}{J\setminus I}{b'}$ is $\delta$-regular. 
Moreover, if we extend $M'$ to act as the identity map on the coordinates in $\overline{J}$, we can write 
\begin{align*}
    \Restricted{p'}{J\setminus I}{b'}(x) = \Restricted{(p\circ M')}{J\setminus I}{b'}(x) &= p(M'(x + b')) \\
    &= \Restricted{(f\circ M)}{J}{b}(M'(x + b')) = f(MM'(x + b' + b)),
\end{align*}
which implies that item 1 of the inductive hypothesis is satisfied by applying the linear map $MM'$ and restricting to the set $I$ by fixing the coordinates outside according to $b+b'$.

We now show that the size of $I$ satisfies item 2 above. Note that~\Cref{lemma:degree-d-small} promises that $|J| \geq \frac{d}{4e}\left(\frac{n}{\log (15n^d/\delta)}\right)^{1/d}$. Moreover, we have 
$$\log (15n^d/\delta) \leq d\log n/\delta + \log 15 \leq 4d\log n/\delta,$$ 
where the last inequality follows for sufficiently large $n$. 
Therefore, $|J| \geq \frac{1}{8e}\left(\frac{n}{\log ( n/\delta)}\right)^{1/d}$. Moreover, we assume without loss of generality that $3|J| \leq n$ because, if not, we can arbitrarily fix coordinates in $J$ until it is, which does not affect the crucial property that all remaining degree $d$ Fourier coefficients have small magnitude. Using the bounds on $|J|$ and applying item 2 of the inductive hypothesis for degree $d-1$, we get 
\begin{align*}
    |I|\geq  \frac{|J|^{1/(d-1)!}}{(8e)^{C_{d-2}}\left(\log(3|J| /\delta)\right)^{C_{d-1}}} &\geq \frac{n^{1/d!}}{(8e)^{C_{d-1}}\log ( n/\delta)^{1/d!}\left(\log(3|J| /\delta)\right)^{C_{d-1}}}\\
    &\geq \frac{n^{1/d!}}{(8e)^{C_{d-1}}\left(\log(n /\delta)\right)^{C_{d}}}.
\end{align*}
This shows item 2 of the inductive hypothesis as desired.
\end{proof}

To prove \Cref{lemma:degree-d-small}, we need the following claim, which ultimately lets us bound Fourier coefficients in certain affine subspaces.

\begin{claim}[Pigeonhole Principle]
\label{claim:degree-d-subspace-construction-helper}
Let $f: \F_2^n \to [-1,1]$ be degree $d$.  For every $K \subseteq [n]$ of size $k$ such that $n-k \geq \binom{k}{d-1}\log(5/\tau)$, there exists $S \subseteq [n]\setminus K$ and $z \in \pmone^S$ such that \begin{enumerate}
    \item  $\forall \gamma \in \spn(K)$ with $\|\gamma\|_1 = d-1$, we have 
$\bigabs{ \sum_{j \in S}\hat{f}(\gamma + e_j)\cdot z_j} \leq \tau$, and
\item $1 < |S| \leq \binom{k}{d-1}\log(5/\tau)$.
\end{enumerate}
\end{claim}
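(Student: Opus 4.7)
The shape of the bound $\binom{k}{d-1}\log(5/\tau)$ strongly suggests pigeonhole: roughly $2^m$ subsets mapped into $(2/\tau)^N$ axis-aligned boxes, where $N := \binom{k}{d-1}$. The plan is to associate, to each subset $T \subseteq L := [n]\setminus K$, a vector of bounded Fourier coefficients whose pairwise differences realize exactly the signed sums appearing in condition~1, and then to pick two such subsets whose vectors lie in the same tiny box.

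For each $T \subseteq L$, let $h_T := \Restricted{f}{T}{0}$, the restriction of $f$ obtained by fixing $x_j = 0$ for every $j \in T$; since $h_T(x)$ is a value of $f$, we have $|h_T| \leq 1$. By \Cref{fact:fourier-coeffs-restricted} together with the degree constraint $\deg f \leq d$, for every $\gamma \in \spn(K)$ with $\|\gamma\|_1 = d-1$,
$$\hat{h_T}(\gamma) \;=\; \sum_{\beta \in \spn(T)} \hat{f}(\gamma + \beta) \;=\; \hat{f}(\gamma) \,+\, \sum_{j \in T} \hat{f}(\gamma + e_j),$$
where the second equality uses that $\gamma \in \spn(K)$ and $\beta \in \spn(T) \subseteq \spn(L)$ have disjoint support (so weights add), so any $\beta$ with $\|\beta\|_1 \geq 2$ would push $\|\gamma + \beta\|_1$ above $d$ and make the coefficient vanish.

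Next, fix $L' \subseteq L$ with $|L'| = m := \binom{k}{d-1}\log(5/\tau)$, which is possible since $|L| = n-k \geq m$ by hypothesis. For each $T \subseteq L'$ form the vector $v_T := \bigl(\hat{h_T}(\gamma)\bigr)_{\gamma : \|\gamma\|_1 = d-1} \in [-1,1]^N$, and partition $[-1,1]^N$ into $B := \lceil 2/\tau \rceil^N$ axis-aligned boxes of side $\tau$. A direct calculation yields $2^m > 2B$, so by pigeonhole some box contains at least three distinct subsets $T_1, T_2, T_3$. Among any three distinct subsets, some pair must have symmetric difference of size $\geq 2$: if both $|T_1 \triangle T_2| = 1$ and $|T_2 \triangle T_3| = 1$, then $T_1 \triangle T_3 = (T_1 \triangle T_2)\triangle(T_2 \triangle T_3)$ has size $0$ or $2$, and size $0$ would force $T_1 = T_3$. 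For the chosen pair $(T_i, T_j)$, set $S := T_i \triangle T_j$ and $z_\ell := +1$ if $\ell \in T_i \setminus T_j$, else $-1$. The common $\hat{f}(\gamma)$ contributions cancel in $v_{T_i} - v_{T_j}$, leaving the $\gamma$-th coordinate equal to $\sum_{\ell \in S} z_\ell\, \hat{f}(\gamma + e_\ell)$, which is bounded by $\tau$ since both vectors share a box; this yields condition~1, and condition~2 follows from $2 \leq |S| \leq |L'| = m$.

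The main obstacle is extracting $|S| \geq 2$ rather than the $|S| \geq 1$ that pairwise pigeonhole alone would deliver. The three-way pigeonhole together with the symmetric-difference parity argument is what handles this, and the constant $5$ (rather than a smaller one) provides exactly the slack needed to ensure $2^m > 2B$ rather than merely $2^m > B$.
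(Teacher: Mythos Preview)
Your argument is correct and is essentially the paper's: map subsets $T$ to the vector of restriction coefficients $\hat{h_T}(\gamma)=\hat f(\gamma)+\sum_{j\in T}\hat f(\gamma+e_j)\in[-1,1]$, pigeonhole into boxes of side $\tau$, and take a symmetric difference. The only difference is how you force $|S|\ge 2$: the paper pigeonholes over \emph{even}-sized subsets of $L'$ (so $|U\triangle U'|$ is automatically even and nonzero), whereas you take three-in-a-box and argue that among any three distinct sets some pair differs in at least two elements; both tricks cost the same factor of~$2$ in the count and reduce to the identical inequality $(5/2)^{\binom{k}{d-1}}>2$.
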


\begin{proof}
 Consider any subset of $T \subseteq \overline{K}$ of size $\binom{k}{d-1}\log(5/\tau)$. For any $U \subseteq T$, consider the sum 
$$a_U(\gamma) := \hat{f}(\gamma)+ \sum_{j \in U} \hat{f}(\gamma + e_j).$$

We must have that $a_U(\gamma) \in [-1,1]$ since it is exactly equal to the Fourier coefficient corresponding to $\gamma$ if we restricted everything in $U$ to be one. This follows because $f$ is degree $d$. 

Now, divide the interval $[-1, 1]$ into $2/\tau$ intervals of length $\tau$. For a fixed $U \subseteq T$ of even size, consider putting the values of $a_U(\gamma)$ for all $\gamma \in \spn(K)^{=d-1}$ into a vector $v_U$ of length $\binom{k}{d-1}$. First, note that the number of even subsets of $T$ is at least $2^{\binom{k}{d-1}\log(5/\tau) - 1} > (2/\tau)^{\binom{k}{d-1}}$. Moreover, the number of possible interval vectors is at most $(2/\tau)^{\binom{k}{d-1}}$. Therefore, by the pigeonhole principle, there must be two distinct sets $U, U' \subseteq T$ such that $\norm{v_U - v_{U'}}_\infty \leq \tau$.

Thus, we have that 
$$\norm{v_U - v_{U'}}_\infty \leq \tau \Longleftrightarrow \sum_{i \in U \triangle U'}(-1)^{|\{i\} \cap U'|}\hat{f}(\gamma + e_i) \leq \tau \quad \forall \: \gamma \in \spn(K)^{=d-1}.$$
Since $U, U'$ have even size and are not equal, $U \triangle U'$ has even size as well, so we can set our $S = U\triangle U' \subseteq T$ and $z_i = (-1)^{|\{i\} \cap U'|}$, and the claim follows.
\end{proof}
We can now prove \Cref{lemma:degree-d-small}.
\begin{proof}[Proof of \Cref{lemma:degree-d-small}]
We build the map $M$, the set $J$, and the vector $b$ iteratively. Throughout the iterations, we seek to maintain a set $K$ of coordinates for which (under a suitable linear transformation $M$) every Fourier coefficient corresponding to a vector of weight $d$ in $\spn(K)$ has magnitude at most $\tau$. We build $K$ one coordinate at a time by repeatedly invoking \Cref{claim:degree-d-subspace-construction-helper} and arguing that the quantities guaranteed to be small by \Cref{claim:degree-d-subspace-construction-helper} are exactly the (new) Fourier coefficients. 
When we can no longer add more coordinates to $K$, we fix any remaining coordinates (outside of $K$ that are still alive), and we are left with a function, over only the coordinates in $K$, that has the desired property.

Note that we can start with $K$ being an arbitrary subset of size $d-1$ (w.l.o.g. let it be $[d-1]$) since any such subset has no Fourier coefficients of degree $d$. Therefore, we can assume without loss of generality that $\tau \geq 5\cdot 2^{-n/(4e)^d}$, since otherwise $\frac{d}{4e}\left(\frac{n}{\log(5/\tau)}\right)^{1/d} < d$ and the lemma becomes trivial. In each iteration, we maintain the following invariant for $M$, $J$ and $b$. In iteration $i$, there exists some $K\subseteq J$ of size $d+i-1$ such that the function $g = \Restricted{(f\circ M)}{\overline J}{b}$ satisfies
\begin{align*}
|\hat{g}(\gamma)|\leq 
\begin{cases}
\tau \quad \text{ if } \gamma \in \spn(K) \text{ and } \|\gamma\|_1 = d,\\
0 \quad \text{ for all } \|\gamma\|_1 > d.
\end{cases}
\end{align*}

Assume without loss of generality that $J = [j]$ for some $j\leq n$ and $K = [d+i-1] \subseteq J$. 
Since $g$ has degree $d$, we can apply \Cref{claim:degree-d-subspace-construction-helper} to $g$ and obtain a subset $S\subseteq  J \setminus K$ of size at most $\binom{d+i-1}{d-1}(\log (5/\tau)$ and a sign vector $z\in \{\pm1\}^S$ so that
\begin{align}~\label{eq:claim-application}
  \bigabs{ \sum_{j \in S}\hat{g}(\gamma + e_j)\cdot z_j} \leq \tau, \quad \text{ for all } \gamma\in \spn([d+i-1]) \text{ such that } \|\gamma\|_1  = d-1.  
\end{align}
We can also assume that $d+i\in S$ and $z_{d+i} = 1$. Now consider the invertible linear transformation $M_i:\F_2^n\to \F_2^n$ that maps $e_{d+i}$ to $\sum_{j\in S} e_j$ and behaves as the identity map on the remaining standard basis vectors. Further, denote $J_i := S \setminus \{d+i\}$ and let $b_i \in \spn(J_i)$, where $(b_i)_j := (1-z_j)/2$ for each $j\in J_i$. Intuitively, applying the linear transformation $M_i$ and then fixing the coordinates in $J_i$ to $b_i$ corresponds to restricting the affine subspace described by the equations $x_j + x_{d+i} = (1-z_j)/2$ for all $j \in J_i$.

After this iteration, we show that if we set $M' \leftarrow MM_i $, $J' \leftarrow J \setminus J_i$ and $b'\leftarrow b + b_i$, the invariant holds with $K' \leftarrow K \cup \{d+i\}$. For these choices, we have
\begin{align*}
\Restricted{(f\circ M')}{\overline {J'}}{b'}(x) =  f\circ M(M_i(x + b')) &= f\circ M(M_i(x + b_i + b)) \\
&= f\circ M(M_i(x + b_i) + b) \\
&= g\circ M_i(x+ b_i) = \Restricted{(g\circ M_i)}{J_i}{b_i}(x),
\end{align*}
and it therefore suffices to show that $\Restricted{(g\circ M_i)}{J_i}{b_i}$  -- denoted by $h$ henceforth, for shorthand -- is degree $d$ and $\abs{\hat{h}(\gamma)} \leq \tau$ for all $\gamma \in \spn([d+i])$ with $\|\gamma\|_1 = d$. We start by analyzing the Fourier coefficients of $h$, for which by \Cref{fact:fourier-coeffs-restricted} we have
\begin{align}~\label{eq:h-Fourier-formula}
    \hat{h}(\gamma) = \sum_{\beta \in \spn(J_i)} \hat{g\circ M_i}(\gamma +\beta)(-1)^{\langle\beta,b_i\rangle}.
\end{align}

Next, we observe the following relation between the Fourier coefficients of $g\circ M_i$ and those of $g$, which we use to simplify Equation~\eqref{eq:h-Fourier-formula}. Denoting $v := \sum_{j\in J_i} e_j $, we claim that, for any $\gamma$, 
\begin{align}~\label{eq:map}
    \hat{g\circ M_i}(\gamma) = \hat{g}(\gamma + e_{d+i}\langle\gamma, v\rangle).
\end{align}

Before proving Equation~\eqref{eq:map}, we use it to prove that $h$ has the desired properties.
Note that since $g$ is degree $d$, Equation~\eqref{eq:map} implies that if $\hat{g\circ M_i}(\gamma) \neq 0$, then $\|\gamma + e_{d+i}\langle\gamma,v\rangle\|_1 \leq d$, which in turn implies that $\|\gamma\|_1 \leq d+1$. 
This immediately tells us that $g\circ M_i$ has degree at most $d+1$; therefore, $h$ also has degree at most $d+1$ since the degree cannot increase under restrictions. Now, for any $\gamma$, Equation~\eqref{eq:h-Fourier-formula} reduces to
\begin{align}
    \hat{h}(\gamma) &= \sum_{\substack{\beta \in \spn(J_i),\\ \|\beta\|_1 \leq d+1 - \|\gamma\|_1}} \hat{g\circ M_i}(\gamma +\beta)(-1)^{\langle\beta,b_i\rangle} \nonumber \\
    &= \hat{g\circ M_i}(\gamma) + \sum_{\substack{\beta \in \spn(J_i),\\ 0 < \|\beta\|_1 \leq d+1 - \|\gamma\|_1}} \hat{g\circ M_i}(\gamma +\beta)(-1)^{\langle\beta,b_i\rangle}  \nonumber \\ 
    &=  \hat{g}(\gamma + e_{d+i}\langle\gamma, v\rangle) + \sum_{\substack{\beta \in \spn(J_i),\\ 0 < \|\beta\|_1 \leq d+1 - \|\gamma\|_1}} \hat{g}(\gamma +\beta + e_{d+i}\langle\gamma+\beta, v\rangle)(-1)^{\langle\beta,b_i\rangle}, \label{eq:h-Fourier-simple}
\end{align}
where, in the first equality, we used the fact that if $\|\beta\|_1 > d+1 - \|\gamma\|_1$, then $\|\beta + \gamma\|_1 > d+1$ and the corresponding Fourier coefficient in $g\circ M_i$ is just zero, and in the last equality, we used Equation~\eqref{eq:map}. Moreover, for any $\gamma\in \spn(J\setminus J_i)$, we have $\langle\gamma, v\rangle = 0$, which means that $\hat{g}(\gamma + e_{d+i}\langle\gamma, v\rangle) = \hat{g}(\gamma)$. 
We can now conclude that $h$ has degree at most $d$. Indeed, if $\|\gamma\|_1 \geq d+1$, then Equation~\eqref{eq:h-Fourier-simple} implies that $\hat{h}(\gamma) = \hat{g}(\gamma) = 0$ since $g$ has degree at most $d$. 

Next, we show that for any $\gamma\in \spn([d+i])$ with $\|\gamma\|_1 = d$, it must be that $|\hat{h}(\gamma)| \leq \tau$. Applying Equation~\eqref{eq:h-Fourier-simple} for such $\gamma$, we note that 
\begin{align*}
    \hat{h}(\gamma) &= \hat{g}(\gamma) + \sum_{j\in J_i} \hat{g}(\gamma + e_j + e_{d+i}\langle\gamma+e_j, v\rangle)(-1)^{\langle e_j,b_i\rangle}\\
    &= \hat{g}(\gamma) + \sum_{j\in J_i} \hat{g}(\gamma + e_j +  e_{d+i})z_j.
\end{align*}
We now consider two cases. First, when $\gamma_{d+i} = 0$, the above equation implies that $\hat{h}(\gamma) = \hat{g}(\gamma)$ since $\|\gamma + e_{d+i} + e_j\|_1 = d+2$ for every $j\in J_i$, and $g$ has degree at most $d$. Therefore, in this case, $\abs{\hat{h}(\gamma)} = \abs{\hat{g}(\gamma)} \leq \tau$ by the inductive hypothesis. Otherwise, $\gamma_{d+i} = 1$, and now using both Equation~\eqref{eq:claim-application} and the fact that $\gamma + e_{d+i} \in \spn(\{e_1,\ldots,e_{d+i-1}\})$, we conclude that 
$\abs{\hat{h}(\gamma)} =  \abs{\sum_{j\in S}\hat{g}((\gamma + e_{d+i}) + e_j)z_j} \leq \tau$.

It remains to show Equation~\eqref{eq:map}. We start by observing that $M_i = M_i^{-1}$, which can be verified by noting that $M_i^{-1}e_{d+i} = M_i^{-1}(e_{d+i} +v  +v) = e_{d+i} +v$ and $M_i^{-1}$ acts as the identity map on the remaining standard basis vectors. From \Cref{fact:fourier-coeffs-linear-transformation}, we know that $\hat{g\circ M_i}(\gamma) = \hat{g\circ M_i^{-1}}(\gamma) = \hat{g}(M_i^{\transpose}\gamma)$. Since the rows of $M_i^{\transpose}$ are the same as the columns of $M_i$, we have 
\begin{align*}
    (M_i^{\transpose}\gamma)_j = 
    \begin{cases}
    \langle v+e_{d+i},\gamma\rangle \quad &\text{ if } j = d+i,\\
    \gamma_j    \quad &\text{ otherwise.}
    \end{cases}
\end{align*}
Therefore, we can write $M_i^{\transpose}\gamma = \sum_{j\neq d+i}\gamma_je_j + e_{d+i}
\langle v+e_{d+i},\gamma\rangle = \gamma + e_{d+i}\langle v,\gamma\rangle$, as claimed.

We conclude the argument by calculating how many times we can repeat the above procedure. Note that, in the $i$-th iteration, we fixed at most $\binom{d+i-1}{d-1}\log 5/\tau - 1$ coordinates and we added exactly one coordinate to $K$. We can thus continue this process until iteration $t$ for the largest value of $t$ such that 
$$\log(5/\tau) \cdot\left(\sum_{i=1}^{t}\binom{d+i-1}{d-1}\right) \leq n - d + 1.$$
Simplifying the binomial sum, we get 
\begin{align*}
\sum_{i=1}^{t}\binom{d+i-1}{d-1} = \sum_{i=1}^{t}\binom{d+i-1}{i} &= \sum_{i=1}^{t}\binom{d+i-1}{i}  + \binom{d}{0} -  1 \\
&= \binom{d+t}{t} - 1 <  \left(\frac{e(d+t)}{ d}\right)^d,
\end{align*}
where the last equality follows by repeatedly using the identity $\binom{a}{i} + \binom{a}{i-1} = \binom{a+1}{i}$.
Thus, 
we can set $t = \frac{d}{e}\left(\frac{n-d+1}{\log 5/\tau}\right)^{1/d} - d$. Adding in the initial $d-1$ coordinates, at the end of the $t$ iterations, we can bound $|K|$ as,
\begin{align*}
    |K| &= \frac{d}{e}\left(\frac{n-d+1}{\log 5/\tau}\right)^{1/d} - d  + d-1 \\
    &\geq  \frac{d}{e}\bigparen{\frac{n}{\log 5/\tau}\bigparen{1-\frac{d-1}{n}}}^{1/d} - 1 \\
    &\geq \frac{d}{e}\bigparen{\frac{n}{\log 5/\tau}\cdot\frac{1}{d}}^{1/d} - 1 \\
    &\geq \frac{d}{2e}\bigparen{\frac{n}{\log 5/\tau}}^{1/d} - 1 \tag{$d^{1/d}\leq 2 \quad \forall d\geq 1$} \\
    &= \frac{d}{4e}\bigparen{\frac{n}{\log 5/\tau}}^{1/d} + \frac{d}{4e}\bigparen{\frac{n}{\log 5/\tau}}^{1/d} -1\\
    &\geq \frac{d}{4e}\bigparen{\frac{n}{\log 5/\tau}}^{1/d} + d - 1 \tag{since $\tau \geq 5\cdot 2^{-n/(4e)^d}$} \\
    &\geq \frac{d}{4e}\bigparen{\frac{n}{\log 5/\tau}}^{1/d} .\tag{$d \geq 1$}
\end{align*}

At the end of $t$ iterations, 
we can fix any coordinates outside the set $K$ arbitrarily to ensure that the only non-zero Fourier coefficients with $\text{L}_1$ norm $d$ in the resulting function must correspond to vectors in $\spn(K)$, which do not change under the restriction. 
\end{proof}

\section{Lower Bounds on $\regnum(f, \delta)$ }
In this section, we prove lower bounds on $\regnum(f,\delta)$. We start with lower bounds for functions $f$ that are bounded in the interval $[-1,1]$; in the subsequent section, we give lower bounds for Boolean functions.
\subsection{Bounded Functions}
We begin with a simple bound on the number of standard basis vectors in low-dimensional affine subspaces, which is crucial in the analysis of the lower bounds.
\begin{claim}
\label{claim:std-basis-vecs-in-small-subspace}
For any subspace $\calV\subseteq \F_2^n$ of co-dimension $C$ and $\calW$ such that $\calW \oplus \calV^{\perp} = \F_2^n$, there exists a set $S\subseteq \calW$ of size at least $n-C$
such that for every $u\in S$,
$$|\left(u + \calV^{\perp}\right)^{=1}| \geq 1.$$
Moreover, there exists a subset $S_1\subseteq S$ of size at least $n-2C$ whose corresponding shifts contain \emph{exactly} one standard basis vector.
\end{claim}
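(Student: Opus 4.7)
The plan is to exploit the direct sum decomposition $\F_2^n = \calW \oplus \calV^{\perp}$ via the projection map $\phi: \F_2^n \to \calW$ defined by $\phi(x) = u$, where $x = u + w$ is the unique decomposition with $u \in \calW$, $w \in \calV^{\perp}$. This $\phi$ is linear with kernel $\calV^{\perp}$. Setting $u_i := \phi(e_i)$, we have $e_i = u_i + w_i$ for some $w_i \in \calV^{\perp}$, so $e_i \in u_i + \calV^{\perp}$, which means every shift of the form $u_i + \calV^{\perp}$ automatically contains at least one standard basis vector. I would therefore take $S := \{u_i : i \in [n]\}$ (the set of distinct images), and the first claim reduces to showing $|S| \geq n - C$.

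To bound $|S|$, I would introduce the equivalence relation on $[n]$ given by $i \sim j \iff u_i = u_j \iff e_i + e_j \in \calV^{\perp}$ (the second equivalence since $\ker(\phi) = \calV^{\perp}$ and we are in characteristic $2$). Let $t$ denote the number of classes and $k_1, \ldots, k_t$ their sizes. For each class $C_j$ choose a representative $i_0 \in C_j$ and form the vectors $\{e_{i_0} + e_i : i \in C_j \setminus \{i_0\}\} \subseteq \calV^{\perp}$. Because distinct classes partition $[n]$, the supports of the vectors arising from different classes are pairwise disjoint, so the union of these families is linearly independent in $\calV^{\perp}$. This gives $\sum_j (k_j - 1) = n - t \leq \dim(\calV^{\perp}) = C$, hence $|S| = t \geq n - C$.

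For the moreover statement, I would take $S_1 \subseteq S$ to consist of those $u_i$ whose equivalence class is a singleton. If $e_i, e_j \in u + \calV^{\perp}$ for some $u \in S_1$ and $i \neq j$, then $e_i + e_j \in \calV^{\perp}$ would force $i \sim j$, contradicting the singleton assumption; so each such shift contains exactly one standard basis vector. Letting $n_k$ denote the number of classes of size $k$, the family above already shows $\sum_{k \geq 2}(k-1)\, n_k \leq C$; applying the bound $k \leq 2(k-1)$ valid for $k \geq 2$ then gives $\sum_{k \geq 2} k\, n_k \leq 2C$, so $|S_1| = n_1 = n - \sum_{k \geq 2} k\, n_k \geq n - 2C$, as required.

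I do not expect a serious obstacle: the argument is pure linear algebra once the right equivalence relation is identified. The only point that deserves a careful check is the joint linear independence of the vectors extracted from different classes, and this follows immediately from the disjoint-supports observation, since a nontrivial $\F_2$-combination that vanishes would have to vanish class-by-class.
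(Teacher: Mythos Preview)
Your proposal is correct and follows essentially the same approach as the paper: both identify $S$ as the set of $\calW$-projections of the standard basis vectors, bound $|S|$ via a dimension argument (you construct an explicit independent family of size $n-|S|$ inside $\calV^{\perp}$, while the paper observes $\spn(S\cup\calV^{\perp})=\F_2^n$), and then bound $|S_1|$ by counting how the $n$ standard basis vectors are distributed among the at-most-$|S|$ cosets. The only cosmetic difference is that the paper derives $|S_1|\geq 2|S|-n$ directly from $n\geq |S_1|+2(|S|-|S_1|)$, whereas you route the same count through the class-size inequality $k\leq 2(k-1)$.
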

\begin{proof}

Let $S = \{u: u\in \calW \text{ and } |u+\calV^{\perp}|^{=1} \geq 1\}$. Since every standard basis vector can be expressed as $u+v$ for some $u\in S$ and $v\in \calV^{\perp}$, we have that $\dim(\spn(S\cup \calV^{\perp})) = n$. However, we also know that $\dim(\spn(S\cup \calV^{\perp})) \leq |S| + C$, and rearranging we get
$|S|\geq n- C$.
Next, let $S_1 =\{u\in S: |u+\calV^{\perp}|^{=1} = 1\}$. By \Cref{fact:cosets-unique}, for any $u,u'\in S$, we have $u+\calV^{\perp} \neq u'+\calV^{\perp}$.
Therefore, 
$$n\geq \sum_{u\in S} |(u+\calV^{\perp})^{=1}| \geq \sum_{u\in S_1} |(u+\calV^{\perp})^{=1}| + \sum_{u\in S\setminus S_1} |(u+\calV^{\perp})^{=1}| \geq |S_1| + 2(|S|- |S_1|),$$
and rearranging, we get $|S_1|\geq 2|S| - n \geq n- 2C$. 
\end{proof}

\begin{lemma}
\label{claim:degree-one-bounded-lb}
There is a degree one function $f : \F_2 ^n \to [-1,1]$ for which $\regnum(f,\delta) \geq  n/2$, for all $\delta < 1/n$. 
\end{lemma}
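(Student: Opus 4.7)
The plan is to exhibit a simple uniform-weight degree-one function and use the Fourier-restriction formula (\Cref{fact:Fourier-coeff-of f_U-as-a-sum-of-fourier-coeff-of-f-from-vperp}) together with the linear-algebraic bound on weight-one vectors in affine shifts of $\calV^{\perp}$ (\Cref{claim:std-basis-vecs-in-small-subspace}) to prevent any nontrivial Fourier coefficient of $f_\calU$ from dropping below $1/n$ unless $\codim(\calU)$ is large.

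First, I would take $f(x) = \frac{1}{n}\sum_{i=1}^n (-1)^{x_i}$. This is an average of $n$ signs, so $f:\F_2^n \to [-1,1]$, and it is manifestly degree one with $\hat{f}(e_i) = 1/n$ for every $i \in [n]$ and all other Fourier coefficients (including $\hat{f}(0)$) equal to zero. The intuition is that we want the weight-one mass spread uniformly, so no cancellation is possible in an affine shift that contains only one standard basis vector.

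Next, fix any affine subspace $\calU = \alpha + \calV$ with $\codim(\calU) = C$ and choose $\calW$ so that $\calW \oplus \calV^{\perp} = \F_2^n$. By \Cref{fact:Fourier-coeff-of f_U-as-a-sum-of-fourier-coeff-of-f-from-vperp}, for every $\gamma \in \calW$,
$$\hat{f_\calU}(\gamma) \;=\; \sum_{\beta \in \gamma + \calV^{\perp}} \hat{f}(\beta)\,(-1)^{\langle \beta,\alpha\rangle} \;=\; \frac{1}{n}\sum_{i \,:\, e_i \in \gamma + \calV^{\perp}} (-1)^{\alpha_i},$$
since $\hat{f}$ is supported on weight-one vectors. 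Applying \Cref{claim:std-basis-vecs-in-small-subspace}, I would extract a subset $S_1 \subseteq \calW$ with $|S_1| \geq n - 2C$ such that each $\gamma \in S_1$ has the shift $\gamma + \calV^{\perp}$ containing \emph{exactly} one standard basis vector. For any such $\gamma$, the displayed sum has a single term, hence $|\hat{f_\calU}(\gamma)| = 1/n$.

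Finally, if $C$ were strictly smaller than roughly $n/2$, then $|S_1| \geq 2$, and since the zero vector of $\calW$ accounts for at most one element of $S_1$, there is some nonzero $\gamma^\ast \in S_1$ with $|\hat{f_\calU}(\gamma^\ast)| = 1/n > \delta$, contradicting $\delta$-regularity. Therefore $\regnum(f,\delta) \geq n/2$ (up to the usual integer-rounding in the parity of $n$). There is no real obstacle in this proof: the heavy lifting is done by \Cref{claim:std-basis-vecs-in-small-subspace}, which forces singletons to appear whenever $\codim(\calU)$ is small, and the only subtlety is choosing $f$ so that singletons cannot cancel, which is guaranteed by putting uniform $1/n$ weight on each $\hat{f}(e_i)$.
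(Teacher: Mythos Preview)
Your proposal is correct and essentially identical to the paper's proof: both take $f(x)=\frac{1}{n}\sum_i(-1)^{x_i}$, invoke \Cref{claim:std-basis-vecs-in-small-subspace} to produce at least two elements of $\calW$ whose $\calV^\perp$-coset contains exactly one standard basis vector when $C\le n/2-1$, and then apply \Cref{fact:Fourier-coeff-of f_U-as-a-sum-of-fourier-coeff-of-f-from-vperp} to conclude that the corresponding nonzero coefficient has magnitude exactly $1/n>\delta$. The only cosmetic difference is that you spell out the formula $\hat{f_\calU}(\gamma)=\frac{1}{n}\sum_{i:\,e_i\in\gamma+\calV^\perp}(-1)^{\alpha_i}$ explicitly, whereas the paper leaves this implicit.
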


\begin{proof}
The counterexample is given by the function $f(x) = \frac{1}{n} \cdot \sum_i (-1)^{e_i \cdot x}$. 
Let $\calV$ be a subspace of $\F_2^n$ of co-dimension $C$, and suppose we restrict the function to the affine subspace $\calU = \alpha + \calV$. 
By
\Cref{claim:std-basis-vecs-in-small-subspace},  
if $C \leq n/2 - 1$, there exists at least two vectors $\gamma,\gamma' \in \calW$ (where $\calW$ is such that $\calW \oplus \calV^\perp = \F_2^n$) such that  $|(\gamma+ \calV^{\perp})^{=1}| = |(\gamma'+ \calV^{\perp})^{=1}| = 1$. 
Assume without loss of generality that $\gamma\neq 0$. 
Then, by \Cref{fact:Fourier-coeff-of f_U-as-a-sum-of-fourier-coeff-of-f-from-vperp}, we have that
$$|\hat{f_\calU}(\gamma)| = \bigabs{\sum_{\eta \in u + \calV^\perp} \hat{f}(\eta)(-1)^{\inner{\eta, \alpha}}} = \frac{1}{n} > \delta,$$
which follows by observing that exactly one of the summands in the last sum corresponds to a weight one vector and is non-zero. Therefore, $\regnum(f,\delta) \geq n/2$.
\end{proof}

We next show how to generalize \Cref{claim:degree-one-bounded-lb} to degree $d$ bounded functions.

\begin{lemma}
\label{claim:degree-d-bounded-lb}
For $d> 2$ and $\delta < \binom{n}{d}^{-1}$, there exists a degree $d$ function $f: \F_2^n \rightarrow [-1,1]$ for which
$\regnum(f, \delta) \geq n- 2dn^{1/(d-1)}$.
\end{lemma}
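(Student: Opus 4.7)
The plan is to use the natural analogue of the degree-one construction, namely
\[ f(x) \;:=\; \binom{n}{d}^{-1}\!\! \sum_{\gamma \in \F_2^n\,:\,\|\gamma\|_1 = d} \chi_\gamma(x). \]
By the triangle inequality $|f(x)| \leq 1$ for every $x$, so $f : \F_2^n \to [-1,1]$; and $f$ has Fourier degree exactly $d$ with $\hat{f}(\gamma) = \binom{n}{d}^{-1}$ for every weight-$d$ vector $\gamma$ and $0$ otherwise.

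Fix any affine subspace $\calU = \alpha + \calV$ with $\codim(\calV) = C < n - 2dn^{1/(d-1)}$, and pick $\calW$ with $\calW \oplus \calV^\perp = \F_2^n$. Applying \Cref{fact:Fourier-coeff-of f_U-as-a-sum-of-fourier-coeff-of-f-from-vperp} to this $f$ gives, for every $\gamma \in \calW$,
\[ \hat{f_\calU}(\gamma) \;=\; \binom{n}{d}^{-1}\!\! \sum_{\beta \in (\gamma + \calV^\perp)^{=d}}\! (-1)^{\inner{\beta,\alpha}}. \]
Consequently it suffices to exhibit a nonzero $\gamma^* \in \calW$ whose coset $\gamma^* + \calV^\perp$ contains exactly one weight-$d$ vector: for any such $\gamma^*$ the sum reduces to a single $\pm 1$, whence $|\hat{f_\calU}(\gamma^*)| = \binom{n}{d}^{-1} > \delta$ and $f_\calU$ is not $\delta$-regular.

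The heart of the argument is then a combinatorial existence claim in the spirit of \Cref{claim:std-basis-vecs-in-small-subspace}. Writing $T$ for the set of weight-$d$ vectors and $\pi : \F_2^n \to \calW$ for the projection with kernel $\calV^\perp$, I would double-count pairs $(u,v) \in T \times T$ with $u \neq v$ and $u + v \in \calV^\perp$ by grouping on the difference $w := u + v$: such $w$ has even weight $2k$ for some $k \in \{1,\ldots,d\}$, and for each fixed $w$ of weight $2k$ there are exactly $\binom{2k}{k}\binom{n-2k}{d-k}$ contributing pairs. A pigeonhole step then lower-bounds the number of singleton fibers $\pi^{-1}(\gamma) \cap T$ by $\binom{n}{d}$ minus twice the pair count; once this is at least $2$, at least one such fiber sits above $\gamma \neq 0$, furnishing the desired $\gamma^*$.

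The main obstacle is the quantitative step of pushing this outline through to the precise threshold $C < n - 2dn^{1/(d-1)}$: this requires a sharp upper bound on $|(\calV^\perp)^{=2k}|$ purely as a function of $C$ (the trivial $2^C$ is much too weak). The $d=1$ case sidestepped this entirely because weight-one vectors are the standard basis, which together with $\calV^\perp$ span $\F_2^n$; for $d \geq 2$ one needs a dedicated linear-algebraic lemma bounding low-weight vectors inside low-dimensional subspaces of $\F_2^n$. I expect the exponent $1/(d-1)$ and the prefactor $2d$ in the claimed threshold to arise precisely from balancing such a bound against the weight-$2k$ pair counts $\binom{2k}{k}\binom{n-2k}{d-k}$.
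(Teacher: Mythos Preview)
Your proposal has a genuine gap: the function you chose cannot yield the claimed bound. The function
\[
f(x) = \binom{n}{d}^{-1}\sum_{\|\gamma\|_1 = d}\chi_\gamma(x)
\]
is \emph{symmetric}, so it is constant on the affine subspace $\{x : x_i + x_{n/2+i} = 1,\ i\in[n/2]\}$ (every point there has Hamming weight exactly $n/2$). That subspace has dimension $n/2$, hence $\regnum(f,\delta)\le\regnum(f,0)\le n/2$. For $d\ge 3$ and large $n$ this is far below $n-2dn^{1/(d-1)}$, so no argument about this $f$ can possibly establish the lemma. Concretely, your ``singleton coset'' claim is false for the associated $\calV^\perp=\spn\{e_i+e_{n/2+i}\}$: if $\beta$ has weight $d$ and $i\in\supp(\beta)$, then $\beta+(e_i+e_{n/2+i})$ is another weight-$d$ vector in the same coset, so no coset contains exactly one weight-$d$ vector. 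The paper's appendix sketch flags exactly this obstruction.

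The paper's fix is to destroy the symmetry by using random signs,
\[
f_{\bfz}(x)=\binom{n}{d}^{-1}\sum_{\|\gamma\|_1=d}\bfz_\gamma\,\chi_\gamma(x),\qquad \bfz_\gamma\sim\{\pm1\}\ \text{i.i.d.},
\]
and then running a probabilistic argument. Using \Cref{claim:std-basis-vecs-in-small-subspace} one gets at least $\binom{k}{d}$ distinct cosets of $\calV^\perp$ each containing at least one weight-$d$ vector (where $k=\dim\calV$); for a \emph{fixed} subspace the probability over $\bfz$ that every such signed sum vanishes is at most $2^{-\binom{k}{d}}$, and a union bound over all affine subspaces of dimension $\ge k=2dn^{1/(d-1)}$ finishes. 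Note that this bypasses entirely the need for a sharp bound on $|(\calV^\perp)^{=2k}|$---the quantitative step you identified as the bottleneck is simply not present in the random-sign approach.
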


\begin{proof}
The counterexample is obtained using a probabilistic argument. We consider the homogeneous degree $d$ polynomial with random signs $f_{\bfz}: \F_2^n \to [-1,1]$, defined as 
$$f_\bfz(x) = \sum_{\gamma: \norm{\gamma}_1 = d} \frac{\bfz_\gamma \cdot (-1)^{\inner{\gamma, x}}}{\binom{n}{d}},$$
where each $\bfz_\gamma \sim \pmone$ is a uniformly random sign.

Let $\calV$ be a subspace of $\F_2^n$ of co-dimension $C$, and suppose we restrict $f_{\bfz}$ to an affine subspace $\calU = \alpha + \calV$. 
By \Cref{claim:std-basis-vecs-in-small-subspace}, we have a $S\subseteq \calW$ (where $\calW$ is such that $\calW \oplus\calV^\perp = \F_2^n$) of size at least $k := n-C$ such that $|(u+\calV^{\perp})^{=1}| \geq 1$ for each $u\in S$. 
Moreover, by \Cref{fact:cosets-unique}, for every $v, v'\in \spn(S)$ we have that $v+\calV^{\perp}\ \neq v' + \calV^\perp$.  
Therefore, there is a set $T\subseteq \calW$ of size at least $\binom{k}{d}$ such that for every $u\in T$, we have $|(u+\calV^\perp)^{=d}| \geq 1$.
By \Cref{fact:Fourier-coeff-of f_U-as-a-sum-of-fourier-coeff-of-f-from-vperp}, for each $u\in T$, we have 
$$|\hat{f_\calU}(u)| = \bigabs{\sum_{\eta\in u + \calV^\perp}\hat{f}(\eta)(-1)^{\inner{\eta,\alpha}}}.$$
We now observe that if $(u+\calV^{\perp})^{=d}$ has odd size, then $\abs{\hat{f_\calU}(u)} \geq \binom{n}{d}^{-1}$. Therefore, if $f_\calU$ was $\delta$-regular, then for each $u\in T$, it must be that the set $(u+\calV^{\perp})^{=d}$ has even size, and, in particular, that $|(u+\calV^{\perp})^{=d}| \geq 2$.  

Let $\calV$ be a subspace such that each non-trivial affine subspace of $\calV^{\perp}$ has an even number of weight $d$ vectors. For a given affine subspace $\calU = \alpha+\calV$ and a random choice of the signs $\bfz_\gamma$'s, the probability that $f_{\calU}$ is $\delta$-regular is therefore at most $2^{-\binom{k}{d}}$. 
Let $\calB$ (for ``$\calB\text{ad}$") be the event that there is an affine subspace $\calU$ where $f_\calU$ is $\delta$-regular.
We can simply union bound over all possible affine subspaces of dimension at least $k$ to bound the probability of $\calB$.
For any $k$, observe that the number of affine subspaces of dimension $k$ is at most $2^{n(k + 1)}$. Thus, we have
$$\Pr[\calB] \leq \sum_{j = k}^{n} 2^{n(j+1)}\cdot 2^{- \binom{j}{d}} \leq \sum_{j = k}^{n} 2^{n(j+1) - \bigparen{\frac{j}{d}}^d}.$$

Note that $h(x) = n(x+1) - \bigparen{\frac{x}{d}}^d$ is concave in $[0, \infty)$; moreover, a quick calculation shows that it is maximized when $x= d\cdot n^{1/(d-1)}$.
Setting $k = 2dn^{\frac{1}{d-1}}$, our desired probability is at most 
\begin{align*}
    \Pr[\calB] &\leq (n-k) \cdot 2^{n(2dn^{\frac{1}{d-1}} + 1) - 2^d\cdot n^{\frac{d}{d-1}}} \tag{Every term is smaller than the first.}\\
    &\leq (n-k) \cdot 2^{(2d + 1 - 2^d)n^{1 + \frac{1}{d-1}}} \\
    &\leq o(1). \tag{$2d+1 - 2^d \leq -1 \quad \forall d \geq 3$.}
\end{align*}

Therefore, there exists a signing $\bfz_\gamma$ such that for any affine subspace of dimension at least $2dn^{1/(d-1)}$, the restriction of $f_{z}$ is not $\delta$-regular.
\end{proof}

\begin{remark}
Note that \Cref{claim:degree-d-bounded-lb} is trivial when $d=2$; it would be interesting to obtain a tighter result in this case.
\end{remark}

\subsection{Boolean Functions}
This section has two parts. The first gives non-explicit lower bounds on $\regnum(f,\delta)$ for Boolean functions, and the second gives explicit lower bounds. 
\subsubsection{Non-explicit Lower Bounds on $\regnum(f, \delta)$}

We can turn our lower bounds on $\regnum(f,\delta)$ for bounded functions into (non-explicit) lower bounds for Boolean functions. To do so, we use the following simple but powerful lemma of \cite{hlms-regularity-lb}, which states that given a bounded function with a large $\regnum(f,\delta)$, there must exist some Boolean function $g$ with similarly a large $\regnum(g,2\delta)$.

\begin{lemma}[\cite{hlms-regularity-lb}, Claim 1.2]
\label{lemma:bounded-regularity-to-boolean-regularity}
Let $\tau > 0$ and $f: \F_2^n \to [-1,1]$. There exists a Boolean function $g : \F_2^n \to \pmone$ satisfying, for every affine subspace $\calU$ such that $|\calU| \geq \frac{4n^2}{\tau^2}$ and any $\gamma \in \F_2^n$, that
$$\bigabs{\hat{f}_\calU(\gamma) - \hat{g}_\calU(\gamma)} \leq \tau.$$
\end{lemma}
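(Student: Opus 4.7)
The plan is randomized rounding followed by a Hoeffding-and-union-bound argument. Define a random Boolean function $\bfg : \F_2^n \to \pmone$ by independently setting, for each $x \in \F_2^n$, $\bfg(x) = +1$ with probability $(1+f(x))/2$ and $\bfg(x) = -1$ otherwise. This guarantees $\E[\bfg(x)] = f(x)$ pointwise, which is the key property we will exploit.

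For a fixed affine subspace $\calU = \alpha + \calV$ and a fixed $\gamma \in \F_2^n$, \Cref{fact:coefficients-as-expectation} and linearity of expectation give
$$\hat{\bfg}_\calU(\gamma) - \hat{f}_\calU(\gamma) = \frac{(-1)^{\inner{\gamma, \alpha}}}{|\calU|}\sum_{x \in \calU}\bigparen{\bfg(x) - f(x)}(-1)^{\inner{\gamma, x}},$$
which is an average of $|\calU|$ independent, mean-zero random variables supported in $[-2, 2]$. Hoeffding's inequality (\Cref{fact:hoeffding}) then bounds the probability that this deviation exceeds $\tau$ by $2\exp(-\Omega(\tau^2 |\calU|))$; for $|\calU| \geq 4n^2/\tau^2$, this is $2\exp(-\Omega(n^2))$.

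Next I would union bound over all affine subspaces $\calU$ of size at least $4n^2/\tau^2$ and all $\gamma \in \F_2^n$. The number of $k$-dimensional linear subspaces of $\F_2^n$ is at most $2^{k(n-k)} \leq 2^{n^2/4}$, each has $2^{n-k}$ affine shifts, and there are $2^n$ choices of $\gamma$, so summing over $k$ gives at most $2^{n^2/4 + O(n)}$ events to control. The factor $4$ in the threshold $4n^2/\tau^2$ is chosen so that, once the $(b-a)^2 = 16$ from Hoeffding is tracked through, the per-event bound $\exp(-\Omega(n^2))$ decays fast enough to dominate this union bound. A positive-probability event then survives in which \emph{every} pair $(\calU, \gamma)$ with $|\calU| \geq 4n^2/\tau^2$ satisfies the desired estimate simultaneously; fixing any realization of $\bfg$ in that event produces the deterministic $g$ claimed.

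The only real point of care is this constant-matching between the Hoeffding exponent and the $2^{n^2/4 + O(n)}$-sized subspace count — making sure the factor $4$ in $4n^2/\tau^2$ provides enough slack. Beyond this textbook bookkeeping, there is no conceptual obstacle: the statement follows from a single round of randomized rounding combined with concentration.
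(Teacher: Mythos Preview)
Your proposal is correct and is essentially the same argument the paper gives: randomized rounding $\bfg(x)=\pm1$ with $\E[\bfg(x)]=f(x)$, Hoeffding on each $(\calU,\gamma)$, and a union bound over all affine subspaces and characters. The only cosmetic difference is that the paper bounds the number of affine subspaces more crudely by $2^{n^2}$ (and uses $(b-a)=2$ rather than your looser $[-2,2]$), but both versions of the constants close; there is nothing further to add.
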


\begin{proof}

Let $g(x)$ equal $1$ with probability $\frac{1+f(x)}{2}$, and $-1$ otherwise. Let $\calU = \alpha + \calV$ for some subspace $\calV$. 

By \Cref{fact:coefficients-as-expectation} we can write
$$\hat{f_\calU}(\gamma) = (-1)^{\inner{\gamma, \alpha}}\E_{y \in \calU}[f(y)\cdot (-1)^{\inner{ \gamma, y}}].$$
Consider the random variable 
$$\hat g_\calU(\gamma) = (-1)^{\inner{\gamma, \alpha}}\E_{y \in \calU}[g(y)\cdot (-1)^{\inner{ \gamma, y}}].$$
Observe that $\E_g\hat g_\calU(\gamma)= \hat f_\calU(\gamma)$. 
Moreover, every term in the summation is in $[-1,1]$, so by a Hoeffding bound (see \Cref{fact:hoeffding}), the probability $\left | \hat g_\calU(\gamma ) - \hat f_\calU(\gamma)\right | \geq \tau$ is at most $2\exp\bigparen{-\tau^2|\calU|^2/2} \leq 2^{-2n^2 +1}$.

On the other hand, there are at most $2^{n^2}$ affine subspaces of $\F_2^n$, and at most $2^n$ choices for $\gamma$. Therefore, by a union bound, the probability that $g$ has the property we desire is at least $1 - 2^{n^2 + n - 2n^2 +1} > 0$, and the claim follows.
\end{proof}

Using \Cref{lemma:bounded-regularity-to-boolean-regularity}, we have the following lemma.

\begin{lemma}
\label{cor:nonexplicit-boolean-lower-bounds}

For all $d\geq 3$ and $\delta < \frac{1}{2}\cdot\binom{n}{d}^{-1}$, there exists a Boolean function $f$ with 
$$\regnum(f,\delta) \geq n - \max\left \{2d\cdot n^{1/(d-1)}, \log \bigparen{16n^2/\delta^2}\right \}.$$

\end{lemma}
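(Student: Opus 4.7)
The plan is to combine the bounded-function lower bound in \Cref{claim:degree-d-bounded-lb} with the bounded-to-Boolean conversion \Cref{lemma:bounded-regularity-to-boolean-regularity}, carefully tracking the slack so that $\delta$-regularity of the Boolean approximation implies $2\delta$-regularity of the original bounded function.

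First I would invoke \Cref{claim:degree-d-bounded-lb} with parameter $2\delta$. Since by hypothesis $\delta < \tfrac{1}{2}\binom{n}{d}^{-1}$, we have $2\delta < \binom{n}{d}^{-1}$, so the claim produces a bounded degree-$d$ function $\tilde f : \F_2^n \to [-1,1]$ with $\regnum(\tilde f, 2\delta) \geq n - 2dn^{1/(d-1)}$. Next I would apply \Cref{lemma:bounded-regularity-to-boolean-regularity} to $\tilde f$ with the choice $\tau = \delta/2$: this yields a Boolean $g : \F_2^n \to \pmone$ such that for every affine subspace $\calU$ with $|\calU| \geq 4n^2/(\delta/2)^2 = 16n^2/\delta^2$ and every $\gamma$,
$$\bigabs{\hat{\tilde f_\calU}(\gamma) - \hat{g_\calU}(\gamma)} \leq \delta/2.$$
The Boolean function $g$ will be our witness, and the rest of the proof is a case analysis that shows any $\calU$ on which $g_\calU$ is $\delta$-regular must have large codimension.

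Suppose $\calU$ is an affine subspace such that $g_\calU$ is $\delta$-regular. In Case A, assume $\dim(\calU) \geq \log(16n^2/\delta^2)$, so the approximation above applies. Then for every nonzero $\gamma$ in the relevant character set, $|\hat{\tilde f_\calU}(\gamma)| \leq |\hat{g_\calU}(\gamma)| + \delta/2 \leq 3\delta/2 \leq 2\delta$, so $\tilde f_\calU$ is $2\delta$-regular. By our choice of $\tilde f$, this forces $\codim(\calU) \geq n - 2dn^{1/(d-1)}$, equivalently $\dim(\calU) \leq 2dn^{1/(d-1)}$. In Case B, assume $\dim(\calU) < \log(16n^2/\delta^2)$; then trivially $\codim(\calU) > n - \log(16n^2/\delta^2)$.

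Combining the two cases gives $\codim(\calU) \geq n - \max\{2dn^{1/(d-1)},\, \log(16n^2/\delta^2)\}$ for every affine subspace making $g_\calU$ $\delta$-regular, which is exactly the claimed bound on $\regnum(g,\delta)$. There is no real conceptual obstacle here; the only thing to be careful about is the slack bookkeeping: we must apply the approximation lemma at resolution $\tau = \delta/2$ rather than $\delta$ so that $\delta$-regularity of $g_\calU$ transfers to $2\delta$-regularity (and not merely $2\delta$-near-regularity) of $\tilde f_\calU$, which is what we need to trigger \Cref{claim:degree-d-bounded-lb} with parameter $2\delta$ and absorb the constant-factor loss that reshapes $4n^2/\delta^2$ into $16n^2/\delta^2$ in the final bound.
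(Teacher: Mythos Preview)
Your proposal is correct and follows the same two-ingredient approach as the paper: invoke \Cref{claim:degree-d-bounded-lb} for a bounded function and transfer it to a Boolean function via \Cref{lemma:bounded-regularity-to-boolean-regularity}. In fact your bookkeeping is cleaner than the paper's sketch: the paper applies the approximation lemma with error $\delta/2$ and then only explicitly concludes that $g$ is not $\delta/2$-regular, whereas you correctly apply \Cref{claim:degree-d-bounded-lb} at parameter $2\delta$ so that $\delta$-regularity of $g_\calU$ yields $2\delta$-regularity of $\tilde f_\calU$, giving exactly the claimed bound on $\regnum(g,\delta)$.
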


\begin{proof}
By \Cref{claim:degree-d-bounded-lb}, there exists a bounded $f$ that is not $\delta$-regular in any affine subspace of dimension at least $2dn^{1/(d-1)}$ for all $\delta < \binom{n}{d}^{-1}$. \Cref{lemma:bounded-regularity-to-boolean-regularity} tells us that there exists a Boolean function $g$ whose Fourier coefficients agree up to an additive error $\delta/2$ with the Fourier coefficients of $f$ on all affine subspaces of dimension at least $\log \bigparen{16n^2/\delta^2}$. Therefore, if $f$ is not $\delta$-regular on all of these affine subspaces, then $g$ is also not $\delta/2$-regular on any of these subspaces.
\end{proof}

  We can plug some parameters into \Cref{cor:nonexplicit-boolean-lower-bounds} and achieve the following more parsable corollary.

\begin{corollary}
\label{cor:nonexplicit-boolean-simpler}
For every $3\leq d \leq  \frac{\log n}{\log \log n + 1}  $ and $\delta = \frac{1}{2}\cdot n^{-d}$, there exists a Boolean function $f$ with 
$\regnum(f,\delta) \geq n - 2d\cdot n^{1/(d-1)}$.
\end{corollary}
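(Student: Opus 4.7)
The plan is to obtain this corollary as a direct consequence of \Cref{cor:nonexplicit-boolean-lower-bounds} by substituting $\delta = \tfrac{1}{2} n^{-d}$ and checking that the first term in the $\max$, namely $2d\cdot n^{1/(d-1)}$, dominates the second term $\log(16n^2/\delta^2)$ throughout the stated range of $d$. No new probabilistic construction is needed; the work is purely arithmetic bookkeeping.

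First, I would verify the hypothesis of \Cref{cor:nonexplicit-boolean-lower-bounds}: we need $\delta < \tfrac{1}{2}\binom{n}{d}^{-1}$, which follows from $\binom{n}{d} < n^d$ for $d \geq 2$. Second, I would compute the second branch of the max: plugging in $\delta = \tfrac{1}{2}n^{-d}$ yields $16n^2/\delta^2 = 64\, n^{2d+2}$, so $\log(16n^2/\delta^2) = (2d+2)\log n + 6$. Thus it remains to prove
\[
2d\cdot n^{1/(d-1)} \;\geq\; (2d+2)\log n + 6
\]
for $3 \leq d \leq \tfrac{\log n}{\log\log n + 1}$, since then $\max\{2d\cdot n^{1/(d-1)}, \log(16n^2/\delta^2)\} = 2d\cdot n^{1/(d-1)}$ and the claimed lower bound follows directly.

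The heart of the computation is to show the inequality $n^{1/(d-1)} \geq e\log n$ under the given range of $d$. The hypothesis $d(\log\log n + 1) \leq \log n$ rearranges to $(d-1)(\log\log n+1) \leq \log n - \log\log n - 1$, so
\[
\frac{\log n}{d-1} \;\geq\; \log\log n + 1 + \Omega(1)
\]
for $n$ large, and exponentiating gives $n^{1/(d-1)} \geq e\log n$. Plugging this back in gives $2d\cdot n^{1/(d-1)} \geq 2de\log n$, which comfortably exceeds $(2d+2)\log n + 6$ for $d \geq 3$ and $n$ sufficiently large, as required.

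The only real obstacle is verifying that the threshold $d \leq \tfrac{\log n}{\log\log n + 1}$ is tight enough to push $n^{1/(d-1)}$ above $\log n$—this threshold is chosen precisely so that $1/(d-1)$ is just a bit larger than $\log\log n / \log n$. Beyond this estimate, the rest is a short chain of substitutions into \Cref{cor:nonexplicit-boolean-lower-bounds}.
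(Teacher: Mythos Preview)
Your plan is the same as the paper's: plug $\delta=\tfrac12 n^{-d}$ into \Cref{cor:nonexplicit-boolean-lower-bounds} and verify that $2d\,n^{1/(d-1)}$ dominates $\log(16n^2/\delta^2)$. The paper does this by bounding both sides against $\log^2 n/\log\log n$ (using that $2x\,n^{1/(x-1)}$ is decreasing on the stated range), while you compare more directly; either route works.

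One small slip: from $(d-1)(\log\log n+1)\le \log n-\log\log n-1$ you only get
\[
\frac{\log n}{d-1}\ \ge\ (\log\log n+1)\cdot\frac{d}{d-1},
\]
and the surplus $\tfrac{\log\log n+1}{d-1}$ is \emph{not} $\Omega(1)$ at the top of the range $d\approx \log n/\log\log n$; it is $o(1)$. So you cannot conclude $n^{1/(d-1)}\ge e\log n$ uniformly. What you do get is $n^{1/(d-1)}\ge 2\log n$, and that is already enough: $2d\cdot 2\log n=4d\log n\ge (2d+2)\log n+6$ for all $d\ge 3$ and $n$ large. With this correction, your argument is complete and matches the paper's.
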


\begin{proof}
 The function is the same as in \Cref{cor:nonexplicit-boolean-lower-bounds}. We argue that by our choice of parameters, $k$ is always maximized by the first term. We first note that $\frac{1}{2}\cdot \binom{n}{d}^{-1} > \frac{1}{2}\cdot n^{-d} = \delta$, so our choice for $\delta$ is valid. Next, we have that

$$\log (16 n^2/\delta^2) = 5 + 2\log n + 2d\log n \leq  3d \log n \leq 3 \frac{\log^2n}{\log \log n } ,$$
where we used the fact that $d \geq 3$ and $n$ is sufficiently large. On the other hand, note that the function $h(x) = 2xn^{1/(x-1)}$ is decreasing when $x \leq \frac{\log n}{\log \log n + 1}$. Therefore, we have that
\begin{align*}
    2d n ^{1/(d-1)} &\geq 2\cdot \frac{\log n}{\log \log n}\cdot n^{\frac{\log \log n + 1}{\log n}} = 4\cdot \frac{\log^2 n}{\log \log n}. 
\end{align*}

Therefore, the first term is the larger term in \Cref{cor:nonexplicit-boolean-lower-bounds}, as desired.
\end{proof}

\subsubsection{Explicit Lower Bounds on $\regnum(f, \delta)$}
\begin{lemma}[Related to Corollary 1.1 in \cite{parity-kill}]
\label{claim:superlinear-boolean}
For each $\delta > 0$, there exists an explicit Boolean function $f: \F_2^n \to \{0,1\}$ with $\regnum(f,\delta) = \Omega\left((\log \frac{1}{\delta})^{\log_2(3)}\right )$.
\end{lemma}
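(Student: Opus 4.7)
The plan is to construct $f$ as an iterated composition of a small Boolean ``base'' function $g$, following the approach used in the proof of Corollary~1.1 of \cite{parity-kill}. Let $g : \F_2^m \to \BooleanHypercube{}$ be a Boolean function on $m = O(1)$ variables with two key properties: (i) $g$ has large parity kill number under composition, so that the $k$-fold iterated composition satisfies $\regnum(f_k, 0) \geq 3^k$, and (ii) every non-zero Fourier coefficient of $g$ has magnitude exactly $c = 2^{-a}$ for some fixed positive integer $a$. Define $f_k$ recursively by composing $g$ on disjoint blocks of coordinates:
\[
f_k\bigparen{x^{(1)}, \ldots, x^{(m)}} = g\bigparen{f_{k-1}(x^{(1)}), \ldots, f_{k-1}(x^{(m)})},
\]
producing a Boolean function $f_k : \F_2^{m^k} \to \BooleanHypercube{}$ of degree $\deg(g)^k = 2^k$ when $\deg(g) = 2$. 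Property (i) is precisely the content of the composition theorem for parity kill number in \cite{parity-kill}.

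The key additional structural claim is that every non-zero Fourier coefficient of $f_k$ is an integer multiple of $c^{2^k}$. We would prove this by induction on $k$: expanding $f_k$ via $g$'s Fourier representation, each $\hat{f_k}(\gamma^{(1)} \| \cdots \| \gamma^{(m)})$ is a sum, over supports $S \supseteq T := \{i : \gamma^{(i)} \neq 0\}$, of terms of the form $\hat{g}(S) \cdot \prod_{i \in S} \hat{f_{k-1}}(\gamma^{(i)})$ (using the convention $\hat{f_{k-1}}(0) = \E[f_{k-1}]$, itself inductively a multiple of $c^{2^{k-1}}$). By the inductive hypothesis each factor is an integer multiple of $c^{2^{k-1}}$, and multiplying together with the outer factor $\hat{g}(S) = \pm c$ keeps the result an integer multiple of $c^{2^k}$. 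Summing over $S$ preserves this granularity.

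Given the two properties, the bound follows directly from \Cref{fact:Fourier-coeff-of f_U-as-a-sum-of-fourier-coeff-of-f-from-vperp}: for any affine subspace $\calU$, each restricted coefficient $\hat{(f_k)_\calU}(\gamma)$ is a signed sum of Fourier coefficients of $f_k$, and hence also an integer multiple of $c^{2^k}$. Therefore, if $\delta < c^{2^k}$, any $\delta$-regular restriction $(f_k)_\calU$ has every non-trivial restricted coefficient equal to zero, so $(f_k)_\calU$ is constant; this forces $\codim(\calU) \geq \regnum(f_k, 0) \geq 3^k$. Choosing $k$ so that $c^{2^k} \approx \delta$ (i.e., $k \approx \log_2 \log_2(1/\delta)$) yields $\regnum(f_k, \delta) \geq 3^k = \Omega\bigparen{(\log(1/\delta))^{\log_2 3}}$, matching the claim (with the explicit threshold $\delta \leq 2^{-2^k - 1}$ following from tracking the constants in $c^{2^k}$).

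The main obstacle is identifying a suitable base function $g$ with both the uniform Fourier-magnitude property and a parity kill number that composes to give $3^k$ after iteration. A natural candidate such as $\NAE_3$ has uniform Fourier magnitudes but parity kill number only $2$, so one needs a slightly larger base (as in \cite{parity-kill}). A secondary technical subtlety is that the inductive Fourier-granularity argument must track how the sum over supports interacts with the contributions of $\E[f_{k-1}]$, which may itself be a nontrivial multiple of $c^{2^{k-1}}$; handling this requires showing that the sum-of-products expansion never produces denominators larger than $2^{2^k}$, which ultimately follows because the base has denominators that are powers of $2$ and the iteration multiplies these denominators in a controlled (at most doubling per level) fashion.
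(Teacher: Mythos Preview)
Your proposal follows the same overall architecture as the paper: iterate a small base function, invoke the parity-kill composition theorem of \cite{parity-kill} to lower-bound $\regnum(f_k,0)$, argue that all Fourier coefficients of $f_k$ are integer multiples of $2^{-2^k}$, and conclude that $\delta$-regularity for $\delta<2^{-2^k}$ forces constancy.

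Two places where the paper is sharper than your sketch are worth flagging. First, the paper supplies the base function explicitly: $g(x_1,x_2,x_3,x_4)=\tfrac12(-1)^{x_1+x_3}+\tfrac12(-1)^{x_2+x_3}+\tfrac12(-1)^{x_1+x_4}-\tfrac12(-1)^{x_2+x_4}$, and the relevant parameter is not $\paritykill{g}$ (which is only $2$) but $C_{\min}[g]=3$; the lower bound $3^{k-1}$ comes from the inequality $\paritykill{f^{\circ k}}\ge C_{\min}[f]^{k-1}$, obtained by combining the composition theorem with supermultiplicativity of $C_{\min}$. Your phrasing ``large parity kill number under composition'' obscures that it is $C_{\min}$ doing the work. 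Second, the paper obtains granularity much more cleanly than your proposed Fourier-level induction: it simply shows $\deg(f^{\circ k})=2^k$ (immediate from $\deg g=2$ and a two-line $\F_2$-polynomial computation), and then invokes the standard fact that any Boolean function of Fourier degree $d$ has all coefficients in $2^{-d}\Z$. Your inductive expansion over supports $S$ is correct in outline, but as you yourself note it runs into trouble when $|S|\le 1$, since then $c\cdot c^{|S|\cdot 2^{k-1}}$ need not be a multiple of $c^{2^k}$; this is only rescued because the specific $g$ above has no level-$0$ or level-$1$ coefficients, a fact your write-up does not isolate. The degree-based route sidesteps this entirely.
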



The proof of \Cref{claim:superlinear-boolean} is based on \Cref{thm:parity-kill-composition}, which appeared in a slightly weaker form in \cite{parity-kill}. 

\begin{theorem}[\cite{parity-kill}]
    \label{thm:parity-kill-composition}
    Let $f: \F_2^n \to \F_2$, and $g: \F_2^m \to \F_2$. We have that 
    $$\paritykill{f \circ g} \geq \paritykill{f} + C_{\min}[f]\cdot B_g,$$
     where $B_g = \max\{\log \paritykill{g} -1, 1\}$.
\end{theorem}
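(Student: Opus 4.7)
The plan is to prove the composition inequality by isolating a minimum parity-killing set $\Lambda$ for $F = f \circ g$ and decomposing it into pieces that separately contribute $\paritykill{f}$ and $C_{\min}[f] \cdot B_g$ to the total count. Let $\Lambda$ be a minimum set of affine parities over $\F_2^{nm} = (\F_2^m)^n$ whose restriction makes $F$ constant, so $|\Lambda| = \paritykill{f\circ g}$. Using Gaussian elimination, which preserves the affine subspace defined by $\Lambda$, I would put $\Lambda$ in a normal form in which each row is either \emph{block-local} (supported on one of the $n$ blocks of $m$ coordinates) or a \emph{cross row} whose support touches multiple blocks. For each block $i$, track the number $\ell_i$ of block-local rows and the contribution of cross rows to block $i$, so that $|\Lambda| = t + \sum_i \ell_i$ with $t$ the number of cross rows.

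Next I would classify each block $i$ as \emph{killed} (meaning $g(x_i)$ is forced to a specific constant $c_i$ on the restricted subspace) or \emph{free}. Because the free blocks collectively achieve every bit-pattern consistent with the cross constraints, and $F = f \circ g$ is constant on the subspace, the sub-function of $f$ obtained by hard-wiring each killed coordinate to $c_i$ must already be identically constant. Hence the killed-block set $K$ serves as a constancy certificate of $f$, giving $|K| \geq C_{\min}[f]$. For the $\paritykill{f}$ summand, I would separately observe that projecting $\Lambda$ through the ``block-output algebra''—i.e., collapsing each block $i$ to the single bit $g(x_i)$—yields a parity-killer for $f$, which must therefore have size at least $\paritykill{f}$. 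The two contributions are extracted from disjoint pieces of $\Lambda$, which justifies adding them.

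Third, I would bound the cost per killed block. The naive estimate is that each killed block $i$ demands $\ell_i$ plus its cross-row share to be at least $\paritykill{g}$, since any affine subspace of $\F_2^m$ on which $g$ is constant has codimension at least $\paritykill{g}$. The improvement to $B_g = \max\{\log \paritykill{g} - 1, 1\}$ per killed block should come from an amortized argument exploiting the fact that cross rows are shared between blocks: a single cross row of $\Lambda$ can contribute to killing at most a constant number of blocks simultaneously, while the exponential gap between ``killing'' (dimension $\paritykill{g}$) and ``recording one bit of information'' ($1$ dimension) introduces the $\log \paritykill{g}$ factor. Concretely, I would partition the cross rows by which killed block they are ``charged'' to and verify, via a double-counting argument on the codimensions induced per block, that $\sum_{i\in K} \ell_i + t \geq |K| \cdot B_g$. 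Combined with the previous step, this yields $\paritykill{f \circ g} \geq \paritykill{f} + C_{\min}[f]\cdot B_g$.

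The main obstacle is the third step: replacing the naive per-block cost of $\paritykill{g}$ by the logarithmic $B_g$. The delicacy lies in cross rows that simultaneously assist in killing several blocks, since each such row may look like it is ``paying for'' multiple killed blocks at once. I expect the right route to be an induction on $n$ (or on the number of killed blocks), treating blocks whose local-plus-cross charge is below $B_g$ as free and absorbing them into $A$, and blocks above the threshold as contributing $B_g$ directly. The structural input is the affine-subspace view of parity-killing, together with the observation that a subspace recording $b$ bits of information about a block's coordinates has codimension at least $b$ there, while forcing $g$ to be constant on that block costs codimension $\paritykill{g}$—exchanging these two gives precisely the $\log \paritykill{g}-1$ savings.
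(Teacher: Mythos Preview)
Your global decomposition differs from the paper's approach, and the plan has genuine gaps rather than just missing details.

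The paper does not analyze all $n$ blocks simultaneously. Instead it proves a one-block reduction (Proposition~B.1): given $f'(x,g(y))$ constant on $H$, one constructs $H'$ on which $f'$ is constant, with either (i) $\codim(H') \le \codim(H) - B_g$, or (ii) $\codim(H') \le \codim(H)$ and the last coordinate is irrelevant in $H'$. Applying this $n$ times to $f\circ g$ yields a subspace $H'$ on which $f$ itself is constant; case~(i) must occur at least $C_{\min}[f]$ times (otherwise too many coordinates would be irrelevant in $H'$), so $\paritykill{f} \le \codim(H') \le \paritykill{f\circ g} - C_{\min}[f]\cdot B_g$. Thus the two summands are not extracted from ``disjoint pieces of $\Lambda$'' as you propose; $\paritykill{f}$ is simply the residual codimension after all the $B_g$ decrements.

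Your killed/free dichotomy is too coarse, and this is exactly where the $\log\paritykill{g}$ comes from. After canonizing $H$ into pure-$y$ constraints $\calB_y$, pure-$x$ constraints $\calB_x$, and mixed constraints $\calB_{x,y}$ (of size $s$), the decisive intermediate case is when $g$ restricted to $\calB_y$ becomes an $s$-junta on the mixed $y$-variables without being constant. Then $g$ can be made constant by $|\calB_y| + 2^s$ parities, so $|\calB_y| + 2^s \ge \paritykill{g}$, hence $|\calB_y| + s \ge \log\paritykill{g}$, and the codimension drops by $|\calB_y| + s - 1 \ge B_g$. Your charging heuristic (``a cross row contributes to at most a constant number of blocks'') is false in general and does not capture this junta argument.

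Finally, your ``project $\Lambda$ through the block-output algebra to get a parity-killer for $f$'' step is not well-defined: affine constraints on the $x$-variables do not in general push forward to affine constraints on the bits $g(x_i)$. The paper sidesteps this entirely by producing an actual affine subspace $H'\subseteq \F_2^n$ at the end of the iteration.
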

In fact, in \cite{parity-kill} \Cref{thm:parity-kill-composition} appeared as 
$$\paritykill{f \circ g} \geq \paritykill{f} + C_{\min}[f],$$
but they assumed only that $\paritykill{g} \geq 2.$ Therefore, the above result is strictly stronger for any $g$ such that $\paritykill{g}> 4$. We include a proof of this slightly stronger fact in \Cref{appendix:paritykill}.



    
    

We require the following corollary of \Cref{thm:parity-kill-composition}.

\begin{corollary}
\label{cor:supermultiplicativity-of-paritykill}
We have that 
$$\paritykill{f^{\circ k}} \geq B_g \cdot \frac{C_{\min}[f]^k - C_{\min}[f]}{C_{\min}[f] - 1} + \paritykill{f} \geq B_g \cdot C_{\min}[f]^{k-1},$$
where $B_g = \max\{\log \paritykill{g} -1, 1\}$.
\end{corollary}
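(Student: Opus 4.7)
The plan is a direct induction on $k$ using \Cref{thm:parity-kill-composition} combined with the standard multiplicativity of minimum certificate complexity under composition. Set $c := C_{\min}[f]$ for brevity. The crucial bookkeeping point is to peel copies of $f$ off the \emph{inside} of the composition tower at each step, so that the role of ``$g$'' in the theorem is always played by the same innermost factor; peeling from the outside would instead produce terms $B_{f^{\circ j}}$ containing $\log \paritykill{f^{\circ j}}$ and destroy the geometric structure of the bound.

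\textbf{Base case.} For $k = 1$ the sum $(c - c)/(c-1)$ vanishes, and the claim reduces to $\paritykill{f} \geq \paritykill{f}$.

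\textbf{Inductive step.} I write $f^{\circ k} = f^{\circ (k-1)} \circ f$ and apply \Cref{thm:parity-kill-composition} with $f^{\circ (k-1)}$ as the outer function and $f$ as the inner function, obtaining
$$\paritykill{f^{\circ k}} \;\geq\; \paritykill{f^{\circ (k-1)}} + C_{\min}\!\left[f^{\circ(k-1)}\right] \cdot B_g.$$
I then invoke the standard supermultiplicativity $C_{\min}[h \circ h'] \geq C_{\min}[h]\cdot C_{\min}[h']$, which inductively yields $C_{\min}\!\left[f^{\circ(k-1)}\right] \geq c^{\,k-1}$, and apply the inductive hypothesis to $\paritykill{f^{\circ(k-1)}}$:
$$\paritykill{f^{\circ k}} \;\geq\; \paritykill{f} + B_g \cdot \frac{c^{\,k-1} - c}{c-1} + B_g \cdot c^{\,k-1}.$$
Collecting the last two terms over the common denominator $c-1$ gives $B_g \cdot (c^k - c)/(c-1)$, which is the first inequality of the corollary.

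\textbf{Second inequality.} For $k \geq 2$ and $c \geq 2$, the elementary inequality $(c^k - c)/(c-1) \geq c^{\,k-1}$ holds (it rearranges to $c^{\,k-1} \geq c$). Discarding the nonnegative $\paritykill{f}$ term then yields the stated bound $B_g \cdot c^{\,k-1}$.

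\textbf{Main obstacle.} There is no substantive difficulty; the only subtlety is the direction of the peeling in the inductive step, together with the use of the supermultiplicativity of $C_{\min}$ under composition --- the single ingredient beyond \Cref{thm:parity-kill-composition} needed for the argument.
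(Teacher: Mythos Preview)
Your proof is correct and follows essentially the same route as the paper's: both apply \Cref{thm:parity-kill-composition} with outer function $f^{\circ(k-1)}$ and inner function $f$, invoke the supermultiplicativity of $C_{\min}$ (citing \cite{tal-composition}), and then unfold the resulting recursion into the geometric sum. The only difference is presentational---you phrase the unrolling as an explicit induction, whereas the paper collapses it in one display.
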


\begin{proof}[Proof that \Cref{thm:parity-kill-composition} implies \Cref{cor:supermultiplicativity-of-paritykill}]

Let $f = f^{\circ (k-1)}$ and $g=f$. We have by the theorem that

\begin{align*}
    \paritykill{f^{\circ k}} &\geq \paritykill{f^{\circ (k-1)}} + C_{\min}[f^{\circ (k-1)}] \cdot B_g \\
    &\geq \paritykill{f^{\circ (k-1)}} + C_{\min}[f]^{k-1} \cdot B_g \tag{Supermultiplicativity of $C_{\min}$, see \cite{tal-composition}} \\
    &\geq B_g \cdot \sum_{i=1}^{k-1} C_{\min}[f]^{i} + \paritykill{f} \\
    &= B_g \cdot \frac{C_{\min}[f]^k - C_{\min}[f]}{C_{\min}[f] - 1} + \paritykill{f} \geq B_g \cdot C_{\min}[f]^{k-1}. \qedhere
\end{align*}

\end{proof}

For our application, we make the following crucial observation: if $f$ has Fourier coefficients that are all of equal magnitude $\delta$, then any restriction to an affine subspace results in Fourier coefficients of the restricted function that are integer multiples of $\delta$. Hence, if $f$ is $\delta'$-regular, for any $\delta' < \delta$, then $f$ is in fact, constant. In this scenario, finding a subspace in which $f$ is $\delta$-regular is equivalent to finding a subspace where it is constant.

\begin{proof}[Proof of \Cref{claim:superlinear-boolean}]
Consider the following function $g: \F_2^n \to \pmone$:


$$g(x_1, x_2, x_3, x_4) = \frac{1}{2}(-1)^{x_1 + x_3} + \frac{1}{2}(-1)^{x_2 + x_3} + \frac{1}{2}(-1)^{x_1 + x_4} -\frac{1}{2}(-1)^{x_2 + x_4}.$$
Define the function $f:= \frac{1-g}{2}$ so that $f: \F_2^n \to \F_2$.
In other words, f is equal to ${x_1 + x_3}$ if $x_1 = x_2$ and $x_1 + x_4 + \indicator\{x_1 = 0\}$ otherwise. Note that $f$ is a degree 2 function, where all non-zero Fourier coefficients have the same magnitude. We also claim that $\paritykill{f} = 2$. Indeed, we can fix $x_1 + x_2 = 0$ and $x_1 + x_3 = 0$, and we know $f$ equals 0. On the other hand, we have that $C_{\min}[f] = 3$. 

We examine $f^{\circ k} = f(f_1,f_2,f_3, f_4)$, where the $f_i$'s are copies of $f^{\circ (k-1)}$ over disjoint sets of inputs. We claim by induction that $\deg (f^{\circ k}) = 2^k$. This is clearly true when $k = 1$, and for the inductive step we can write
$$f(f_1,f_2,f_3,f_4) = \begin{cases}
f_1 + f_3 & \text{ if }f_1 = f_2 \\
f_4  + f_1 & \text{ if }f_1 \neq f_2.
\end{cases}
$$
Since $\indicator\{f_1 = f_2\} = f_1 + f_2 + 1$, we can write
$$f(f_1,f_2,f_3,f_4) = (f_1 + f_3)(f_1 + f_2 + 1) + (f_1 + f_4)( f_1 + f_2).$$

Therefore, by the inductive hypothesis and the fact that the $f_i$'s are supported over disjoint variables, we have that $\deg f^{\circ k} = 2\cdot \deg f^{\circ (k-1)} = 2^k$. Therefore,\footnote{See \cite{AOBFbook}, Exercise 1.9 or \Cref{claim:granular-coefficients}.} all the Fourier coefficients are integer multiples of $1/2^{2^k}$. So, to make $f$ $\delta$-regular for $\delta < 1/2^{2^k}$, it must be fixed to a constant. Suppose we set $\delta = 1/2^{2^k + 1} $.
By \Cref{cor:supermultiplicativity-of-paritykill}, we have that 
$$\paritykill{f^{\circ k}} \geq C_{\min}[f]^{k-1} = 3^{k-1} = \frac{1}{3}\cdot (2^{k})^{\log_2(3)} = \frac{1}{3}\cdot \frac{1}{2^{\log_2(3)}} \log(1/\delta)^{\log_2(3)} = \frac{1}{9}\cdot \log(1/\delta)^{\log_2(3)}. \qedhere$$

\end{proof}

We now show that the majority function, denoted by $\MAJ_n$, also has a large $\regnum(f,\delta)$ value when $\delta = O(1/\sqrt{n})$.
\begin{lemma}
\label{lemma:majority-lower-bound}
There is an absolute constant $C> 0$, such that for all sufficiently large $n$, $\regnum(\MAJ_n,\delta) \geq \Omega(n^{1/2})$ for any $\delta \leq C/\sqrt{n}$.
\end{lemma}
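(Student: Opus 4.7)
The plan is to leverage two structural facts about $\MAJ_n$: that $\hat{\MAJ}_n(\gamma)$ vanishes whenever $|\gamma|$ is even, and that every level-$1$ Fourier coefficient has magnitude $(1+o(1))\sqrt{2/(\pi n)}$ (classical; see \cite{AOBFbook}). Combined with \Cref{fact:Fourier-coeff-of f_U-as-a-sum-of-fourier-coeff-of-f-from-vperp}, this structure will let us isolate an $\Omega(1/\sqrt{n})$-sized contribution inside many Fourier coefficients of $f_\calU$, which should be incompatible with $\delta$-regularity for $\delta \leq c/\sqrt{n}$ once the codimension is small.

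Concretely, I would fix any affine subspace $\calU = \alpha + \calV$ of codimension $C$ and a complement $\calW$ with $\calW \oplus \calV^{\perp} = \F_2^n$, then apply \Cref{claim:std-basis-vecs-in-small-subspace} to obtain a set $S_1 \subseteq \calW$ of size at least $n - 2C$ such that each coset $\gamma + \calV^{\perp}$ with $\gamma \in S_1$ contains exactly one standard basis vector, call it $e_{i(\gamma)}$. By \Cref{fact:Fourier-coeff-of f_U-as-a-sum-of-fourier-coeff-of-f-from-vperp}, for each such $\gamma$ we may write
\[
\hat{f_\calU}(\gamma) \;=\; (-1)^{\alpha_{i(\gamma)}}\, \hat{\MAJ}_n(e_{i(\gamma)}) \;+\; Q(\gamma),
\]
where $Q(\gamma) = \sum_{\beta} \hat{\MAJ}_n(\beta)\,(-1)^{\inner{\beta, \alpha}}$ ranges over odd-weight vectors of weight at least $3$ lying in the coset (even-weight terms drop out). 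If $f_\calU$ is $\delta$-regular for $\delta$ smaller than half the level-$1$ magnitude, the reverse triangle inequality forces $|Q(\gamma)| = \Omega(1/\sqrt{n})$ for every $\gamma \in S_1$, and hence $\sum_{\gamma \in S_1} Q(\gamma)^2 = \Omega(1)$ whenever $C = o(n)$.

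The hard part will be to show that this conclusion is infeasible once $C = o(\sqrt{n})$. A direct triangle-inequality bound using $|\hat{\MAJ}_n(\beta)| = \Theta(n^{-|\beta|/2})$ and the fact that each coset has at most $2^C$ elements only yields $C = \Omega(\log n)$, since the dominant level-$3$ contribution is at scale $2^C/n^{3/2}$ in the worst case. To push the bound to $\Omega(\sqrt{n})$, I would expand $\sum_{\gamma \in \calW} Q(\gamma)^2$ via a Parseval-style identity: the diagonal part equals the higher-level Fourier mass $1 - W_1[\MAJ_n]$, while the off-diagonal part reduces to a signed sum over $v \in \calV^{\perp} \setminus \{0\}$ of autocorrelations $\sum_\beta \hat{\MAJ}_n(\beta)\hat{\MAJ}_n(\beta+v) = \E_x[\MAJ_n(x)\MAJ_n(x+v)]$, which for even $|v|$ obeys the classical estimate $1 - \Theta(\sqrt{|v|/n})$. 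The key technical challenge is to convert the smallness $|\calV^{\perp}| = 2^C$ into a sufficiently sharp bound on the aggregate of these autocorrelations so that the contradiction emerges exactly at the $C = \Theta(\sqrt{n})$ scale; this is where the specific Fourier structure of $\MAJ_n$ (and not just generic $[-1,1]$-bounded estimates) is essential.
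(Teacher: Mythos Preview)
Your key identity is false: $\sum_\beta \hat{\MAJ}_n(\beta)\hat{\MAJ}_n(\beta+v)$ equals $\widehat{\MAJ_n^2}(v) = \indicator\{v = 0\}$ (since $\MAJ_n$ is $\pm 1$-valued), not the autocorrelation $\E_x[\MAJ_n(x)\MAJ_n(x+v)] = \sum_\beta \hat{\MAJ}_n(\beta)^2 \chi_\beta(v)$, so the noise-stability estimate $1 - \Theta(\sqrt{|v|/n})$ is irrelevant. What you are really expanding is $\E_{x\in\calU}[h(x)^2]$ for $h = \MAJ_n^{\geq 3}$; using $\widehat{\MAJ_n^2}(v) = 0$, the off-diagonal terms $\widehat{h^2}(v)$ reduce to level-$1$/level-$\geq 3$ cross terms of the shape $a_1\bigl(|v|\,a_{|v|-1} + (n-|v|)\,a_{|v|+1}\bigr)$. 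Bounding their aggregate over $v \in \calV^\perp\setminus\{0\}$ still requires controlling how many vectors of each even weight lie in $\calV^\perp$ --- the very counting you hoped to bypass --- and you have not indicated how the $\sqrt{n}$ threshold would emerge from it.

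The paper does not aggregate at all. It stays with a \emph{single} Fourier coefficient $\hat{f_\calU}(u)$ and repairs the weakness you correctly identify in the triangle-inequality route not by changing norms but by sharpening the per-coset count: \Cref{lemma:low-weight-vecs-in-small-subspace} shows that for most cosets $u + \calV^\perp$ containing exactly one weight-$1$ vector, the number of weight-$t$ vectors is at most $2\binom{2C+1}{t-1}$, i.e., roughly $C^{t-1}$ rather than the trivial $2^C$. With $C = c\sqrt{n}$ the level-$t$ contribution becomes $O\bigl((c/\sqrt t)^{t-1}\bigr)\cdot|\hat{f}(e_1)|$, and for $c$ a small absolute constant (the paper takes $C = \sqrt n/(10e) - 1$) the sum over odd $t\ge 3$ stays below $\tfrac12|\hat{f}(e_1)|$; the tail $t > C$ and the symmetric weights near $n$ are handled separately via \Cref{claim:low-weight-vecs-in-small-subspace} and item~3 of \Cref{claim:maj-fourier-coeff}. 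That per-coset refinement, not any $L^2$ cancellation, is the mechanism you are missing.
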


We need the following three claims to prove this lemma.

\begin{claim}[Fourier Spectrum of $\MAJ_n$, Corollary of Theorem 5.19 in \cite{AOBFbook}]
\label{claim:maj-fourier-coeff}
Consider $f = \MAJ_n$. Each of the following hold. 
\begin{enumerate}
    \item For each $t\in \mathbb{N}$ and $\gamma\in \F_2^n$ with $\|\gamma\|_1 = t$, 
\begin{align*}
\abs{\hat{f}(\gamma)} \leq
\begin{cases}
\left(\frac{t}{n}\right)^{\frac{t-1}{2}}\abs{\hat{f}(e_1)}, &\text{ if $t$ is odd}, \\
0 &\text{ otherwise.}
\end{cases}
\end{align*}
\item For any $\gamma$ with $\|\gamma\|_1 = 1$, $\abs{\hat{f}(\gamma)} \geq \sqrt{\frac{2}{\pi n}}$.
\item For any $\gamma, \gamma'$ such that $\norm{\gamma}_1 + \norm{\gamma'}_1 = n+1$, it holds that $\bigabs{\hat{f}(\gamma)} = \bigabs{\hat{f}(\gamma')}$.
\end{enumerate}
\end{claim}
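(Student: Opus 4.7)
The plan is to derive all three items from the closed-form expression for the Fourier coefficients of $\MAJ_n$ provided by Theorem~5.19 of \cite{AOBFbook}. Throughout, I would restrict to odd $n$ (which covers all sufficiently large $n$). Because $\MAJ_n$ is symmetric, $\hat{\MAJ}_n(\gamma)$ depends only on $\|\gamma\|_1 = t$; denote this common value by $\alpha_t$. The cited theorem, when rearranged, yields for odd $t$ the identity
\begin{align*}
|\alpha_t| \;=\; \frac{1}{2^{n-1}}\binom{n-1}{(n-1)/2}\cdot\frac{\binom{(n-1)/2}{(t-1)/2}}{\binom{n-1}{t-1}},
\end{align*}
which is the single identity that drives everything.

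For the even case of item~1, I would apply the odd-symmetry $\MAJ_n(\bar x)=-\MAJ_n(x)$ (valid because $n$ is odd) directly inside the definition of $\hat{\MAJ}_n(\gamma)$; combined with $\chi_\gamma(\bar x)=(-1)^{\|\gamma\|_1}\chi_\gamma(x)$ this forces $\hat{\MAJ}_n(\gamma)\bigl(1+(-1)^{\|\gamma\|_1}\bigr)=0$, so $\alpha_t=0$ whenever $t$ is even. For item~2, plugging $t=1$ into the displayed formula gives $|\alpha_1|=\binom{n-1}{(n-1)/2}/2^{n-1}$, and the standard central-binomial lower bound $\binom{2m}{m}\geq 4^m/\sqrt{\pi(m+\tfrac12)}$ then yields $|\alpha_1|\geq\sqrt{2/(\pi n)}$ for all sufficiently large $n$. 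For item~3, given $t+t'=n+1$ (so $t, t'$ share parity since $n$ is odd), I would apply $\binom{a}{b}=\binom{a}{a-b}$ to the numerator and denominator of the displayed formula: since $(t'-1)/2=(n-1)/2-(t-1)/2$ and $t'-1=(n-1)-(t-1)$, both binomials swap and $|\alpha_t|=|\alpha_{t'}|$; in the even case both quantities are $0$.

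The main work is the odd case of item~1, which after dividing through reduces to the binomial inequality
\begin{align*}
\frac{\binom{(n-1)/2}{(t-1)/2}}{\binom{n-1}{t-1}}\;\leq\;\left(\frac{t}{n}\right)^{(t-1)/2}.
\end{align*}
Setting $m=(n-1)/2$ and $j=(t-1)/2$ and applying the identity $(2a)!=\binom{2a}{a}(a!)^2$ to $\binom{2m}{2j}$, the left-hand side rewrites as $\binom{2j}{j}\binom{2m-2j}{m-j}/\bigl(\binom{2m}{m}\binom{m}{j}\bigr)$; combining the Stirling-based estimate $\binom{2a}{a}\leq 4^a/\sqrt{\pi a}$ with $\binom{2m}{m}\geq 4^m/\sqrt{2m}$ and $\binom{m}{j}\geq (m/j)^j$ produces the required decay (one can verify the endpoint cases $j=0$ and $j=m$ give equality, and the intermediate regime follows from the estimates above). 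The main obstacle is carrying out this routine binomial chase cleanly; the identity $(2a)!=\binom{2a}{a}(a!)^2$ is the pivot that turns a ratio of ``fat'' binomials into a ratio of centrals, after which the standard estimates immediately cancel the $4^{\ast}$ factors.
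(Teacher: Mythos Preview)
Your overall plan is sound: all three items follow from the closed-form in Theorem~5.19 of \cite{AOBFbook}, and your treatments of the even case, item~2, and item~3 are the standard ones (the paper does not even spell these out). For the odd case of item~1, however, your route through central binomials and Stirling-type estimates is more involved than necessary and contains a small slip: the lower bound $\binom{2m}{m}\geq 4^m/\sqrt{2m}$ you invoke is false for every $m\geq 1$, since in fact $\binom{2m}{m}\leq 4^m/\sqrt{\pi m}$ and $\pi>2$. Replacing it with the valid bound $\binom{2m}{m}\geq 4^m/\sqrt{4m}$ lets your argument be salvaged, as the resulting prefactor is still below~$1$.

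The paper sidesteps Stirling entirely by rewriting the ratio as a product via double factorials:
\[
\frac{|\alpha_t|}{|\alpha_1|}\;=\;\frac{\binom{(n-1)/2}{(t-1)/2}}{\binom{n-1}{t-1}}\;=\;\frac{(t-2)!!\,(n-t-1)!!}{(n-2)!!}\;=\;\prod_{k=1}^{(t-1)/2}\frac{t-2k}{n-2k},
\]
and then observes that each factor satisfies $\frac{t-2k}{n-2k}\leq \frac{t}{n}$ (equivalent to $t\leq n$), giving $(t/n)^{(t-1)/2}$ immediately. This two-line computation replaces your entire ``binomial chase'' and delivers the exact inequality with no endpoint cases or constants to manage; your detour through the identity $(2a)!=\binom{2a}{a}(a!)^2$ is correct in principle but buys nothing here.
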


\begin{claim}[\cite{low-hamming-weight-in-subspaces-math-overflow}]
\label{claim:low-weight-vecs-in-small-subspace}
Let $\calU = \alpha + \calW$ be any affine subspace of $\F_2^n$. For every $t \in [n]$,  let $t^* := \min\{t, n-t\}$. Then, it holds that $$|\calU^{=t}| \leq \binom{\dim(\calW) + 1}{\leq t^*}.$$
\end{claim}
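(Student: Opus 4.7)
The plan is to prove the bound $|\calU^{=t}| \leq \binom{d+1}{\leq t^*}$ (with $d = \dim \calW$ and $t^* = \min(t, n-t)$) by induction on the ambient dimension $n$, with $d$ and $t$ varying freely. The base case $n = 0$ is trivial: $\calU$ is either empty or a single point of weight $0$, so the count is at most $1 = \binom{d+1}{\leq 0}$. For the inductive step, fix any coordinate $j \in [n]$ and split into two cases depending on whether the linear functional $w \mapsto w_j$ vanishes identically on $\calW$.

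In Case A ($w_j \equiv 0$ on $\calW$), every $v \in \calU$ satisfies $v_j = \alpha_j$; removing coordinate $j$ yields an affine subspace $\calU' \subseteq \F_2^{n-1}$ of the same dimension $d$, and $|\calU^{=t}| = |(\calU')^{=t - \alpha_j}|$. The inductive hypothesis produces a bound of $\binom{d+1}{\leq \min(t - \alpha_j,\, n - 1 - t + \alpha_j)}$. A direct check using $\alpha_j \in \{0,1\}$ shows both arguments of the $\min$ are at most the corresponding arguments of $t^* = \min(t, n-t)$, so this bound is at most $\binom{d+1}{\leq t^*}$.

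In Case B (the functional is surjective), $\calU$ partitions as $\calU_0 \sqcup \calU_1$ according to $v_j$, and after projecting out coordinate $j$ each becomes an affine subspace of $\F_2^{n-1}$ of dimension $d-1$. Hence
$$|\calU^{=t}| \leq \binom{d}{\leq \min(t,\, n-1-t)} + \binom{d}{\leq \min(t-1,\, n-t)}.$$
The target bound expands, via Pascal's identity, as $\binom{d+1}{\leq t^*} = \binom{d}{\leq t^*} + \binom{d}{\leq t^* - 1}$. A short two-case analysis then verifies the pairing: if $t \leq n/2$ then $t^* = t$, and the two summands above are bounded by $\binom{d}{\leq t}$ and $\binom{d}{\leq t-1}$ respectively (using $\min(t, n-1-t) \leq t$ and $\min(t-1, n-t) \leq t-1$); if $t > n/2$ then $t^* = n-t$, and the same two summands become $\binom{d}{\leq n-1-t} = \binom{d}{\leq t^*-1}$ and $\binom{d}{\leq n-t} = \binom{d}{\leq t^*}$ respectively. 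Either way the sum matches $\binom{d}{\leq t^*} + \binom{d}{\leq t^*-1}$.

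The main technical obstacle is this final bookkeeping in Case B, where the correspondence between the two inductive bounds and the two Pascal summands flips depending on the sign of $t - n/2$. One could alternatively cut the analysis in half by first reducing to $t \leq n/2$ via the weight-complementing bijection $v \mapsto v + \mathbf{1}_n$, which sends $\calU^{=t}$ into $(\mathbf{1}_n + \calU)^{=n-t}$ (an affine subspace of the same dimension $d$) and leaves $t^*$ invariant; however, since the inductive step may push $t$ back above $(n-1)/2$, the case split cannot be entirely avoided. Beyond this bookkeeping, nothing in the argument is subtle, and the induction closes cleanly.
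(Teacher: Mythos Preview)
Your induction is correct. The paper's own argument is quite different and shorter: it first reduces to $t\leq n/2$ via the complement map $v\mapsto v+\mathbf 1$, then passes from the affine $\calU$ to its linear span (dimension at most $\dim\calW+1$), chooses a row-reduced basis via Gaussian elimination so that each basis vector has a unique pivot coordinate, and observes that any vector of weight $\leq t^*$ can involve at most $t^*$ basis vectors. This immediately gives $|\calU^{=t}|\leq |\spn(\calU)^{\leq t^*}|\leq \binom{\dim\calW+1}{\leq t^*}$.

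The trade-off: the paper's proof is a one-shot counting argument that explains the shape of the bound (the binomial sum is literally the number of small basis subsets), but it relies on first moving from the affine space to a slightly larger linear one and on the existence of a pivot-disjoint basis. Your approach avoids any basis choice and works purely by dimension/coordinate recursion; the price is the Pascal bookkeeping and the $t\lessgtr n/2$ case split you identified, which the paper's reduction to $t\leq n/2$ lets it sidestep entirely.
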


\begin{lemma}\label{lemma:low-weight-vecs-in-small-subspace}
Let $\calV$ be a subspace of $\F_2^n$ of co-dimension $C$ and $\calW$ be such that $\calW \oplus \calV^\perp = \F_2^n$. For each $\ell \leq C+1$, there exists $S_{\ell}\subseteq \calW$ such that $|S_{\ell}| \geq  n - C(\ell+1)$, and for each $\gamma\in S_\ell$ the following two hold:
\begin{enumerate}
    \item $|\left(\gamma + \calV^{\perp}\right)^{=1}| = 1$ and
    \item $|\left(\gamma + \calV^{\perp}\right)^{=t}| \leq 2\cdot \binom{2C+1}{t-1}$, for each $t\leq \ell$.
\end{enumerate}
\end{lemma}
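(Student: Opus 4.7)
The plan is to iteratively refine the set $S_1$ provided by \Cref{claim:std-basis-vecs-in-small-subspace}, which already has size at least $n - 2C$ and satisfies condition (1) (each $\gamma + \calV^\perp$ contains a unique standard basis vector $e_{i(\gamma)}$, and the index set $I := \{i(\gamma) : \gamma \in S_1\}$ has $|[n] \setminus I| =: |T| \leq 2C$). For $t = 2, \ldots, \ell$, I would define $S_t := S_{t-1} \setminus B_t$, where $B_t := \{\gamma \in S_{t-1} : |(\gamma + \calV^\perp)^{=t}| > 2\binom{2C+1}{t-1}\}$, and aim to show $|B_t| \leq C$ at every stage. Telescoping then gives $|S_\ell| \geq (n - 2C) - (\ell - 1)C = n - C(\ell + 1)$, as required.

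For $\gamma \in S_1$, the identity $\gamma + \calV^\perp = e_{i(\gamma)} + \calV^\perp$ lets me parametrize weight-$t$ vectors in the shift by vectors $v \in \calV^\perp$ with $\|v + e_{i(\gamma)}\|_1 = t$, split into two cases: $v_{i(\gamma)} = 0$ with $\|v\|_1 = t-1$ (``Case A''), and $v_{i(\gamma)} = 1$ with $\|v\|_1 = t+1$ (``Case B''). The $S_1$ property forces every weight-$2$ vector of $\calV^\perp$ to be supported on $T$, since otherwise some $e_i + e_j \in \calV^\perp$ with $i \in I$ would contradict the uniqueness of $e_i$ inside $e_i + \calV^\perp$. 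I would extend this observation inductively --- using the constraints accumulated at stages $t' < t$ --- to argue that the weight-$(t-1)$ and weight-$(t+1)$ components of $\calV^\perp$ relevant to $\gamma$'s shift live in an auxiliary affine subspace of dimension at most $2C$, to which \Cref{claim:low-weight-vecs-in-small-subspace} applies to yield the target bound $2 \binom{2C+1}{t-1}$ for well-behaved $\gamma$ (the factor of $2$ absorbing the split between Case A and Case B).

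To cap $|B_t|$ by $C$, I would argue that each bad $\gamma$ corresponds to a linear dependency in $\calV^\perp$ along which the weight-$t$ patterns escape the confining dimension-$2C$ structure. Since such dependencies live in $\calV^\perp$ itself, they can be indexed by at most $\dim(\calV^\perp) = C$ independent configurations, echoing the span-based dimension count in the proof of \Cref{claim:std-basis-vecs-in-small-subspace} (where the key step was $\dim(\spn(S) + \calV^\perp) \leq |S| + C$).

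\textbf{The main obstacle} is that a straightforward averaging argument does not yield $|B_t| \leq C$: the average of $|(\gamma + \calV^\perp)^{=t}|$ over $\gamma \in S_1$ can exceed the threshold $2 \binom{2C+1}{t-1}$, so Markov's inequality is insufficient. The core technical step is showing that the bad shifts must lie in a low-dimensional failure locus within $\calW$, which requires carefully tracking how constraints from stages $1, 2, \ldots, t-1$ propagate to restrict the support structure of $\calV^\perp$'s low-weight vectors relative to $I$ and $T$. Making this propagation precise --- and arguing that bad configurations are genuinely constrained by a $\dim(\calV^\perp) = C$ count rather than only by the looser ambient-space bounds --- is the principal technical hurdle.
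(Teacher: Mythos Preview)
Your iterative structure matches the paper's exactly: start from the set $S_1$ of size $\geq n-2C$ supplied by \Cref{claim:std-basis-vecs-in-small-subspace}, and at each stage $t=2,\ldots,\ell$ discard the bad elements $B_t$, aiming for $|B_t|\leq C$ so that telescoping gives $|S_\ell|\geq n-C(\ell+1)$. The gap is entirely in how you establish $|B_t|\leq C$.

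Your Case~A / Case~B decomposition and the ``failure locus'' idea are not how the paper proceeds, and --- as you yourself flag --- it is not clear how to make them work. The observation that weight-$2$ vectors of $\calV^\perp$ are supported on $T$ does not propagate cleanly to higher weights (weight-$3$ vectors of $\calV^\perp$ can touch $I$), and there is no evident correspondence between bad $\gamma$'s and independent directions in $\calV^\perp$ that would cap their number at $\dim(\calV^\perp)=C$. The paper bypasses all of this with a short global counting contradiction. Suppose $|B_\ell|\geq C+1$ and fix any $J\subseteq B_\ell$ with $|J|=C+1$. Since $J\subseteq\calW$, the cosets $\{\gamma+\calV^\perp:\gamma\in J\}$ are pairwise disjoint (\Cref{fact:cosets-unique}), so
\[
|H|:=\Bigl|\bigcup_{\gamma\in J}(\gamma+\calV^\perp)^{=\ell}\Bigr|>(C+1)\cdot 2\binom{2C+1}{\ell-1}=\ell\binom{2C+2}{\ell}.
\]
On the other hand, $H\subseteq\spn(J\cup\calV^\perp)$, a subspace of dimension at most $|J|+C=2C+1$, so \Cref{claim:low-weight-vecs-in-small-subspace} gives $|H|\leq\binom{2C+2}{\leq\ell}\leq\ell\binom{2C+2}{\ell}$ for $\ell\leq C+1$ --- a contradiction. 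The idea you are missing is to \emph{pool} the bad shifts into a single low-dimensional subspace and count weight-$\ell$ vectors there all at once, rather than trying to control the structure of each shift individually via the internal anatomy of $\calV^\perp$.
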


\Cref{claim:maj-fourier-coeff} and \Cref{claim:low-weight-vecs-in-small-subspace} are powerful enough by themselves to achieve a weaker form of \Cref{lemma:majority-lower-bound}: one can use them to show that $\MAJ_n$ is not $\Omega(n^{-1/2})$-regular in any subspace of co-dimension $O(n^{1/3})$.\footnote{Following the proof sketch in \Cref{appendix:lb-sketches}, the reason the analysis breaks if we try to use only \Cref{claim:low-weight-vecs-in-small-subspace} and set $C = n^{1/3 + \eps}$ for any $\eps > 0$ is as follows. By \Cref{claim:low-weight-vecs-in-small-subspace}, there could be on the order of $C^3 = n^{1+3\eps}$ weight three vectors in our signed sum corresponding to the new Fourier coefficient. Since $|\hat{\MAJ_n}(\gamma)| = \Theta(n^{-3/2})$ when $\norm{\gamma}_1 = 3$, these coefficients could combine constructively to a magnitude of $\approx n^{1+3\eps} \cdot n^{-3/2} \gg n^{-1/2}$, thus potentially cancelling out the (single) level one coefficient, which has magnitude $|\hat{\MAJ_n}(e_1)| = \Theta(n^{-1/2})$. }
We now use \Cref{claim:maj-fourier-coeff}, \Cref{claim:low-weight-vecs-in-small-subspace}, and \Cref{lemma:low-weight-vecs-in-small-subspace}, the proofs of which are deferred to \Cref{appendix:omitted}, to prove \Cref{lemma:majority-lower-bound}. 

\begin{proof}[Proof of~\Cref{lemma:majority-lower-bound}]
Let $\calV$ be a subspace of $\F_2^n$ of co-dimension $C = \frac{\sqrt{n}}{10e}-1$, and suppose we restrict $\MAJ_n$ to the affine subspace $\calU = \alpha + \calV$. 
Applying~\Cref{lemma:low-weight-vecs-in-small-subspace} with $\ell = C+1$, we get a subset $S\subseteq \calW$ (where $\calW$ is such that $\calW \oplus\calV^{\perp} = \F_2^n$) of size at least $3$ such that each element $\gamma \in S$ satisfies both items in the lemma. 
In particular, there must be $u\in S$ that satisfies both properties as well as, $\mathbf 0 \not\in u + \calV^{\perp} $ and $\mathbf 1 \notin u + \calV^{\perp}$.\footnote{It is vital that $\mathbf 1 \not \in M\alpha^* + \calV^{\perp}$ since $|\hat{\MAJ_n}(\mathbf{1})| = |\hat{\MAJ_n}(e_i)|$, so they could cancel each other out.} For notational ease, let us denote $E:=u+\calV^{\perp}$. By \Cref{fact:Fourier-coeff-of f_U-as-a-sum-of-fourier-coeff-of-f-from-vperp}, we have
\begin{align}\label{eq:formula-restricted-Fourier-coeff-1}
    \abs{\hat{f_{\calU}}(u)} =
    \bigabs{\sum_{\eta \in E} \hat{f}(\eta)(-1)^{\eta,\alpha}}  
    &\geq \abs{\hat{f}(e_1)} - \sum_{t>1}^{n-1}|E^{=t}|\cdot\abs{\hat{f}\left(\sum_{i=1}^te_i\right)} \nonumber \\ &= \abs{\hat{f}(e_1)} -  \sum_{t>1}^{\frac{n+1}{2}}\left(|E^{=t}| + |E^{=n-t+1}|\right)\cdot\abs{\hat{f}\left(\sum_{i=1}^te_i\right)}.
\end{align}
In the second to last step, we used the facts that majority is a symmetric function and that $|E^{=1}|=1$ and $|E^{=n}|=0$. In the last step, we used item 3 of \Cref{claim:maj-fourier-coeff}. Next, we claim that
\begin{align}
|E^{=t}|+|E^{=n-t+1}| 
\leq \begin{cases}
    (t+1)\binom{2C + 1}{t-1} & \text{ when }t \leq \ell,\\
    2^{C+1} & \text{ otherwise}.
\end{cases}~\label{eq:product-estimate}
\end{align}

For the first case, when $t\leq \ell $, by item 2 of~\Cref{lemma:low-weight-vecs-in-small-subspace}, we have $|E^{=t}| \leq 2\binom{2C + 1}{t-1}$. Furthermore, from \Cref{claim:low-weight-vecs-in-small-subspace}, we have $|E^{=n-t+1}| \leq \binom{C+1}{\leq t-1} \leq \binom{2C + 1}{\leq t-1} \leq (t-1)\cdot \binom{2C+1}{t-1}$ for all $1 <t \leq \frac{\sqrt{n}}{10e}$. When $t> \ell$, we note that both $|E^{=t}|$ and $|E^{=n-t+1}|$ are at most $2^{C}$ since the dimension of $\calV^{\perp}$ is $C$, and this is tighter when $t > C+1$.

Using~\Cref{eq:product-estimate} and~\Cref{claim:maj-fourier-coeff}, we can estimate the sum in~\Cref{eq:formula-restricted-Fourier-coeff-1} as
\begin{align*} 
\abs{\hat{f_{\calU}}(u)} \leq \abs{\hat{f}(e_1)}\bigparen{1 - \underbrace{\sum_{t=3}^{\ell} \binom{2C+1}{t-1}\left(\frac{t}{n}\right)^{\frac{t-1}{2}}(t+1)}_A -  \underbrace{\sum_{t>\ell}^{\frac{n+1}{2}}2^{C+1}\left(\frac{t}{n}\right)^{\frac{t-1}{2}}}_{B}}.
\end{align*}

We complete the argument by showing an upper bound on both the above sums. Starting with $A$, and recalling that $C = \frac{\sqrt{n}}{10e} -1$, we see that
\begin{align*}
    A &\leq \sum_{t=3}^{\frac{\sqrt{n}}{10e}} \left(\frac{\sqrt{n}}{5(t-1)}\right)^{t-1}\left(\frac{t}{n}\right)^{\frac{t-1}{2}}\cdot (t+1) \tag{$\binom{n}{k} \leq \bigparen{\frac{en}{k}}^k.$} \\
    &\leq \sum_{t=3}^{\frac{\sqrt{n}}{10e}} \left(\frac{\sqrt{t}}{5(t-1)}\right)^{t-1}\cdot (t+1) \\
    &\leq \sum_{t=3}^{\frac{\sqrt{n}}{10e}} \left(\frac{(t+1)^{\frac{1}{2}}}{2(t+1)}\right)^{t-1} \cdot (t+1) \tag{$5(t-1) \geq 2(t+1) \quad \forall t \geq 3$.} \\
    &= \sum_{t=3}^{\frac{\sqrt{n}}{10e}} \left(\frac{1}{2(t+1)^{1/2}}\right)^{t-1} \cdot (t+1) \leq \sum_{i=2}^{\infty}\bigparen{\frac{1}{2}}^i \leq 1/2.
\end{align*}
In the penultimate inequality, we used the fact that for the first term, when $t=3$, we have $\left(\frac{1}{2(t+1)^{1/2}}\right)^{t-1} \cdot (t+1) = \bigparen{\frac{1}{4}}^2 \cdot 4 = 1/4$, and the ratio of the summands (for $t \geq 3$) is 
$$\frac{(2(t+1)^{1/2})^{t-1}}{(2(t+2)^{1/2})^t}\cdot \frac{t+2}{t+1} \leq \frac{1}{2(t+1)^{1/2}} \cdot 2 \leq \frac{1}{(t+1)^{1/2}} \leq \frac{1}{2}.$$
To bound $B$, we note that the function $h(x) = \left(\frac{x}{n}\right)^{(x-1)/2}$ is strictly convex, which means its maximum occurs either at $t = C$ or $t = \frac{n+1}{2}$. Again, setting $C= \frac{\sqrt{n}}{10e} - 1$, a quick calculation shows that the maximum is achieved for the first term, and this term is at most 
\begin{align*}
     2^{\frac{\sqrt{n}}{10e}} \cdot \bigparen{\frac{1}{10e\sqrt{n}}}^{\frac{\sqrt{n}}{20e}} &\leq 2^{\frac{\sqrt{n}}{10e}} \cdot 2^{-\log n \cdot \frac{\sqrt{n}}{40e}} \\
    &\leq 2^{\frac{\sqrt{n}}{10e} - \bigparen{\frac{\sqrt{n}}{10e} + 2\log n}} \tag{$\frac{\sqrt{n}}{40e}\log n \geq \frac{\sqrt{n}}{10e} + 2\log n$ for large enough $n$.} \\
    &= 2^{-2\log n} = \frac{1}{n^{2}}.
\end{align*}

This implies that $B \leq n \cdot \frac{1}{n^{2}} \leq o(1)$. Using item 2 of \Cref{claim:maj-fourier-coeff}, we conclude that there is a non-trivial Fourier coefficient $$\abs{\hat{f_{\calU}}(u)} \geq \abs{\hat{f}(e_1)}\left(1 -1/2 -  o(1)\right) = \Omega(n^{-1/2}). \qedhere$$
\end{proof}

\section{Applications}
\label{sec:app}
We now present an application of \Cref{thm:bounded-deg-d-upper-bound-intro} that shows a tradeoff between the dimension of a disperser and its Fourier degree, and a connection to extractors, as well. First, we introduce a definition that generalizes Boolean functions and helps us reason about the Fourier spectrum of dispersers. 

\begin{definition}
\label{def:granular-fn}
We say a function $f:\F_2^n \rightarrow \R$ is $G$-granular if for every $x\in \F_2^n$, we have that $f(x)$ is an integer multiple of $G$.
\end{definition}

\begin{claim}
\label{claim:granular-coefficients}
If a degree $d$ function $f: \F_2^n \rightarrow \R$ is $G$-granular, then for every $\gamma \in \F_2^n$, we have that $\hat{f}(\gamma)$ is an integer multiple of $2^{-d}\cdot G$. 
\end{claim}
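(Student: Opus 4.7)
The plan is to leverage the equivalence between the Fourier representation and the multilinear monomial representation of $f$, and track $2$-adic denominators through the change of basis.

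First, I would write $f$ in the \emph{monomial basis} with inputs viewed as $x_i \in \{0,1\}$:
$$f(x) = \sum_{T \subseteq [n]} c_T \prod_{i \in T} x_i.$$
By Möbius inversion over the Boolean lattice, $c_T = \sum_{S \subseteq T} (-1)^{|T|-|S|} f(\one_S)$, where $\one_S \in \F_2^n$ is the indicator of $S$. Since $f$ is $G$-granular, every $f(\one_S)$ is an integer multiple of $G$, and hence so is each $c_T$. This step requires no degree assumption.

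Second, I would use that the Fourier degree of $f$ equals its multilinear degree. This follows because the change of basis between $\{\chi_S\}$ and $\{\prod_{i\in T}x_i\}$ is triangular with respect to the subset containment order (using the identity $\chi_{e_i}(x) = 1 - 2x_i$, so $\chi_S(x) = \prod_{i\in S}(1-2x_i)$ has leading multilinear term $(-2)^{|S|}\prod_{i\in S} x_i$). Thus $\deg(f)\leq d$ implies $c_T = 0$ whenever $|T| > d$.

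Third, I would invert the change of variables using $x_i = \tfrac{1}{2}(1 - \chi_{e_i}(x))$, giving
$$\prod_{i \in S} x_i = 2^{-|S|} \sum_{R \subseteq S} (-1)^{|R|} \chi_R(x),$$
and then substitute back into the monomial expansion and collect the coefficient of $\chi_R$:
$$\hat f(R) = (-1)^{|R|} \sum_{T \supseteq R,\, |T| \leq d} 2^{-|T|}\, c_T.$$
Factoring out $2^{-d}$ yields
$$\hat f(R) = \frac{(-1)^{|R|}}{2^d} \sum_{T \supseteq R,\, |T| \leq d} 2^{d-|T|}\, c_T,$$
and since every $c_T \in G\Z$ and every $2^{d-|T|}$ is a nonnegative integer, the inner sum lies in $G\Z$, so $\hat f(R)$ is an integer multiple of $2^{-d}G$.

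The only mildly delicate point is the equivalence of the two notions of degree (step two); everything else is bookkeeping. Once that is in place, the bound $2^{-d}G$ falls out directly from the fact that the smallest denominator appearing in the expansion $\hat f(R) = (-1)^{|R|}\sum_{T\supseteq R, |T|\leq d} 2^{-|T|} c_T$ is $2^{-d}$, which is exactly what the granularity claim asserts.
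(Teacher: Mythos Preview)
Your proof is correct and follows essentially the same approach as the paper: both pass to the $\{0,1\}$-multilinear expansion to see that the monomial coefficients $c_T$ lie in $G\Z$, invoke the equality of Fourier degree and multilinear degree, and then substitute $x_i = (1-\chi_{e_i})/2$ to conclude that each $\hat f(\gamma)$ is an integer multiple of $2^{-d}G$. Your version makes the intermediate formula $\hat f(R) = (-1)^{|R|}\sum_{T\supseteq R,\,|T|\le d} 2^{-|T|} c_T$ explicit, but the underlying argument is identical.
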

\begin{proof}

Note that if we associate $\F_2$ with $\{0,1\}$, any $f: \F_2^n \to \R$ has a real multilinear polynomial representation $q: \{0, 1\}^n \to \R$, where $q(x) = f(x)$ for all $x\in \{0,1\}^n$ (see Exercise 1.9 in \cite{AOBFbook}).
In particular, we can write $q$ as a sum of its indicators:
$$q(x) = \sum_{a \in \{0,1\}^n}\indicator\{x= a\}\cdot q(a).$$
Noting that $\indicator\{x=a\} = \prod_i (1 - a_i - x_i)(1-2a_i)$, we see that every coefficient of $q$ is an integer multiple of $G$. 

However, we can also associate $f$ with a real multilinear polynomial, $p :\pmone^n \to \R$, such that $f(x) = p((-1)^x)$ for all $x \in \F_2^n$. Note then, that $p(x) = q((1-x_1)/2, \ldots, (1-x_n)/2))$, so if $p$ has degree $d$, then all its coefficients are integer multiples of $G \cdot 2^{-d}$. Finally, note that $f$ and $p$ have the same Fourier coefficients (and therefore degree), which implies the result.
\end{proof}

We now show that low degree granular functions cannot have a large parity kill number. As a consequence, we get that low-degree affine dispersers cannot have small dimension (\Cref{thm:rule-out-disperser}). 

\begin{lemma}
Every degree $d$ function $f:\F_2^n \rightarrow [-1,1]$ that is $G$-granular satisfies $$\paritykill{f} \leq n - \Omega\left(n^{1/d!}(d + \log n/G)^{-2}\right).$$
\end{lemma}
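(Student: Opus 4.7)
The plan is to combine \Cref{thm:bounded-deg-d-upper-bound-intro} with \Cref{claim:granular-coefficients} via the observation that for $G$-granular functions, being sufficiently regular forces all non-trivial Fourier coefficients to actually vanish, hence forces the function to be constant on the subspace.

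More concretely, by \Cref{claim:granular-coefficients}, every Fourier coefficient of $f$ is an integer multiple of $2^{-d}G$. Since the Fourier coefficients of $f_\calU$ on any affine subspace $\calU$ are signed sums of Fourier coefficients of $f$ (by \Cref{fact:Fourier-coeff-of f_U-as-a-sum-of-fourier-coeff-of-f-from-vperp}), each $\hat{f_\calU}(\gamma)$ is also an integer multiple of $2^{-d}G$. Hence if we choose any $\delta < 2^{-d}G$ and $f_\calU$ is $\delta$-regular, every non-trivial Fourier coefficient of $f_\calU$ has magnitude strictly less than $2^{-d}G$ while being an integer multiple of $2^{-d}G$, so all non-trivial Fourier coefficients are zero, i.e.\ $f_\calU$ is constant on $\calU$.

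With this in hand, the proof is a direct invocation of the main theorem. I would apply \Cref{thm:bounded-deg-d-upper-bound-intro} to $f$ (which is still degree $d$ and bounded in $[-1,1]$) with the choice $\delta := 2^{-d-1}G$. The theorem produces an affine subspace $\calU$ of codimension
$$\codim(\calU) \leq n - \Omega\!\left(n^{1/d!}(\log(n/\delta))^{-2}\right),$$
on which $f_\calU$ is $\delta$-regular, and by the granularity argument above, $f$ is constant on $\calU$. Therefore $\paritykill{f} \leq \codim(\calU)$. It remains to simplify $\log(n/\delta) = \log(n/G) + d + 1 = O(d + \log(n/G))$, which yields the claimed bound.

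There is no real obstacle here; the only subtle point is verifying that the granularity of $f$ is inherited by all restrictions at the level of Fourier coefficients — which follows immediately from the signed-sum formula of \Cref{fact:Fourier-coeff-of f_U-as-a-sum-of-fourier-coeff-of-f-from-vperp} together with \Cref{claim:granular-coefficients} applied to $f$ itself (and not to the restriction, whose degree might a priori behave differently). One could alternatively note that $f_\calU$ is itself a degree-$\leq d$ function that is $G$-granular, and apply \Cref{claim:granular-coefficients} directly to $f_\calU$; both viewpoints give the same conclusion and together make the deduction essentially a one-line corollary of the main theorem.
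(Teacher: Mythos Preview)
Your proposal is correct and essentially identical to the paper's proof: both invoke \Cref{claim:granular-coefficients} plus the signed-sum formula to see that Fourier coefficients of any restriction are integer multiples of $2^{-d}G$, then apply \Cref{thm:bounded-deg-d-upper-bound-intro} with $\delta = 2^{-d-1}G$ and simplify $\log(n/\delta)=O(d+\log(n/G))$.
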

\begin{proof}
If $f$ is $G$-granular and degree $d$, then from~\Cref{claim:granular-coefficients} we know that all its Fourier coefficients must be integer multiples of $2^{-d}\cdot G$. 
Moreover, a Fourier coefficient of $f$ in any affine subspace is simply a signed sum of the Fourier coefficients of $f$ and therefore it must also be an integer multiple of $2^{-d}\cdot G$. 
This shows that if $f$ is $\delta$-regular in some affine subspace $\calU$ with $\delta < 2^{-d}\cdot G$, then $f_\calU$ must be constant. 
The lemma follows by using \Cref{thm:bounded-deg-d-upper-bound-intro} for $\delta = 2^{-d-1}\cdot G$.
\end{proof}


\begin{proof}[Proof of \Cref{thm:rule-out-disperser}]
Using $f$, we can construct a degree $d$ function $h:\F_2^n\to [-1,1]$ as $h(x) = 1 - \frac{2f(x)}{C}$. Noting that $h$ is $2/C$-granular and using the above lemma, it follows that 
$$\paritykill{f} = \paritykill{h}\leq n - \Omega\left(n^{1/d!}(d+\log (nC))^{-2}\right),$$
which shows that there is some affine subspace of dimension at least $\Omega\left(n^{1/d!}(2d+\log (nC))^{-2}\right)$ where $f$ is constant.
\end{proof}

Last, we give a connection between the notion of $\delta$-regularity and affine extractors. Formally, we define affine extractors as follows.

\begin{definition}[Affine Extractor]
\label{dfn:affine-extractor}
A function $f:\F_2^n \to \{0,\ldots,C\}$ is said to be a $(k, \delta)$-\emph{affine extractor} if for all affine subspaces $\calU$ of dimension at least $k$, we have that $$|f_\calU - \uniform_C| \leq \delta,$$
where $\uniform_C$ is the uniform distribution over $\{0,\ldots,C\}$.
\end{definition}

\begin{claim}
\label{claim:extractor-implies-regular}
If $f$ is a $(k,\delta)$-extractor, then $f$ becomes $2C\delta$-regular when restricted to any affine subspace of dimension at least $k+1$.
\end{claim}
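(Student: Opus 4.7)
The plan is to expand the non-trivial Fourier coefficient of $f_\calU$ as an expectation via \Cref{fact:coefficients-as-expectation} and reduce the bound to a distance-to-uniform statement on the two halves of $\calU$ cut out by the character.

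Fix $\calU = \alpha + \calV$ with $\dim(\calU) \geq k+1$, and a non-zero $\gamma \in \calW$, where $\calW$ is chosen so that $\calW \oplus \calV^\perp = \F_2^n$. Because $\calW \cap \calV^\perp = \{0\}$, the linear form $x \mapsto \langle \gamma, x\rangle$ is non-trivial on $\calV$, so its zero set is a codimension-one subspace of $\calV$. Shifting by $\alpha$ partitions $\calU$ into two affine subspaces $\calU_0$ and $\calU_1$ on which $\chi_\gamma$ takes opposite constant signs, and each has dimension $\dim(\calU) - 1 \geq k$. By \Cref{fact:coefficients-as-expectation},
$$|\hat{f_\calU}(\gamma)| = \bigl|\E_{x \in \calU}[f(x)\chi_\gamma(x)]\bigr| = \frac{1}{2}\bigl|\E[f_{\calU_0}] - \E[f_{\calU_1}]\bigr|.$$

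The remaining step is to apply the extractor hypothesis to each half. Since $\dim(\calU_i) \geq k$, the distribution of $f_{\calU_i}$ is within statistical distance $\delta$ of $\uniform_C$. The standard inequality $|\E h(X) - \E h(Y)| \leq (\max h - \min h)\cdot |X - Y|$, applied to the identity $h$ on $\{0,\ldots,C\}$, yields $|\E[f_{\calU_i}] - \E[\uniform_C]| \leq C\delta$ for $i \in \{0,1\}$; a triangle inequality then gives $|\E[f_{\calU_0}] - \E[f_{\calU_1}]| \leq 2C\delta$, and dividing by two produces $|\hat{f_\calU}(\gamma)| \leq C\delta \leq 2C\delta$. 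I anticipate no serious obstacle here; the only piece of bookkeeping worth checking is that $\calU_0$ and $\calU_1$ really are affine subspaces of $\F_2^n$ of dimension exactly $\dim(\calU) - 1$, which is exactly where the direct-sum property $\calW \cap \calV^\perp = \{0\}$ (equivalently, $\chi_\gamma$ being a non-trivial element of the orthonormal Fourier basis on $\calV$) gets used.
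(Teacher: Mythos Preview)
Your proof is correct and follows essentially the same idea as the paper: split $\calU$ along the character $\chi_\gamma$ into two affine halves of dimension $\geq k$ and control the difference of their means via the extractor hypothesis. Your direct computation is cleaner than the paper's contradiction argument through \Cref{cor:V-dim-n-1}, and because you use the sharper bound $|\E h(X)-\E h(Y)|\le(\max h-\min h)\,|X-Y|$ rather than $\max|h|\cdot 2|X-Y|$, you in fact obtain $C\delta$-regularity, a factor of two better than the stated $2C\delta$.
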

\begin{proof} 
Note that if $f:\F_2^n\to\{0,\ldots,C\}$ is a $(k,\delta)$-extractor then in any affine subspace $\calU$, of dimension at least $k$, we have, $$\bigabs{\hat{f_\calU}(\chi_{\mathbf{0}})-\frac{C}{2}} = \bigabs{\sum_{c}c\left(\Pr_{x\in \calU}[f(x) = c] - \frac{1}{C+1}\right)} \leq C\sum_{c}\bigabs{\Pr_{x\in \calU}[f(x) = c] - \frac{1}{C+1}}\leq 2C\delta.$$
Suppose $f$ is a $(k, \delta)$-affine extractor. Let us assume to a contradiction that $\calU$ is an affine subspace of dimension at least $k+1$, where $f_\calU$ has a Fourier coefficient with magnitude larger than $2C\delta$. By \Cref{cor:V-dim-n-1}, we can fix the parity corresponding to this Fourier coefficient in such a way that the bias of the function increases by $2C\delta$, which gives the desired contradiction. 
\end{proof}

\section{Future Directions}

We highlight two open problems that offer particularly interesting research directions. First, there is a tantalizing, and large, gap between our \Cref{thm:bounded-deg-d-upper-bound-intro} and \Cref{claim:degree-d-bounded-lb} for bounded degree $d$ functions. We suspect that \Cref{claim:degree-d-bounded-lb} is closer to being tight and ask the following question.

\begin{direction}
Can the upper bound on $\regnum(f,\delta)$ in \Cref{thm:bounded-deg-d-upper-bound-intro} be improved? 
\end{direction}

Moreover, it would be interesting to find explicit Boolean and bounded functions with large $r(f,\delta)$ values. 

\begin{direction}
Find (explicit) examples of functions $f:\F_2^n\to [-1,1]$ with $r(f,\delta)$ values comparable to those obtained in \Cref{claim:degree-d-bounded-lb}. Similarly, find (explicit) Boolean functions with similar $r(f,\delta)$ values.
\end{direction}

\section{Acknowledgements}

We thank Anup Rao for posing the question that launched this project and for his invaluable advice and feedback. We are also grateful Paul Beame for his extremely helpful advice, discussions, and feedback. Finally, we thank Sandy Kaplan for detailed feedback on this writeup.

\newpage

\printbibliography

\newpage 

\appendix 

\section{Omitted Sketches}
\label{appendix:lb-sketches}

We give the main ideas behind the lower bounds in \Cref{table:large-reg-nums}. 

\subparagraph*{Sketch of \Cref{claim:degree-one-bounded-lb}.} The proof of this claim is based on the homogeneous degree-one function $f(x) = \frac{1}{n}\sum_i(-1)^{x_i}$. Its key idea comes from \Cref{claim:std-basis-vecs-in-small-subspace}, which we use to show that if the dimension of $\codim(\calV) < n/2$, then at least one shift of $\calV^{\perp}$ must contain \textit{exactly} one standard basis vector. 
By the preceding discussion, this implies that $f_{\alpha+\calV}$ has a non-trivial Fourier coefficient with magnitude exactly $1/n > \delta$.

We remark that \Cref{claim:degree-one-bounded-lb} is tight. The function $f$ is symmetric, and for any such function, we can fix $n/2$ parities to obtain an affine subspace where every vector has weight $n/2$, which in turn fixes the function.
%
\subparagraph*{Sketch of \Cref{claim:degree-d-bounded-lb} and \Cref{cor:nonexplicit-boolean-simpler}.} To achieve \Cref{claim:degree-d-bounded-lb}, one might expect to extend the above argument to the homogeneous degree $d$ function $f(x) = \binom{n}{d}^{-1}\sum_{\gamma: \norm{\gamma}_1 = d}(-1)^{\inner{\gamma, x}}$.
Unfortunately, this function is symmetric, and we have $r(f,0) \leq n/2$. We therefore consider a random homogeneous degree $d$ function 
$f_\bfz(x) = \binom{n}{d}^{-1}\sum_{\gamma: \norm{\gamma}_1 = d}\bfz_\gamma\cdot (-1)^{\inner{\gamma, x}},$
where each $\bfz_\gamma$ is a random sign. 
A simple argument, again utilizing \Cref{claim:std-basis-vecs-in-small-subspace}, shows that there must be at least $\binom{k}{d}$ affine subspaces of $\calV^\perp$ with at least one vector of weight $d$. By our earlier reasoning, each of those subspaces must in fact contain at least two vectors of weight $d$ so that the restricted function would have a non-trivial Fourier coefficient with magnitude $\binom{n}{d}^{-1} >\delta$. Moreover, the probability (over the signs $\bfz_\gamma$'s) that each of the $\binom{k}{d}$ signed sums cancels is at most $2^{-\binom{k}{d}}$, and a union bound over all the possible affine subspaces of dimension $k = \Theta(dn^{1/(d-1)})$ completes the argument.
%

If we restrict our attention to Boolean functions, we might hope to obtain strong upper bounds for $\regnum(f,\delta)$; however, \Cref{cor:nonexplicit-boolean-simpler} rules this out.
The proof of this claim is based on a simple lemma of \cite{hlms-regularity-lb} (\Cref{lemma:bounded-regularity-to-boolean-regularity}), which uses the probabilistic method to convert a bounded function that is not $\delta$-regular in large affine subspaces to a Boolean function with the same property. Applying this lemma to the lower bound from \Cref{claim:degree-d-bounded-lb} achieves the result. 
\subparagraph*{Sketch of \Cref{lemma:majority-lower-bound}.}
This lower bound is based on the majority function. 
Its key idea is that there exists a non-trivial affine subspace of $\calV^{\perp}$ containing \emph{exactly} one weight-$1$ vector and relatively few vectors of higher weight (see \Cref{lemma:low-weight-vecs-in-small-subspace}). 
Then, we use properties of the Fourier spectrum of the majority function to show that the signed sum of the Fourier coefficients of majority corresponding to vectors in this affine subspace, is on the order of $\abs{\hat{f}(e_1)} = \Omega(n^{-1/2})$. 
Specifically, we argue that even if the coefficients coming from higher weight vectors in the aforementioned sum combined in the most constructive way possible, they cannot combine to more than $\bigabs{\hat{f}(e_1)}/2$. 
We also note that \Cref{lemma:majority-lower-bound} is tight up to constant factors via \Cref{claim:trivial-algorithm}.
Conversely, \Cref{lemma:majority-lower-bound} implies that for $\delta\geq n^{-1/2}$, the majority function on $O(1/\delta^2)$ variables is an explicit Boolean function for which $\regnum(f, \delta) \geq \Omega(1/\delta)$. 

\subparagraph*{Rationale for \Cref{claim:superlinear-boolean}.} The last entry in the table corresponds to \Cref{claim:superlinear-boolean} and is based on a simple function $f$ on 4 inputs that is composed with itself $k$ times. We use key properties of the composition of Boolean functions (from \cite{tal-composition, parity-kill}) to achieve the bound. The function itself is the same one considered in \cite{parity-kill}, and we use their main theorem crucially to obtain our lower bound. We present a slightly generalized version of the main theorem of \cite{parity-kill}, so we include a proof in \Cref{appendix:paritykill}.

We make some final comments about the lower bounds from \Cref{cor:nonexplicit-boolean-simpler}. 
The Boolean functions that achieve the lower bounds share the property that the magnitudes of their Fourier coefficients are extremely close to their bounded counterparts in \Cref{claim:degree-d-bounded-lb}. However, even though the bounded functions themselves have low degree, the Boolean functions are very far from being low-degree functions; in fact, \emph{almost all} their Fourier mass comes from the high-degree terms.
Notably, these functions are also non-explicit affine dispersers with small dimension, and it would be interesting to find explicit Boolean functions with similar strong lower bounds on the $\regnum(f,\delta)$. 

\section{Short Proof of the Parity Kill Number Theorem (\cite{parity-kill})}
\label{appendix:paritykill}

We present a more concise and slightly improved version of the main theorem of \cite{parity-kill}, which appears as \Cref{thm:parity-kill-composition} above.

The following proposition suffices to prove the theorem.

\begin{proposition}
\label{prop:main-pkc-prop}
Let $f' : \F_2^n \times \F_2 \to \F_2$ and $g: \F_2^k \to \F_2$. We let $f: \F_2^n \times \F_2^k \to F_2$ be defined as 
$$f(x,y) = f'(x, g(y)).$$
Then for any affine subspace $H \subseteq \F_2^n \times \F_2^k$ on which $f$ is constant, there exists some $H'\subseteq \F_2^n \times \F_2$ on which $f'$ is constant such that either:
\begin{enumerate}
    \item $\codim(H') \leq \codim(H) - B_g$, where $B_g = \max\{1, \log \paritykill{g} - 1\}$, as before.
    \item The $(n+1)$-st coordinate (so $g(y)$) is irrelevant in $H'$ and $\codim(H') \leq \codim(H)$.
\end{enumerate}
Furthermore, among the first $n$ coordinates, any coordinate that was irrelevant in $H$ remains irrelevant in $H'$.
\end{proposition}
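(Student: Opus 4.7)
The plan is a careful case analysis on the structure of $H$, controlled by the $y$-projection of $H$ and the parity-kill complexity of $g$. First, I would set up notation: write $H = (a,b) + \calV$ with linear part $\calV \subseteq \F_2^n \times \F_2^k$, let $\calV_y := \calV \cap (\{0\} \times \F_2^k)$, and let $Y_0 := \pi_y(H) = b + \pi_y(\calV)$ denote the affine subspace of $y$-values appearing in $H$. Each fiber $H_x = \{y : (x,y) \in H\}$ is a coset of $\calV_y$, and the coset varies affinely with $x \in \pi_x(H)$. Since $f|_H \equiv c$ for some constant $c$, we have $f'(x, g(y)) = c$ throughout $H$. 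Because every $H'$ I construct will be obtained by adding constraints only on $y$ and then projecting onto $x$, any $x$-coordinate irrelevant in $H$ remains irrelevant in $H'$ automatically, taking care of the ``furthermore'' clause.

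The easy case (Case A) is when $g|_{Y_0}$ is constant, equal to some $v \in \F_2$. Here I would take $H' := \pi_x(H) \times \{v\}$, which is affine and satisfies $f'|_{H'} \equiv c$. Since $g$ is constant on the affine subspace $Y_0$, we have $\dim(Y_0) \leq k - \paritykill{g}$, and so $\dim(\calV_y) \leq \dim(Y_0) \leq k - \paritykill{g}$. Combining with the identity $\dim(\calV) = \dim(\pi_x(H)) + \dim(\calV_y)$, a direct computation gives
\[
\codim(H) - \codim(H') \;=\; (k - 1) - \dim(\calV_y) \;\geq\; \paritykill{g} - 1 \;\geq\; B_g,
\]
placing us in Case~1 of the proposition with room to spare.

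The hard case (Case B) is when $g$ is non-constant on $Y_0$. I would further split according to the behavior of $g$ on individual fibers. If some fiber $H_{x^*}$ has $g$ non-constant on it, then every $x \in \pi_x(H)$ whose fiber lies in the same $\calV_y$-coset as $H_{x^*}$ also has $g$ non-constant on its fiber, forcing $f'(x,0) = f'(x,1) = c$. The set of such $x$ is an affine subspace $A^* \subseteq \pi_x(H)$ (the preimage of a single coset under the affine map $\pi_x(H) \to Y_0/\calV_y$), and taking $H' := A^* \times \F_2$ gives Case~2 and in fact Case~1 whenever $Y_0 \neq \F_2^k$. The remaining sub-case is when $g$ is constant on every fiber but takes both values across different fibers; here the induced function $\tilde g : Y_0/\calV_y \to \F_2$ is non-constant, and I would restrict $H$ along the $y$-direction to a carefully chosen affine subspace on which $g$ is constant, then apply the Case~A computation to the restricted $H$.

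The main obstacle is extracting the logarithmic improvement $B_g = \max\{1, \log \paritykill{g} - 1\}$ in this last sub-case, rather than the weaker $B_g = 1$ of the original \cite{parity-kill} proof. Restriction to a single arbitrary affine subspace on which $g$ is constant already yields $B_g = 1$; the stronger bound requires exploiting the non-constant structure of $\tilde g$ on the quotient $Y_0/\calV_y$ so that the restriction costs only about $\paritykill{g} - \log \paritykill{g}$ in the codimension of $Y_0$, leaving a $\log \paritykill{g}$ slack to combine with the $\paritykill{g} - 1$ savings from the subsequent Case~A step. I expect to establish this by an iterated bisection: at each step, a pigeonhole over the two halves of the remaining affine subspace (cut by a single parity) shows that one side inherits parity-kill complexity at most half of the current value, so after $O(\log \paritykill{g})$ bisections we arrive at an affine subspace on which $g$ is constant while having paid only about $\log \paritykill{g}$ codimensions beyond the trivial bound.
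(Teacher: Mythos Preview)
Your Case~A and the first sub-case of Case~B are fine and match subcases of the paper's proof. The gap is in the remaining sub-case ($g$ constant on every fiber but non-constant on $Y_0$), where you must land in Case~1 and produce the savings $B_g$. Your iterated-bisection step is simply false: if $\paritykill{g|_Y}=p$ and you cut $Y$ by a single parity into halves $Y^{(0)},Y^{(1)}$, then \emph{both} halves satisfy $\paritykill{g|_{Y^{(b)}}}\geq p-1$, since any size-$q$ killing on $Y^{(b)}$ extends to a size-$(q{+}1)$ killing on $Y$. No pigeonhole halves this quantity, so after $\log p$ cuts the parity kill has dropped only to $p-\log p$, not to $0$; the slack you were hoping for never materializes. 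Your alternative accounting (restrict to some $Y'\subseteq Y_0$ on which $g$ is constant, then rerun Case~A) does not close either, because the Case~A savings on the restricted $H$ is $k-1-\dim\calV_y'$ for the \emph{new} fiber $\calV_y'$, and you have no control over that.

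The paper's route avoids restricting-and-recursing entirely. It first canonizes $H$ so the constraints split into pure-$x$ constraints $\calB_x$, pure-$y$ constraints $\calB_y$, and $t$ cross-constraints $x_i+y_i=\sigma_i$; your sub-case then translates to: $g|_{C_y}$ is a junta $h$ on $s\leq t$ of the cross-variables. They take $H'$ to be the \emph{graph} $\{(x,h(x_{i_1}{\oplus}\sigma_{i_1},\ldots)):x\in A\}$ over a suitable affine base $A\subseteq\F_2^n$, giving $\codim(H')=\codim(H)-(|\calB_y|+s)+1$. Since $g$ can be killed by the $|\calB_y|$ parities defining $C_y$ together with fixing the $s$ junta bits, one has $|\calB_y|+s\geq\paritykill{g}$, so the savings is at least $\paritykill{g}-1\geq B_g$. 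In your own notation the correct one-line fix is $H'=\{(x,v_x):x\in\pi_x(H)\}$ with $v_x=g(H_x)$: the savings is then $k-1-\dim\calV_y=\codim(Y_0)+\dim(Y_0/\calV_y)-1\geq\paritykill{g}-1$, using that $g|_{Y_0}$ factors through a function on $\dim(Y_0/\calV_y)$ bits. The logarithm in $B_g$ is an understatement of what this sub-case actually delivers, not something that needs to be extracted by bisection.
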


Before proving \Cref{prop:main-pkc-prop}, let's see how it implies \Cref{thm:parity-kill-composition}. Note that $f \circ g  = f(g(x_1), ..., g(x_n))$, so we will apply \Cref{prop:main-pkc-prop} $n$ times. The crucial observation is that we must fall into the first case of \Cref{prop:main-pkc-prop} at least $C_{\min}[f]$ times. This is because if $f(x,y)$ is constant on $H$, then $H$ must depend on at least $C_{\min}[f]$ coordinates. 

Let then $H \subseteq \F_2^{n\cdot m}$ be a minimum co-dimension subspace on which $f \circ g$ is constant, so that $\codim(H) = \paritykill{f \circ g}$. Applying \Cref{prop:main-pkc-prop} $n$ times, we derive $H'$ on which $f$ is constant.
\begin{align*}
    \paritykill{f} &\leq \codim(H') \\
    &\leq \codim(H) - B_g \cdot C_{\min}[f] \\
    &= \paritykill{f \circ g} - B_g \cdot C_{\min}[f].
\end{align*}
Rearranging gives the theorem.

Finally, before we prove \Cref{prop:main-pkc-prop}, we need the following lemma, the proof of which is not complicated but we will omit and can be found in \cite{parity-kill}.

\begin{lemma}[\cite{parity-kill}, Lemma 3.3]
\label{lemma:canonization}
Let $H \subseteq \F_2^k \times \F_2^n$ be an affine subspace. Then there exists an invertible linear transformation $L$ on $\F_2^k \times \F_2^n$ such that, after applying this linear transformation, the constraints of $H$ can be partitioned into 
\begin{itemize}
    \item $\calB_{x,y}$, which contain constraints of the form $x_i + y_i = \sigma_i$, for $1 \leq i \leq t$.
    \item $\calB_x$, which contain constraints of the form $x_j = \sigma_j$, for $t+1 \leq j \leq t'$.
    \item $\calB_y$, which contain constraints of the form $y_k = \sigma_k$, for $t'+1 \leq k \leq t''$.
\end{itemize}
and $t + (t'-t) + (t''- t') = \codim(H)$.
\end{lemma}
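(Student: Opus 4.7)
The plan is to reduce the statement to elementary linear algebra on the constraint space. Write $H$ as the solution set of $m = \codim(H)$ linearly independent affine constraints $\inner{a_i, x} + \inner{b_i, y} = \sigma_i$ for $1 \leq i \leq m$, where $a_i \in \F_2^k$ and $b_i \in \F_2^n$, and let $S = \spn\{(a_i, b_i)\}_{i=1}^m \subseteq \F_2^k \times \F_2^n$ be the $m$-dimensional constraint space. Define $S_x := S \cap (\F_2^k \times \{0\})$ with $\dim S_x = p$, and $S_y := S \cap (\{0\} \times \F_2^n)$ with $\dim S_y = q$, and fix any complement $S_{\text{mix}}$ so that $S = S_x \oplus S_y \oplus S_{\text{mix}}$; set $t = m - p - q$. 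By applying invertible row operations to the original list of constraints (which leaves $H$ unchanged), we may relabel so that the first $t$ constraints form a basis of $S_{\text{mix}}$, the next $p$ form a basis of $S_x$, and the last $q$ form a basis of $S_y$.

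The key linear-algebraic observation is that the mixed $x$-parts $a_1, \ldots, a_t$ are linearly independent in $\F_2^k$, and likewise $b_1, \ldots, b_t$ are linearly independent in $\F_2^n$. Indeed, if $\sum_{i \leq t} c_i a_i = 0$, then $\sum_i c_i (a_i, b_i) = (0, \sum_i c_i b_i)$ lies in $S_y$; since this vector also lies in $S_{\text{mix}}$ (as a combination of its basis) and $S_y \cap S_{\text{mix}} = \{0\}$, all the $c_i$ vanish. The argument for the $b_i$'s is symmetric. The same reasoning, applied to the direct sum $S = S_x \oplus S_y \oplus S_{\text{mix}}$, shows that the mixed-plus-pure-$x$ vectors $a_1, \ldots, a_{t+p}$ are jointly linearly independent in $\F_2^k$, and similarly that $b_1, \ldots, b_t$ together with the $q$ pure-$y$ vectors are linearly independent in $\F_2^n$.

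Now construct the block-diagonal transformation $L = L_1 \oplus L_2$, where $L_1$ is the invertible linear map on $\F_2^k$ whose $i$-th row equals $a_i^\transpose$ for $1 \leq i \leq t+p$, extended to a basis of $\F_2^k$ by arbitrary additional rows; $L_2$ is defined analogously on $\F_2^n$ using $b_i^\transpose$ in rows $1, \ldots, t$ and the $q$ pure-$y$ vectors in rows $t'+1, \ldots, t''$, where $t' := t+p$ and $t'' := t'+q$. Substituting $(x, y) = L^{-1}(x', y')$ sends $a_i \mapsto L_1^{-\transpose} a_i$ and $b_i \mapsto L_2^{-\transpose} b_i$; since $L_1^\transpose e_i$ is precisely the $i$-th row of $L_1$ (namely $a_i$), and similarly for $L_2$, these images equal the standard basis vectors at the designated positions. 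Consequently, the mixed constraints become $x'_i + y'_i = \sigma_i$ for $1 \leq i \leq t$, the pure-$x$ constraints become $x'_j = \sigma_j$ for $t+1 \leq j \leq t'$, and the pure-$y$ constraints become $y'_\ell = \sigma_\ell$ for $t'+1 \leq \ell \leq t''$; the equality $t + (t'-t) + (t''-t') = t'' = m = \codim(H)$ is immediate from the construction.

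The main obstacle is conceptual rather than technical: one must commit to $L$ being block-diagonal (so that the $x$/$y$ split of $\F_2^k \times \F_2^n$ is preserved and the canonical form is actually meaningful), and recognize that choosing $S_{\text{mix}}$ as a complement of $S_x \oplus S_y$ automatically yields the independence needed to place the mixed constraints into the form $x'_i + y'_i = \sigma_i$. Once these ingredients are in place, the remainder of the argument is a straightforward sequence of basis changes.
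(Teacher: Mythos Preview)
The paper does not include a proof of this lemma; it explicitly omits the argument and refers the reader to \cite{parity-kill}. There is therefore nothing in the paper itself to compare against. Your proof is correct and is the natural one: decompose the $m$-dimensional constraint space $S$ as $S_x \oplus S_y \oplus S_{\mathrm{mix}}$, verify that the $x$-parts of a basis of $S_{\mathrm{mix}} \oplus S_x$ are linearly independent in $\F_2^k$ (and symmetrically on the $y$-side), and then apply a block-diagonal change of basis on each factor to send these to standard basis vectors. The insistence that $L$ be block-diagonal is exactly right, since otherwise the distinction between $x$- and $y$-constraints in the conclusion would be meaningless.

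One minor wrinkle, which is really a defect of the lemma's indexing rather than of your argument: you place the $q$ pure-$y$ rows of $L_2$ at positions $t'+1,\ldots,t''=t+p+q$, but the independence you established only guarantees $t+q \leq n$, not $t+p+q \leq n$ (for instance, with $k=n=1$ and constraints $x_1=0$, $y_1=0$, one gets $t''=2>n$). Placing those rows at positions $t+1,\ldots,t+q$ instead and reading the statement as fixing ``some $q$ distinct $y$-coordinates'' resolves this with no change to the substance of your proof.
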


The takeaway of the above lemma is that since parity kill number is invariant under affine transformations, we can ``canonize" any affine subspace in a way that minimizes the interactions between coordinates.
\begin{proof}[Proof of \Cref{prop:main-pkc-prop}]
WLOG suppose that $H$ is of the form given in \Cref{lemma:canonization}.
\begin{enumerate}
    \item \textbf{Easy Case: $|\calB_{x,y}| = 0$.}

        Let's denote $C_y$ as the set of all $y$ that satisfy the constraints in $\calB_y$, and let $C_x$ (analogously) be the set of $x$ that satisfy the constraints of $\calB_x$.
        \begin{enumerate}[a)]
            \item \textbf{Subcase 1:} Suppose that $g(y) = b$ for all $y \in C_y$. Then we can let 
            $$H' = \bigcurly{(x,z) \bigvert x \in C_x, z = b}.$$
            $f'$ is clearly constant on $H'$. Note that 
            $$\codim(H') = |\calB_x| + 1 = \codim(H) - |\calB_y| + 1 \leq \codim(H) - \paritykill{g} + 1 \leq \codim(H) - B_g,$$
            as desired for the first case of \Cref{prop:main-pkc-prop}.
            
            \item \textbf{Subcase 2:} Suppose that $g$ is not constant on the inputs in $C_y$. In this case, we claim that 
            $$H' = \bigcurly{ (x,z) \bigvert x \in C_x}$$
            makes $f'$ constant. Indeed, suppose it doesn't. Then there are two inputs $(x, z)$ and $(x',z')$ such that $f'(x, z) \neq f'(x', z')$. But then, we can pick $y, y' \in C_y$ such that $g(y) = z$ and $g(y') = z'$, and this results in $(x,y), (x',y') \in H$ such that $f(x,y) \neq f(x',y')$, a contradiction. 
            
            Finally, note that $\codim(H') = |\calB_x| = \codim(H) - |\calB_y| \leq \codim(H) - B_g$. In fact, we don't even need this to be true in order to fall into the second case of the proposition (since $H'$ does not depend on its last coordinate), but it is nonetheless true.

    \end{enumerate}
    
    \item \textbf{(Slightly) Harder Case: $|\calB_{x,y}| \neq 0$.}
    
        \begin{enumerate}[a)]
            \item \textbf{Subcase 1:} $g$ becomes a junta on $y_1,...,y_t$ when restricted to $C_y$. In this case, let $I = \{ i_1, ..., i_s\} \subseteq [t]$ be the junta variables, so that $g(y) = h(y_{i_1},...,y_{i_s})$ for all $y \in C_y$. Then we claim that $f'$ is constant on 
            $$H' = \bigcurly{(x,z) | x \in C_x, x_i = 0 \: \forall i \in [t] \setminus I,  z = h(x_{i_1} \oplus \sigma_{i_1}, ... x_{i_s} \oplus \sigma_{i_s})}.$$
            Indeed, suppose it is not, so that $f'(x, z) \neq f'(x', z')$. Take $y \in C_y$ such that $y_i = \sigma_i \: \forall i \in [t] \setminus I$, and $y_{j} = \sigma_j \oplus x_j \: \forall k \in I$. Then we have that $g(y) = h(y_{i_1}, ... ,y_{i_s}) = z$. Similarly, we can find $y' \in C_y$ such that $g(y') = z'$. We end up at a contradiction though, since $(x,y)$ and $(x', y')$ are both in $H$, but are such that $f(x,y) \neq f(x',y')$.
            
            Finally, note that the codimension of $H'$ is exactly $|\calB_x| + |\calB_{x,y}| - s + 1 = \codim(H) - (|\calB_y| + s) + 1$. Next, we claim that $|\calB| + s \geq \log \paritykill{g}$. To see why this is the case, note that we can fix $g$ by fixing at most $|\calB_y| + 2^s$ parities/variables. This implies that $|\calB_y| + 2^s \geq \paritykill{g}$ which implies that $|\calB_y| + s \geq \log \paritykill{g}$. Thus, we have that
            $$\codim(H') \leq \codim(H) - B_g,$$
            as desired.
            
            \item \textbf{Subcase 2:} There exists some $b_1,...,b_t$ such that $g(y)$ is \emph{not} constant on 
            $$C_y' := \bigcurly{y \left | \substack{y \in C_y \\ y_i = b_i \: \: \forall \: 1 \leq i \leq t} \right .}.$$
            
            In this case, let 
            $$H' = \bigcurly{(x,z)\:  |\:  x \in C_x, \: \: x_j = b_j \oplus \sigma_j \:\:  \forall \: 1 \leq j \leq t}.$$
            
            First, we claim that $f'$ is constant on $H'$. As before, suppose it is not, so that $f'(x, z) \neq f'(x', z')$. Then by definition, there exists $y,y'$ such that $y_i = y_i' = b_i$ for all $i \in [t]$, such that $g(y) = z$ and $g(y') = z'$. In this case, $(x,y)$ and $(x', y')$ are both in $H$, but are such that $f(x,y) \neq f(x', y')$, a contradiction. 
            
            Finally, note that $\codim(H') \leq \codim(H)$, but that $H'$ is independent of its last coordinate $z$, so that we fall into the second case of \Cref{prop:main-pkc-prop}.
        \end{enumerate}
\end{enumerate}
\end{proof}

\section{Omitted Proofs}
\label{appendix:omitted}


\subsection{Proofs of \Cref{claim:ghz-partition} and \Cref{claim:trivial-algorithm}}

In this section we provide the proofs of \Cref{claim:ghz-partition} and \Cref{claim:trivial-algorithm}. We first begin with a corollary of \Cref{fact:Fourier-coeff-of f_U-as-a-sum-of-fourier-coeff-of-f-from-vperp} which will be useful in the analysis of the claims.    
    
\begin{corollary}
\label{cor:V-dim-n-1}
When $\calV$ has dimension $n-1$, this corresponds to fixing a single parity $\sum_{i: \gamma_i = 1}x_i$ to $b \in \{0,1\}$. Then $\calV^\perp$ is simply $\spn(\{\gamma\})$ and $\alpha$ is any vector such that $\inner{\gamma, \alpha} = b$. 

Then for all $\gamma' \neq \gamma$ we have by \Cref{fact:Fourier-coeff-of f_U-as-a-sum-of-fourier-coeff-of-f-from-vperp} that 
$$\hat{f_{\alpha + \calV}}(\chi_{\gamma'}) = (-1)^{\inner{\gamma', \alpha}}\hat{f}(\gamma') + (-1)^{\inner{\gamma + \gamma', \alpha}} = (-1)^{\inner{\gamma', \alpha}}\left (\hat{f}(\gamma') + (-1)^b\cdot \hat{f}(\gamma + \gamma')\right ).$$
In particular, there exists a choice of $b$ such that
$$\left |\hat{f_{\alpha + \calV}}(\chi_\mathbf{0})\right | =\left | \hat{f}(\mathbf{0}) + \hat{f}(\gamma)\right | .$$
\end{corollary}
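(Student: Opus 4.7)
The plan is to derive this as a direct application of Proposition \ref{fact:Fourier-coeff-of f_U-as-a-sum-of-fourier-coeff-of-f-from-vperp} to the very special case of a codimension-one subspace. In this setting $\calV^\perp$ is the one-dimensional space $\{0,\gamma\}$, so the sum on the right-hand side of that proposition collapses to just two terms. Concretely, I would first fix any $\calW$ with $\calW\oplus\calV^\perp=\F_2^n$ and observe that for $\gamma'\in\calW$ with $\gamma'\neq \gamma$, the coset $\gamma'+\calV^\perp=\{\gamma',\,\gamma'+\gamma\}$, and $\alpha$ is any vector satisfying $\inner{\gamma,\alpha}=b$.

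Next, I would plug this into Proposition \ref{fact:Fourier-coeff-of f_U-as-a-sum-of-fourier-coeff-of-f-from-vperp} to obtain
\[
\hat{f_{\alpha+\calV}}(\gamma') \;=\; \hat{f}(\gamma')(-1)^{\inner{\gamma',\alpha}} + \hat{f}(\gamma'+\gamma)(-1)^{\inner{\gamma'+\gamma,\alpha}}.
\]
To get the factored form, I would apply bilinearity of the inner product to rewrite $\inner{\gamma'+\gamma,\alpha}=\inner{\gamma',\alpha}+\inner{\gamma,\alpha}=\inner{\gamma',\alpha}+b$, and then pull out the common factor $(-1)^{\inner{\gamma',\alpha}}$, producing exactly
\[
\hat{f_{\alpha+\calV}}(\gamma') = (-1)^{\inner{\gamma',\alpha}}\bigparen{\hat{f}(\gamma') + (-1)^b\,\hat{f}(\gamma+\gamma')}.
\]

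For the final assertion about $\gamma'=\mathbf{0}$, I would simply specialize the formula to $\gamma'=\mathbf{0}$ (noting that $\inner{\mathbf{0},\alpha}=0$) to get $\hat{f_{\alpha+\calV}}(\chi_{\mathbf{0}}) = \hat{f}(\mathbf{0}) + (-1)^b\hat{f}(\gamma)$. Since $\alpha$ can be chosen freely up to the constraint $\inner{\gamma,\alpha}=b$, both choices of $b\in\{0,1\}$ are available; picking $b=0$ yields the stated equality, and more broadly one can select the $b$ that aligns the sign of $(-1)^b\hat{f}(\gamma)$ with $\hat{f}(\mathbf{0})$ to maximize $|\hat{f_{\alpha+\calV}}(\chi_{\mathbf{0}})|$, which is how the corollary ends up being used (e.g., to argue that a parity fix increases the bias).

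There is essentially no obstacle here: the entire content is algebraic manipulation of a two-term sum provided by the preceding proposition, together with the additivity $\inner{\gamma+\gamma',\alpha}=\inner{\gamma,\alpha}+\inner{\gamma',\alpha}$ over $\F_2$. The only mild care needed is verifying that $\calV^\perp=\spn(\{\gamma\})$ in the codimension-one case (immediate from dimension count and the fact that $\gamma\in\calV^\perp$) and that the choice of representative $\alpha$ does not affect the stated magnitudes, which follows from \Cref{fact:coefficients-as-expectation}.
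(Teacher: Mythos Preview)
Your proposal is correct and follows exactly the argument the paper intends: the corollary is stated in the paper as an immediate consequence of \Cref{fact:Fourier-coeff-of f_U-as-a-sum-of-fourier-coeff-of-f-from-vperp} (with no separate proof given), and your derivation---reducing the coset sum to two terms, factoring via bilinearity of the inner product, and specializing to $\gamma'=\mathbf{0}$ with $b=0$---is precisely that implicit argument made explicit.
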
    
    
\begin{proof}[Proof of \Cref{claim:ghz-partition}]
    Given some $f: \F_2^n \to [-1,1]$, consider the following simple procedure:
    \begin{itemize}
        \item While at least $\delta$ fraction of $\pi \in \Pi$ have some $\gamma_\pi$ such that $|\hat{f_\pi}(\gamma_\pi)| > \delta$, further partition each $\pi$ into $\pi \cap \{x : \inner{\gamma_\pi, x} = 0\}$ and $\pi \cap \{x : \inner{\gamma_\pi, x} = 1\}$. 
    \end{itemize}
    We would like to show that we cannot perform the above partitioning action more that $\frac{1}{\delta^3}$ times. 
    Towards this end, define the potential function $\Phi(\Pi) := \E_{\pi \in \Pi} \hat{f}_\pi(0)^2 = \E_{\pi\in\Pi}[(\E f_\pi)^2] \in [0,1]$. 
    Whenever we partition further, by \Cref{cor:V-dim-n-1} each $|\hat{f}_\pi(0)|$ is updated to either $|\hat{f}_\pi(0) + \hat{f}_\pi(\gamma_\pi)|$ or $|\hat{f}_\pi(0) - \hat{f}_\pi(\gamma_\pi)|$.
     Therefore, the contribution of $\pi$ to $\Phi$ in one step of the partitioning process is
     $$\frac{1}{2}\bigparen{(\hat{f}_\pi(0) + \hat{f}_\pi(\gamma_\pi))^2 + (\hat{f}_\pi(0) - \hat{f}_\pi(\gamma_\pi))^2} - \hat{f}_\pi(0)^2 = \hat{f}_\pi(\gamma_\pi)^2.$$
     
     Since we assume at least $\delta$ fraction of $\pi \in \Pi$ had some $\gamma_\pi$ such that $|\hat{f_\pi}(\gamma_\pi)| > \delta$, at each step of the refinement $\Phi$ must increase by at least $\delta^3$, completing the proof.
\end{proof}

\begin{proof}[Proof of \Cref{claim:trivial-algorithm}]
    Suppose without loss of generality, $\E f \geq 0$. Start with the trivial subspace, $\pi_0 = \F_2^n$. While there exists $\gamma$ such that $|\hat f_{\pi_t}(\gamma)| > \delta$, by \Cref{cor:V-dim-n-1} we can fix the parity corresponding to $\gamma$ in such a way that ensures that $|\hat f_{\pi_{t+1}}(0)| = \left |\hat{f}_{\pi_t}(0) + |\hat f_{\pi_t}(\gamma)|\right | > \hat{f_{\pi_t}}(\gamma) + \delta$. Since $\hat{f}_\pi(0) \leq 1$ for all $\pi$, this process can happen at most $\frac{1}{\delta}$ times.
\end{proof}

\subsection{\Cref{claim:maj-fourier-coeff}}

\begin{proof}[Proof of \Cref{claim:maj-fourier-coeff}]
Theorem 5.19 in \cite{AOBFbook} gives the following formula for the Fourier coefficients of the Majority function:
$$\bigabs{\hat{\MAJ_n}(\gamma)} =  \frac{\binom{\frac{n-1}{2}}{\frac{t-1}{2}}}{\binom{n-1}{t-1}}\cdot \frac{2}{2^n}\binom{n-1}{\frac{n-1}{2}},$$

which holds for all $\gamma$ such that $\norm{\gamma}_1 = t$ is odd. Otherwise, $\hat{\MAJ_n}(\gamma) = 0$.
By the above equation, we have that

\begin{align*}
    \frac{\hat{\MAJ_n}(\gamma)}{\hat{\MAJ_n}(e_1)} &= \frac{\binom{\frac{n-1}{2}}{\frac{t-1}{2}}}{\binom{n-1}{t-1}} \\
    &= \frac{\bigparen{\frac{n-1}{2}}! \cdot (t-1)!\cdot (n-t)!}{\bigparen{\frac{t-1}{2}}!\cdot \bigparen{\frac{n-t}{2}}! \cdot(n-1)!}\\
    &=\frac{(t-2)!! \cdot (n-t-1)!!}{(n-2)!!} \\
    &= \frac{(t-2)\cdot (t-4) \cdots 1}{(n-2) \cdot (n-4) \cdots (n-t+1)} \\
    &\leq \bigparen{\frac{t}{n}}^{\frac{t-1}{2}}.  \qedhere
\end{align*}
\end{proof}

\subsection{\Cref{claim:low-weight-vecs-in-small-subspace}}

\begin{proof}[Proof of \Cref{claim:low-weight-vecs-in-small-subspace}]
First, consider
$$\Tilde{\calU} = \begin{cases}
\calU & \text{ if } t\leq n/2 \\
\mathbf{1} + \calU & \text{ if }t>n/2.
\end{cases}$$
Note that $\dim(\Tilde{\calU})\leq \dim(\calV) + 1$. Moreover, note that $|\calU^{=t}| = |\Tilde{\calU}^{=t^*}|$.
Using Gaussian elimination, we can find a basis find a basis $b_1,\ldots,b_k$ for $\calU$ such that $N(b_1)< N(b_2)<\ldots<N(b_k)$, where $k = \dim(\calU)$ and $N(b) := \min_i \{ i : b_i \neq 0\}$. Moreover (again via Gaussian elimination), we can ensure that $b_i$ is the only basis vector with a $1$ in entry $N(b_i)$. Therefore, any vector in $\calU$ involving more than $t^*$ basis vectors must have more than $t^*$ nonzero entries. Therefore, we have that
$$|\calU^{=t}| \leq |\calU^{\leq t}| = |\Tilde{\calU}^{\leq t^*}| \leq \sum_i^{t^*}\binom{\dim(\Tilde{\calU})}{i} \leq \binom{\dim(\calV) + 1}{\leq t^*}. \qedhere $$
\end{proof}

\subsection{\Cref{lemma:low-weight-vecs-in-small-subspace}}

\begin{proof}[Proof of~\Cref{lemma:low-weight-vecs-in-small-subspace}]
We will prove the statement by induction on $\ell$. Setting $S_1$ to be $\{\gamma_1,\ldots,\gamma_{n-2C}\}$ guaranteed by \Cref{claim:std-basis-vecs-in-small-subspace} such that $|(\gamma + \calV^\perp)^{=1}|=1$ for all $\gamma \in S_1$ proves the base case when $\ell = 1$. 

Now suppose we have some $S_{\ell-1}$ that satisfies the conditions in the lemma. 
We will pick $S_\ell \subseteq S_{\ell-1}$ that satisfies condition (2) for $t= \ell$, and argue that the number that do not satisfy the condition is at most $C$. 
Indeed, suppose towards a contradiction that $|S_{\ell-1}\setminus S_\ell|\geq C+1$. Let $J\subseteq S_{\ell-1}\setminus S_\ell$ be any subset of size $C+1$ and $H := \bigcup_{\gamma \in J}(\gamma + \calV^\perp)^{=\ell} $. 
Since $S_\ell \subseteq \calW$, we can say by \Cref{fact:cosets-unique} that the sets $(\gamma+\calV^{\perp})_{\gamma\in S_{\ell}}$ are all mutually disjoint and therefore, 
$$|H| = \sum_{\gamma \in J} |(\gamma + \calV^\perp)^{=\ell}| >  2(C+1)\binom{2C+1}{\ell-1}= \ell\binom{2C+2}{\ell}.$$

However, $H \subseteq (\spn(\{\gamma: \gamma \in J\}\cup V^{\perp} ))^{=\ell}$. Since, $\dim(\spn(\{\gamma: \gamma \in J\}\cup V^{\perp} )) \leq |J| + C = 2C+1$, by 
\Cref{claim:low-weight-vecs-in-small-subspace} it must be that $|H| \leq \binom{2C+2}{\leq \ell} \leq \ell \binom{2C+2}{\ell}$, where the inequality holds for all $\ell \leq (2C+2)/2 = C+1$. This is a contradiction, and we conclude that $|S_{\ell-1}\setminus S_\ell| \leq C$.
\end{proof}

\end{document}